\let\MR\relax
\crefname{item}{Item}{Item}
\newcounter{qst}
\crefname{qst}{Question}{Questions}
\newcommand{\declarecolor}[2]{\definecolor{#1}{RGB}{#2}\expandafter\newcommand\csname #1\endcsname[1]{\textcolor{#1}{##1}}}
\definecolor{mydarkblue}{rgb}{0,0.08,0.45}
\patchcmd\algocf@Vline{\vrule}{\vrule \kern-0.4pt}{}{}
\patchcmd\algocf@Vsline{\vrule}{\vrule \kern-0.4pt}{}{}
\let\cref@old@stepcounter\stepcounter
\def\stepcounter#1{%
  \cref@old@stepcounter{#1}%
  \cref@constructprefix{#1}{\cref@result}%
  \@ifundefined{cref@#1@alias}%
    {\def\@tempa{#1}}%
    {\def\@tempa{\csname cref@#1@alias\endcsname}}%
  \protected@edef\cref@currentlabel{%
    [\@tempa][\arabic{#1}][\cref@result]%
    \csname p@#1\endcsname\csname the#1\endcsname}}
\theoremstyle{plain}
\newtheorem{theorem}{Theorem}[section]
\newtheorem*{theorem*}{Theorem}
\newtheorem{lemma}[theorem]{Lemma}
\newtheorem{corollary}[theorem]{Corollary}
\newtheorem{proposition}[theorem]{Proposition}
\newtheorem{conjecture}[theorem]{Conjecture}
\theoremstyle{definition}
\newtheorem{definition}[theorem]{Definition}
\theoremstyle{remark}
\newtheorem{remark}[theorem]{Remark}
\let\E\relax
\newcommand{\EO}{\mathsf{EO}}
\newcommand{\numone}{\mathsf{1}}
\newcommand{\numM}{\mathsf{M}}
\newcommand{\cG}{\mathcal{G}}
\newcommand{\vx}{\vec{x}}
\newcommand{\vq}{{\vec{q}}}
\newcommand{\vs}{\vec{s}}
\newcommand{\vy}{\vec{y}}
\renewcommand{\vu}{\vec{u}}
\newcommand*{\N}{{\mathbb{N}}}
\newcommand*{\Z}{{\mathbb{Z}}}
\let\R\relax
\newcommand*{\R}{{\mathbb{R}}}
\newcommand*{\E}{\operatornamewithlimits{\mathbb E}}
\newcommand*{\eps}{\epsilon}
\newcommand*{\cX}{{\mathcal{X}}}
\newcommand*{\cR}{{\mathcal{R}}}
\newcommand*{\cA}{{\mathcal{A}}}
\newcommand{\defeq}{\coloneqq}
\newcommand{\tilO}{\tilde{O}}
\newcommand{\tilOmega}{\tilde{\Omega}}
\newcommand{\gcircuit}{\textsc{GCircuit}}
\let\polylog\relax
\let\co\relax
\newcommand{\widebar}{\bar}
\newcommand{\reg}{\mathrm{\overline{Reg}}}
\mathchardef\mathhyphen="2D
\DeclareMathOperator{\co}{conv}
\DeclareMathOperator{\polylog}{polylog}
\newcommand{\diam}{D}
\NewDocumentCommand{\treeset}{o}{\mathbb{T}\IfNoValueF{#1}{_{#1}}}
\newcommand{\obsut}{\textsc{ObserveUtility}}
\newcommand{\nextstr}{\textsc{NextStrategy}}
\newcommand{\mwu}{\texttt{MWU}}
\newcommand{\omwu}{\texttt{OMWU}}
\DeclarePairedDelimiterX{\infdivx}[2]{(}{)}{%
  #1\;\delimsize\|\;#2%
}
\newcommand{\range}[1]{[#1]}
\renewcommand{\vec}[1]{\bm{#1}}
\newcommand{\mat}[1]{\mathbf{#1}}
\DeclarePairedDelimiterX{\card}[1]{\lvert}{\rvert}{#1}
\DeclarePairedDelimiterX{\tuple}[1]{\lparen}{\rparen}{#1}
\DeclarePairedDelimiterX{\parens}[1]{\lparen}{\rparen}{#1}
\DeclarePairedDelimiterX{\brackets}[1]{\lbrack}{\rbrack}{#1}
\DeclarePairedDelimiterX{\set}[1]\{\}{#1}
\let\Pr\relax
\DeclarePairedDelimiterXPP{\Pr}[1]{\mathbb{P}}[]{}{#1}
\DeclarePairedDelimiterXPP{\PrX}[2]{\mathbb{P}_{#1}}[]{}{#2}
\DeclarePairedDelimiterXPP{\Ex}[1]{\mathbb{E}}[]{}{#1}
\DeclarePairedDelimiterXPP{\ExX}[2]{\mathbb{E}_{#1}}[]{}{#2}
\tikzset{
  fitting node/.style={
    inner sep=0pt,
    fill=none,
    draw=none,
    reset transform,
    fit={(\pgf@pathminx,\pgf@pathminy) (\pgf@pathmaxx,\pgf@pathmaxy)}
  },
  reset transform/.code={\pgftransformreset}
}
\tikzset{cross/.style={path picture={
  \draw[black]
(path picture bounding box.south east) -- (path picture bounding box.north west) (path picture bounding box.south west) -- (path picture bounding box.north east);
}}}
\tikzstyle{ox}=[semithick,draw=black,circle,cross,inner sep=1.2mm]
\newcommand{\nc}{\newcommand}
\nc\io[1]{\ifnum\Comments=1 {\textcolor{purple}{[ioannis: #1]}}\fi}
\nc\br[1]{\ifnum\Comments=1 {\textcolor{teal}{[brian: #1]}}\fi}
\nc{\Opthedge}{OMWU\xspace}
\nc{\DMO}{\DeclareMathOperator}
\nc\old[1]{\textcolor{brown}{[old: #1]}}
\nc{\BR}{\mathbb{R}}
\nc{\BC}{\mathbb{C}}
\DMO{\Bin}{Bin}
\nc{\BN}{\mathbb{N}}
\nc{\distrs}[1]{\Delta({#1})}
\nc{\BZ}{\mathbb{Z}}
\nc{\ep}{\epsilon}
\nc{\ra}{\rightarrow}
\nc{\st}{\star}
\nc{\Reg}[2]{\REG_{{#1},{#2}}}
\nc{\til}{\tilde}
\nc{\kld}[2]{\KL({#1};{#2})}
\nc{\chisq}[2]{\chi^2({#1};{#2})}
\DMO{\POLYLOG}{polylog}
\nc{\matx}[1]{\left(\begin{matrix}#1\end{matrix}\right)}
\DMO{\VAR}{Var}
\DMO{\COV}{Cov}
\nc{\Var}[2]{\VAR_{{#1}}\left({#2}\right)}
\nc{\Cov}[3]{\COV_{{#1}}\left({#2},{#3}\right)}
\DMO{\DD}{D}
\nc{\fd}[2]{\DD_{#1}{#2}}
\nc{\fds}[3]{\left(\fd{#1}{#2}\right)\^{#3}}
\nc{\fdc}[2]{\DD^\circ_{#1}{#2}}
\nc{\fdcs}[3]{\left(\fdc{#1}{#2}\right)\^{#3}}
\nc{\shf}[2]{\EEE_{#1}{#2}}
\nc{\shfs}[3]{\left(\shf{#1}{#2}\right)\^{#3}}
\nc{\normst}[2]{\left\| {#2} \right\|_{#1}^\st}
\renewcommand{\^}[1]{^{(#1)}}
\nc{\lng}{\langle}
\nc{\rng}{\rangle}
\nc{\bbone}{\mathbf{1}}
\nc{\bbzero}{\mathbf{0}}
\nc{\MD}{\mathcal{D}}
\nc{\MM}{\mathcal{M}}
\nc{\MZ}{\mathcal{Z}}
\nc{\MU}{\mathcal{U}}
\nc{\MR}{\mathcal{R}}
\nc{\MC}{\mathcal{C}}
\nc{\MT}{\mathbb{T}^{n}}
\nc{\MS}{\mathcal{S}}
\nc{\MX}{\mathcal{X}}
\nc{\MY}{\mathcal{Y}}
\nc{\MB}{\mathcal{B}}
\nc{\MJ}{\mathcal{J}}
\nc{\MF}{\mathcal{F}}
\nc{\MG}{\mathcal{G}}
\nc{\ML}{\mathcal{L}}
\nc{\MQ}{\mathcal{Q}}
\nc{\ba}{\mathbf{A}}
\nc{\bx}{\mathbf{x}}
\nc{\by}{\mathbf{y}}
\nc{\bz}{\mathbf{z}}
\nc{\bs}{\mathbf{s}}
\nc{\bt}{\mathbf{t}}
\nc{\ME}{\mathcal{E}}
\DMO{\View}{View}
\DMO{\KL}{KL}
\nc{\MW}{\mathcal{W}}
\nc{\CS}{\mathscr{S}}
\nc{\CI}{\mathscr{I}}
\nc{\CQ}{\mathscr{Q}}
\nc{\CL}{\mathscr{L}}
\nc{\CM}{\mathscr{M}}
\nc{\CG}{\mathscr{G}}
\nc{\CR}{\mathscr{R}}
\nc{\wh}{\widehat}
\nc{\BM}{BM\xspace}
\nc{\ALG}{\texttt{ALG}}
\nc{\MCT}{{\rm MCT}}
\nc{\matrixLL}{\mat{L}}
\nc{\vectorecks}{\vec{x}}
\nc{\vectorLL}{\vec{\ell}}
\nc{\matrixKYU}{\mat{Q}}
\nc{\rowdot}{\cdot}
\nc{\X}{\mathcal{X}}
\nc{\Y}{\mathcal{Y}}
\let\op\operatorname
\let\ip\ev
\newcommand{\ie}{{\em i.e.}\xspace}
\let\mc\mathcal
\let\Root\varnothing
\renewcommand{\S}{\mathcal{S}}
\renewcommand{\D}{\mathcal{D}}
\newcommand{\delimit}[3]{\newcommand{#1}[1]{\left#2##1\right#3}}
\newcommand{\id}{\op{id}}
\definecolor{p0color}{RGB}{0,0,0}
\definecolor{p1color}{RGB}{31,119,180}
\definecolor{p2color}{RGB}{214,39,40}
\title{Efficient $\Phi$-Regret Minimization with Low-Degree Swap Deviations in Extensive-Form Games}
\author[1]{Brian Zu Zhang}
\author[2]{Ioannis Anagnostides}
\author[3]{Gabriele Farina}
\author[4]{Tuomas Sandholm}
\affil[1,2,4]{Carnegie Mellon University}
\affil[3]{MIT}
\affil[4]{Strategy Robot, Inc.}
\affil[4]{Strategic Machine, Inc.}
\affil[4]{Optimized Markets, Inc.}
\affil[ ]{\texttt{\{bhzhang,ianagnos,sandholm\}}\texttt{@cs.cmu.edu}, \texttt{\{gfarina \}}\texttt{@mit.edu}}
\begin{document}

\maketitle

\begin{abstract}
Recent breakthrough results by Dagan, Daskalakis, Fishelson and Golowich [2023] and Peng and Rubinstein [2023] established an efficient algorithm attaining at most $\epsilon$ \emph{swap regret} over extensive-form strategy spaces of dimension $N$ in $N^{\tilde O(1/\epsilon)}$ rounds. On the other extreme, Farina and Pipis [2023] developed an efficient algorithm for minimizing the weaker notion of \emph{linear-swap} regret in $\mathsf{poly}(N)/\epsilon^2$ rounds. In this paper, we develop efficient parameterized algorithms for regimes between these two extremes. We introduce the set of {\em $k$-mediator deviations}, which generalize the {\em untimed communication deviations} recently introduced by Zhang, Farina and Sandholm [2024] to the case of having multiple mediators, and we develop algorithms for minimizing the regret with respect to this set of deviations in $N^{O(k)}/\epsilon^2$ rounds. Moreover, by relating $k$-mediator deviations to low-degree polynomials, we show that regret minimization against degree-$k$ polynomial swap deviations is achievable in $N^{O(kd)^3}/\epsilon^2$ rounds, where $d$ is the depth of the game, assuming a constant branching factor. For a fixed degree $k$, this is polynomial for Bayesian games and quasipolynomial more broadly when $d = \mathsf{polylog} N$---the usual balancedness assumption on the game tree. The first key ingredient in our approach is a relaxation of the usual notion of a fixed point required in the framework of Gordon, Greenwald and Marks [2008]. Namely, for a given deviation $\phi$, we show that it suffices to compute what we refer to as a \emph{fixed point in expectation}; that is, a distribution $\pi$ such that $\mathbb{E}_{\bm{x} \sim \pi} [\phi(\bm{x}) - \bm{x}] \approx 0$. Unlike the problem of computing an actual (approximate) fixed point $\bm{x} \approx \phi(\bm{x})$, which we show is \PPAD-hard, there is a simple and efficient algorithm for finding a solution that satisfies our relaxed notion. Our second main contribution is a characterization of the set of low-degree deviations, made possible through a connection to low-depth decisions trees from Boolean analysis.
\end{abstract}
\pagenumbering{gobble}

\clearpage
\tableofcontents

\clearpage

\pagenumbering{arabic}

\section{Introduction}
\label{sec:intro}

\emph{Correlated equilibrium (CE)}, introduced in a groundbreaking work by~\citet{Aumann74:Subjectivity}, has emerged as one of the most influential solution concepts in game theory. Often contrasted with \emph{Nash equilibrium}~\citep{Nash50:Equilibrium}, it is regarded by many as more natural; in the words attributed to another Nobel laureate, Roger Myerson, ``if there is intelligent life on other planets, in a majority of them, they would have discovered correlated equilibrium before Nash equilibrium.'' Correlated equilibria also enjoy more favorable computational properties: unlike Nash equilibria, they can be expressed as solutions to a linear program, thereby enabling their computation in polynomial time, at least in \emph{normal-form} games~\citep{Papadimitriou08:Computing,Jiang11:Polynomial}. Further, a correlated equilibrium arises through repeated play from natural \emph{no-regret} learning dynamics~\citep{Hart00:Simple,Foster97:Calibrated}.

However, many real-world strategic interactions feature sequential moves and imperfect information. In such scenarios, the so-called \emph{extensive form} constitutes the canonical game representation~\citep{Kuhn53:Extensive,Shoham09:Multiagent}: a normal-form description of the game would be prohibitively large. It is startling to realize that 50 years after Aumann's original work, the complexity of computing correlated equilibria in extensive-form games---sometimes referred to as \emph{normal-form correlated equilibria (NFCE)} to disambiguate from other pertinent but weaker solution concepts---remains an outstanding open problem~\citep{Stengel08:Extensive,Papadimitriou08:Computing}.\looseness-1

The long-standing absence of efficient algorithms for computing an NFCE shifted the focus to natural relaxations thereof, which can be understood through the notion of \emph{$\Phi$-regret}~\citep{Greenwald03:Correlated,Stoltz07:Learning,Rakhlin11:Online}. In particular, $\Phi$ represents a set of strategy deviations; the richer the set of deviations, the stronger the induced solution concept. When $\Phi$ contains all possible transformations, one recovers the notion of NFCE---corresponding to \emph{swap regret}. At the other end of the spectrum, \emph{coarse correlated equilibria} correspond to $\Phi$ consisting solely of constant transformations (aka. \emph{external regret}). Perhaps the most notable relaxation is the \emph{extensive-form correlated equilibrium (EFCE)}~\citep{Stengel08:Extensive}, which can be computed exactly in time polynomial in the representation of the game tree~\citep{Huang08:Computing}. Considerable interest in the literature has recently been on \emph{learning dynamics} that minimize $\Phi$-regret (\emph{e.g.}, \citet{Morrill21:Efficient,Morrill21:Hindsight,Bai22:Efficient,Bernasconi23:Constrained,Noarov23:High,Dudik09:SamplingBased,Gordon08:No,Fujii23:Bayes,Dann23:Pseudonorm,Mansour22:Strategizing}). A key reference point in this line of work is the recent construction of~\citet{Farina23:Polynomial}, an efficient algorithm minimizing \emph{linear swap regret}---that is, the notion of $\Phi$-regret where $\Phi$ contains all \emph{linear} deviations. Such algorithms lead to an $\epsilon$-equilibrium in time polynomial in the game's description and $1/\epsilon$---aka. a fully polynomial-time approximation scheme ($\FPTAS$).

Yet, virtually nothing was known beyond those special cases until recent breakthrough results by~\citet{Dagan23:From} and \citet{Peng23:Fast}, who introduced a new approach for reducing swap regret to external regret; unlike earlier reductions~\citep{Gordon08:No,Blum07:From,Stoltz05:Internal}, their algorithm can be implemented efficiently even in certain settings with an exponential number of pure strategies. For extensive-form games, their reduction implies a polynomial-time approximation scheme (\PTAS) for computing an $\epsilon$-correlated equilibrium; their algorithm has complexity $N^{\tilde O(1/\eps)}$ for games of size $N$, which is polynomial only when $\epsilon$ is an absolute constant. Unfortunately, it was thereafter shown that in the usual regime of interest, where instead $\epsilon \leq \poly(1/N)$, an exponential number of rounds is inevitable even against an oblivious adversary~\citep{Anonymous24:Lower}. In light of that lower bound, our focus here is on developing algorithms attaining a better complexity bound of $\poly(N, 1/\eps)$---the typical guarantee one hopes for within the no-regret framework---by considering a more structured but rich class of deviations $\Phi$.\looseness-1
\section{Preliminaries}
\label{sec:prels}

Before we proceed by giving an overview of our results and technical contributions, we first introduce some basic background on tree-form decisions problems and $\Phi$-regret minimization.

\subsection{Tree-form decision problems} 

A {\em tree-form decision problem} describes a sequential interaction between a {\em player} and a (possibly adversarial) {\em environment}. There is a tree of {\em nodes}. The root is denoted $\Root$. We will use $s \in \S$ to denote a generic node, and $p_s$ (where $s \ne \Root$) to denote the parent of $s$. Leaves are called {\em terminal nodes}; a generic terminal node is denoted $z \in \mc Z$. Internal nodes can be one of three types: {\em decision points}, where the player plays an action, {\em observation points}, where the environment picks the next decision point. A generic decision point will be denoted $j$, and the set of actions at $j$ will be denoted $\cA_j$. The child node reached by following action $a \in \cA_j$ is denoted $ja$.  We will use $N$ to denote the number of terminal nodes. We will also assume without loss of generality that all decision points have branching factor at least 2, and that decision and observation points alternate. Thus, the total number of nodes in the tree is also $O(N)$. The {\em depth} of a decision problem is the largest number of decision points in any root-to-terminal-node path. An example of a tree-form decision problem is depicted below in~\Cref{fig:dp-example}.

\newcommand{\action}[3]{\textbf{\textsf{\color{p#1color}#2$_{\boldsymbol{#3}}$}}}
\begin{figure}[!ht]
\centering
\forestset{
default preamble={for tree={
            parent anchor=south, child anchor=north, anchor=center, s sep=30pt
    }},
dec/.style={draw, color=white, fill=black, inner sep=2pt, font=\sffamily\bfseries\small},
obs/.style={draw},
el/.style n args={3}{edge label={node[midway, fill=white, inner sep=1pt, draw=none, font=\sf\footnotesize] {\action{#1}{#2}{#3}}}},
}
    \begin{forest}
        [A,dec
            [,obs,el={0}{0}{}
                [B,dec
                    [$x_2$,el={0}{2}{}]
                    [$x_3$,el={0}{3}{}]
                ]
                [C,dec
                    [$x_4$,el={0}{4}{}]
                    [$x_5$,el={0}{5}{}]
                ]
            ]
            [$x_1$,el={0}{1}{}]
        ]
    \end{forest}
    \caption{An example of a tree-form decision problem. Decision points are black squares with white text labels; observataion points are white squares. Edges are labeled with action names, which are numbers. Pure strategies in this decision problem are identified with vectors $\vx = (x_1, x_2, x_3, x_4, x_5) \in \{0, 1\}^5$ satisfying $1-x_1=x_2+x_3=x_4+x_5$.}\label{fig:dp-example}
\end{figure}

A {\em pure strategy} consists of an assignment of one action $a_j \in \cA_j$ to each decision point $j$. The {\em tree-form representation} of the pure strategy is the vector $\vx \in \{0, 1\}^N$ where $\vx[z] = 1$ if and only if the player plays all the actions on the $\Root \to z$ path. Although $\vx$ is a vector indexed only by terminal nodes, we also overload notation to write $\vx[s]=1$ if and only if the player plays all actions on the $\Root \to s$ path (In other words, $\vx[s]=1$ if there exists some $z \succeq s$ with $\vx[z]=1$). Multiple pure strategies can have the same tree-form representation, but in this paper we will only concern ourselves with strategies in tree-form representation, and thus for our purposes such strategies will be treated as identical. We will use $\X \subseteq \{0, 1\}^N$ to denote the set of tree-form strategies, and sometimes (when context is clear) we will also use $\X$ to denote the tree-form decision problem itself.  For a point in the convex hull of $\X$, $\co \cX$, we also use the symbol $\vx \in \co \cX$. For \emph{mixed} strategies, we instead use $\pi \in \Delta(\X)$.  When it is relevant, we assume that utilities are rational numbers representable with $\poly(N)$ bits.

\subsection{Regret minimization}
\label{sec:reg-min}

In the framework of online learning, a learner interacts with an adversary over a sequence of rounds. In each round, the learner selects a strategy, whereupon the adversary constructs a utility function. Throughout this paper, we operate in the \emph{full feedback} setting, wherein the learner gets to observe the entire utility function produced by the adversary after each round. We allow the adversary to be \emph{strongly adaptive}, so that the (linear) utility function at the $t$th round $u^{(t)} : \cX \ni \vx \mapsto \langle \vu^{(t)}, \vx \rangle$ can depend on the strategy of the learner at that round; this is a standard assumption (\emph{cf.} the notion of \emph{leaky forecasts} in the context of calibration~\citep{Foster18:Smooth}) that will be used for our lower bound (\Cref{theorem:behavioral}). We assume that utilities belong to $\mc U := \{ \vu : \abs{\ip{\vu, \vx}} \le 1, \forall \vx \in \X \}$. It will be convenient to use $\norm{\vx}_\X := \max_{\vu \in \mc U} \ip{\vu, \vx}$ for the induced norm.

We measure the performance of an online learning algorithm as follows. Suppose that $\Phi \subseteq (\co \X)^{\X}$ is a set of deviations. If the learner outputs in each round a \emph{mixed strategy} $\pi^{(t)} \in \Delta(\cX)$, its (time-average) {\em $\Phi$-regret}~\citep{Greenwald03:Correlated,Stoltz07:Learning} is defined as
\begin{equation}
    \label{eq:det-phireg}
    \reg_\Phi^T \defeq \frac{1}{T} \max_{\phi \in \Phi} \sum_{t=1}^T \ip{\vec u^{(t)}, \E_{ \vx^{(t)} \sim \pi^{(t)}} [ \phi(\vx^{(t)}) - \vx^{(t)} ] }.
\end{equation}
In the special case where $\Phi$ contains only \emph{constant transformations}, one recovers the notion of \emph{external regret}. On the other extreme, \emph{swap regret} corresponds to $\Phi$ containing all functions $\cX \to \cX$.\looseness-1

It is sometimes assumed that the learner instead selects in each round a strategy $\vx^{(t)} \in \co \cX$. To translate~\eqref{eq:det-phireg} in that case, we introduce the \emph{extended mapping} of a deviation $\phi : \X \to \co \X$ as $\phi^\delta \defeq \E_{ \vx' \sim \delta(\vx)} [\phi(\vx')]$, where $\delta : \co \cX \to \Delta(\cX)$ is a function that is {\em consistent} in the sense that $\E_{ \vx' \sim \delta(\vx)} [\vx'] = \vx$. A canonical example of such a function $\delta$ is the {\em behavioral strategy map} $\beta : \co \X \to \Delta(\X)$, which returns the unique (ignoring actions at decision points reached with probability zero) mixed strategy whose actions at different decision points are independent and whose expectation is $\vx$. We give another example of a consistent map later in~\Cref{sec:consistent-map}. Accordingly, we let $\Phi^\delta$ denote all extended mappings. In this context, $\Phi^\delta$-regret is defined as
\begin{equation}
    \label{eq:Phi-regret}
    \reg_{\Phi^\delta}^T \defeq \frac{1}{T} \max_{\phi^\delta \in \Phi^\delta} \sum_{t=1}^T \ip{\vec u^{(t)}, \phi^\delta(\vx^{(t)}) - \vx^{(t)}}.
\end{equation}
We are interested in algorithms whose regret is bounded by $\epsilon$ after $T = \poly(N, 1/\eps)$ rounds. We refer to such algorithms as
 {\em fully polynomial no-regret learners}.

 \begin{remark}
 We clarify that all the algorithms we consider in this paper are \emph{deterministic}, even when we allow mixed strategies. The fact that~\eqref{eq:det-phireg} contains an expectation over $\vx^{(t)} \sim \pi^{(t)}$ is simply how $\Phi$-regret is defined; at no point does the algorithm actually sample from $\pi^{(t)}$. Using deterministic algorithms is in line with most of the prior work in the full feedback setting.
 \end{remark}
\section{Overview of our results}

In this section, we present an overview of our results on parameterized algorithms for minimizing $\Phi$-regret in extensive-form games. We shall first describe our results for the special case of Bayesian games with two actions per player, and we then treat general extensive-form games.

\subsection{Bayesian games}

For now, we assume that each player's strategy space is a hypercube $\{0, 1\}^N$. Hypercubes are linear transformations of tree-form decision problems; in particular, for Bayesian games in which each player has exactly two actions, the strategy space of every player is, up to linear transformations, a hypercube. Since our results are particularly clean for the hypercube case, we start with that. 

First, we introduce the set of \emph{depth-$k$ decision tree deviations $\Phi^k_{\mathrm{DT}}$}, which can be described as follows. For each of $k \in \N$ rounds, the deviator first elects a decision point and receives a recommendation, whereupon the deviator gets to decide which action to follow in that decision point. More formally, the set of deviations $\Phi^k_{\mathrm{DT}}$ is defined as follows:
\begin{enumerate}
    \item The deviator observes an index $j_0 \in \range{N}$.
    \item For $i = 1, \dots, k$: the deviator selects an index $j_i \in \range{N}$, and observes $\vx[j_i]$.
    \item The deviator selects $a_0 \in \{ 0, 1\}$.
\end{enumerate}

We call attention to the order of operations. In particular, each query $j$ is allowed to depend on previously observed $\vx[j]$s. We can assume (WLOG) that the deviator always chooses $k$ distinct indices $j$. Now, the set of deviations $\phi : \{0,1\}^N \to [0, 1]^N$ that can be expressed in the above manner is precisely the set of functions representable as (randomized) depth-$k$ decision trees on $N$ variables. To connect $\Phi^k_{\mathrm{DT}}$ with the concepts referred to earlier, we clarify that $k=1$ corresponds to linear-swap deviations, while $k=N$ captures all possible swap deviations. Our first result is a parameterized online algorithm minimizing regret with respect to deviations in $\Phi^k_{\mathrm{DT}}$. (All our results are in the full feedback model under a strongly adaptive adversary.)

\begin{theorem}
    There is an online algorithm incurring (average) $\Phi^k_{\mathrm{DT}}$-regret at most $\epsilon$ in $N^{O(k)}/\epsilon^2$ rounds with a per-round running time of $N^{O(k)}/\epsilon$.
\end{theorem}

Next, we consider the set $\Phi_\textup{poly}^k$ consisting of all \emph{degree-$k$} polynomials $\phi: \{0, 1\}^N \to \{0, 1\}^N$. %
Our result for this class of deviations mirrors the one for $\Phi^k_{\mathrm{DT}}$, but with a worse dependence on $k$.\looseness-1
\begin{theorem}
    There is an online algorithm incurring $\Phi_\textup{poly}^k$-regret at most $\epsilon$ in $N^{O(k^3)}/\epsilon^2$ rounds with a per-round running time of $N^{O(k^3)}/\epsilon$.
\end{theorem}
We find those results surprising; we originally surmised that even for quadratic polynomials ($k=2$) the underlying online problem would be hard in the regime where $\eps \leq \poly(1/N)$. We will elaborate on our technical approach for establishing those results in \Cref{sec:techncal} coming up.

\paragraph{Hardness in behavioral strategies} 

A salient aspect of the previous results, which was intentionally blurred above, is that the learner is allowed to output a \emph{mixed strategy}---a probability distribution over $\{0, 1\}^N$. In stark contrast, and perhaps surprisingly, when the learner is constrained to output \emph{behavioral} strategies, that is to say, points in $[0, 1]^N$, we show that the problem immediately becomes \PPAD-hard even for degree $k=2$ (\Cref{theorem:behavioral})---thereby being intractable under standard complexity assumptions. We are not aware of any such hardness results pertaining to a natural online learning problem, necessitating the use of mixed strategies.

The key connection behind our lower bound is an observation by~\citet{Hazan07:Computational}, which reveals that any $\Phi^\beta$-regret minimizer is inadvertedly able to compute approximate fixed points of any deviation in $\Phi^\beta$ (\Cref{prop:Hazan-fixedpoint}). Computing fixed points is in general a well-known (presumably) intractable problem, being \PPAD-hard. In our context, the set $\Phi^\beta$ does not contain arbitrary (Lipschitz continuous) functions $[0, 1]^N \to [0, 1]^N$, but instead contains multilinear functions from $[0, 1]^N$ to $[0, 1]^N$. To establish \PPAD-hardness for our problem, we start with a \emph{generalized circuit} (\Cref{def:gcircuit}), and we show that all gates can be approximately simulated using exclusively gates involving multilinear operations (\Cref{prop:multilinear-circuit}); we defer the formal argument to~\Cref{appendix-fphardness}. As a result, we arrive at the following hardness result.

\begin{restatable}{theorem}{behavioral}
    \label{theorem:behavioral}
    If a regret minimizer $\cR$ outputs strategies in $[0, 1]^N$, it is \PPAD-hard to guarantee $\reg_{\Phi^\beta} \leq \epsilon/\sqrt{N}$, even with respect to low-degree deviations and an absolute constant $\epsilon > 0$.
\end{restatable}

\subsection{Extensive-form games}

We next expand our scope to arbitrary extensive-form games. We will assume here that the branching factor $b$ of the game is 2---any game can be transformed as such by incurring a $\log b$ factor overhead in the depth $d$ of the game tree. Generalizing $\Phi^k_{\mathrm{DT}}$ described above, we introduce the set of \emph{$k$-mediator deviations} $\Phi_\textup{med}^k$. Informally, the player here has access to $k$ distinct mediators, which the player can query at any time; a formal definition is given in~\Cref{sec:techncal}. Once again, the case $k=1$ corresponds to linear-swap deviations. Further, if $\cX$ denotes the set of pure strategies, we let $\Phi_\textup{poly}^k$ denote the set of all degree-$k$ deviations $\cX \to \cX$. We establish similar parameterized results in extensive-form games, but which may now also depend on the depth of the game tree $d$.
\begin{theorem}
    \label{theorem:inf}
    There is an online algorithm incurring at most an $\epsilon$ $\Phi_\textup{poly}^k$ regret in $N^{O(kd)^3}/\epsilon^2$ rounds with a per-round running time of $N^{O(k d)^3}/\epsilon$. For $\Phi_\textup{med}^k$ both bounds instead scale as $N^{O(k)}$.
\end{theorem}

We recall that $N$ here denotes the dimension of the strategy space. We further clarify that parameter $k$ appearing in $\Phi_\textup{poly}^k$ is different than the $k$ in $\Phi_\textup{med}^k$: the former refers to the degree of a polynomial, while the latter is the number of mediators. As all $k$-mediator deviations are degree-$k$ polynomials (but not vice versa), it is to be expected that the bound in the theorem above concerning the former is worse. For a fixed degree $k$ and assuming that the game tree is \emph{balanced}, in the sense that $d = \polylog N$, \Cref{theorem:inf} guarantees a quasipolynomial complexity with respect to $\Phi_\textup{poly}^k$, even when $\epsilon$ is itself inversely quasipolynomial. The complexity we obtain for $\Phi_\textup{med}^k$ is more favorable, being polynomial for any extensive-form game.\footnote{The bounds of~\Cref{theorem:inf} when $k \gg 1$---and in particular in the special case of swap regret---are inferior to the ones obtained by~\citet{Dagan23:From} and~\citet{Peng23:Fast}. As we explain in more detail later in~\Cref{sec:conclusion}, bridging those gaps is an interesting open problem.} Finally, in light of the connection between no-regret learning and convergence to correlated equilibria, our results imply parameterized tractability of the equilibrium concepts induced by $\Phi_\textup{med}^k$ or $\Phi_\textup{poly}^k$ (see \Cref{sec:conv-ce} for a formal treatment).

\section{Technical contributions}
\label{sec:techncal}

From a technical standpoint, our starting point is the familiar template of~\citet{Gordon08:No} for minimizing $\Phi$-regret, which consists of two key components. Accordingly, we split our technical overview into two parts.

\subsection{Circumventing fixed points} The first key ingredient one requires in the framework of~\citet{Gordon08:No} is an algorithm for computing an approximate \emph{fixed point} of any function within the set of deviations. In particular, if $\cX$ is the set of pure strategies and $\co \cX$ is the convex hull of $\cX$, we now work with functions $\Phi^\delta \ni \phi^\delta : \co \cX \to \co \cX$, so that fixed points exist by virtue of Brouwer's theorem.\footnote{We recall that $\delta : \co \cX \to \Delta(\cX)$ is used to extend a map $\phi: \cX \to \co \cX$ to a map $\phi^\delta : \co \cX \to \co \cX$.} As we discussed earlier, this fixed point computation is---at least in some sense---inherent: \citet{Hazan07:Computational} observed that minimizing $\Phi^\delta$-regret is computationally equivalent to computing approximate fixed points of transformations in $\Phi^\delta$. Specifically, an efficient algorithm minimizing $\Phi^\delta$-regret---with respect to any sequence of utilities---can be used to compute an approximate fixed point of any transformation in $\Phi^\delta$ (\Cref{prop:Hazan-fixedpoint} in~\Cref{sec:hardness-fp}). Given that functions in $\Phi^\delta$ are generally nonlinear, this brings us to \PPAD-hard territory (\Cref{theorem:behavioral}), seemingly contradicting the recent positive results of~\citet{Dagan23:From} and \citet{Peng23:Fast}.

As we have alluded to, it turns out that there is a delicate precondition on the reduction of~\citet{Hazan07:Computational} that makes all the difference: computing approximate fixed points is only necessary if the learner outputs points on $\co \cX$. In stark contrast, a crucial observation that drives our approach is that a learner who selects a probability distribution over $\cX$ does \emph{not} have to compute (approximate) fixed points of functions in $\Phi$. Instead, we show that it is enough to determine what we refer to as an approximate fixed point \emph{in expectation}. More precisely, for a deviation $\Phi \ni \phi : \cX \to \co \cX$ with an efficient representation, it is enough to compute a distribution $\pi \in \Delta(\cX)$ such that $\E_{\vx \sim \pi} \phi(\vx) \approx \E_{\vx \sim \pi} \vx$. It is quite easy to compute an approximate fixed point in expectation: take any $\vx_1 \in \co \cX$, and consider the sequence $\vx_1, \dots, \vx_L \in \co \cX$ such that $\vx_{\ell+1} \defeq \E_{\vx_\ell' \sim \delta(\vx_\ell) } \phi(\vx_\ell')$ for all $\ell$, where $\delta : \co \cX \to \Delta(\cX)$ is a mapping such that $\E_{\vx' \sim \delta(\vx) } [\vx'] = \vx$.\footnote{For technical reasons, it is more convenient to work with functions with domain $\cX$, which is why we use a mapping $\delta$ to sample a point in $\cX$ before applying $\phi$.} Then, for $\pi \defeq \E_{\ell \in \range{L}} [\delta(\vx_\ell)]$, we have
\[
    \E_{\vx \sim \pi} [ \phi(\vx) - \vx ] = \frac{1}{L} \sum_{\ell=1}^L \E_{\vx_\ell' \sim \delta(\vx_\ell)} [\phi(\vx_{\ell}') - \vx_\ell'] = \frac{1}{L} \E_{\vx_L' \sim \delta(\vx_L) } [\phi(\vx'_L) - \vx_1] = O\left( \frac{1}{L} \right).
\]
This procedure can replace the fixed point oracle required by the template of~\citet{Gordon08:No}, which is prohibitive when $\Phi$ contains nonlinear functions, as we formalize in~\Cref{sec:circ-fp}.

\paragraph{Application to faster computation of correlated equilibria} In fact, even in normal-form games where considering linear deviations suffices, computing an exact fixed point is relatively expensive, amounting to solving a linear system, dominating the per-iteration complexity. Leveraging instead the above algorithm, we obtain the fastest running time for computing an approximate correlated equilibrium in the moderate-precision regime (\Cref{cor:faster-cor}). In particular, let us focus for simplicity on $n$-player normal-form games with a succinct representation. Here, each player $i \in \range{n}$ selects as strategy a probability distribution $\pi_i \in \Delta(\cA_i)$, where we recall that $\cA_i$ is a finite set of available actions. The expected utility of player $i$ is given by $u_i(\pi_1, \dots, \pi_n) \defeq \E_{a_1 \sim \pi_1, \dots, a_n \sim \pi_n} [u_i(a_1, \dots, a_n)]$, where $u_i : \cA_1 \times \dots \times \cA_n \to [-1, 1]$. We assume that there is an expectation oracle that computes the vector
\begin{equation}
    \label{eq:EO}
    (u_i(a_i, \pi_{-i} ))_{i \in \range{n}, a_i \in \cA_i}
\end{equation}
in time bounded by $\EO(n, A)$, where $A := \max_i |\cA_i|$; it is known that $\EO(n, A) \leq \poly(n, A)$ for most interesting classes of succinct classes of games~\citep{Papadimitriou08:Computing}. Using our framework, we arrive at the following result.

\begin{restatable}{corollary}{corfast}
    \label{cor:faster-cor}
    For any $n$-player game in normal form, there is an algorithm that computes an $\epsilon$-correlated equilibrium and runs in time
    \begin{align}
    O \left( \frac{A \log A }{\epsilon^2} \left( \EO(n, A) + n \frac{A^2}{\epsilon} \right) \right).
    \end{align}
\end{restatable}

Assuming that the oracle call to~\eqref{eq:EO} ($\EO(n, A)$) does not dominate the per-iteration running time---which is indeed the case in, for example, polymatrix games---\Cref{cor:faster-cor} gives (to our knowledge) the fastest algorithm for computing $\epsilon$-correlated equilibria in the moderate-precision regime $1/A^{ \frac{\omega}{2} - 1} \leq \epsilon \leq 1/\log A$, where $\omega \approx 2.37$ is the exponent of matrix multiplication~\citep{Williams23:New}; without fast matrix multiplication, which is widely impractical, the lower bound instead reads $\epsilon \geq 1/\sqrt{A}$. We provide a comparison with previous algorithms in~\Cref{table:CE-complexity} and defer the details to~\Cref{appendix:fastce}. Finally, we stress that similar improvements can be obtained beyond normal-form games using our template; indeed, virtually all prior $\Phi$-regret minimizers rely on some fixed point operation.

    \begin{table}[ht!]
    \caption{Time complexity for computing $\epsilon$-correlated equilibria in $n$-player normal-form games with $A$ actions per player. The second column suppresses absolute constants and polylogarithmic factors. For simplicity, issues related to bit complexity have been ignored (that is, we work in the RealRAM model of computation). %
    }
\centering
\begin{tabular}{ c  c }
  Reference & Time complexity \\
  \toprule
  Ours (\Cref{theorem:fp-exp}) & $\frac{A}{\epsilon^2} \left( \EO(n, A) + n \frac{A^2}{\epsilon} \right)$ \\ 
  \citep{Anagnostides21:Near,Daskalakis21:Near} & $\frac{A}{\epsilon} \left( \EO(n, A) + n A^\omega \right)$ \\ 
  \citep{Dagan23:From,Peng23:Fast} & $n A \log^{1/\epsilon}(n A) $ \\ 
  \citep{Papadimitriou08:Computing} & $(n A)^c \EO(n, A) $ for $c \gg 1$ \\ 
  \citep{Huang23:Near} & $ \frac{A^2}{\epsilon^2} ( n A^\omega) $ \\
  \bottomrule
\end{tabular}
\label{table:CE-complexity}
\end{table}

Before moving on, it is worth stressing that the discrepancy that has arisen between operating over $\Delta(\cX)$ versus $\co \cX$ is quite singular when it comes to regret minimization in extensive-form games and beyond. Kuhn's theorem~\citep{Kuhn53:Extensive} is often invoked to argue about their equivalence, but in our setting it is the nonlinear nature of deviations in $\Phi$ that invalidates that equivalence.\footnote{Kuhn's theorem is also invalidated in extensive-form games with \emph{imperfect recall}~\citep{Piccione97:Interpretation,Tewolde23:Computational,Lambert19:Equilibria}, in which there is also a genuine difference between mixed and behavioral strategies. In such settings, however, it is \NP-hard to even minimize external regret.} To tie up the loose ends, we adapt the reduction of~\citet{Hazan07:Computational} to show that minimizing $\Phi$-regret over $\Delta(\cX)$ necessitates computing approximate fixed points in expectation (\Cref{prop:expected-fixedpoint}), and we observe that the reductions of~\citet{Dagan23:From} and \citet{Peng23:Fast} are indeed compatible with computing approximate fixed points in expectation; the latter observation is made precise in \Cref{sec:recentswap}.

\subsection{Regret minimization over the set of deviations \texorpdfstring{$\Phi$}{Phi}}

The second ingredient prescribed by~\citet{Gordon08:No} is an algorithm minimizing \emph{external regret} but with respect to the \emph{set of deviations} $\Phi$. The crux in this second step lies in the fact that, even in normal-form games, $\Phi$ contains at least an exponential number of deviations, so black-box reductions are of little use here. Instead, the problem boils down to appropriately leveraging the combinatorial structure of $\Phi$, as we explain below.

We will first describe our approach when $\cX = \{0, 1\}^N$, and we then proceed with the more technical generalization to extensive-form games. The key observation here is that regret minimization over $\Phi^k_{\mathrm{DT}}$ can be viewed as a tree-form decision problem of size $N^{O(k)}$. Terminal nodes in this decision problem are identified by the original index $j_0 \in [N]$, the queries $j_1, \dots, j_k \in [N]$, their replies $a_1, \dots, a_k \in \{0, 1\}$, and finally the action $a_0 \in \{0, 1\}$ that is played. Each tree-form strategy $\vq$ in this decision problem defines a function $\phi_\vq : \X \to \co \X$, which is computed by following the strategy $\vq$ through the decision problem. Formally, we have
\begin{align}
    \phi_\vq(\vx)[j_0] = \sum_{j_1, a_1, \dots, j_k, a_k} \vq[j_0, j_1, a_1, \dots, j_k, a_k, 1] \prod_{i=1}^k \vx[j_i, a_i]
\end{align}
where $\vx[j_i, a_i] = \vx[j_i]$ if $a_i=1$, and $1-\vx[j_i]$ if $a_i=0$. Hence $\phi_\vq$ is a degree-$k$ polynomial in $\vx$. 

Now, since $\vq \mapsto \phi_\vq(\vx)[i]$ is linear, it follows that $\vq \mapsto \ip{\vu, \phi_\vq(\vx)}$ is also linear for any given $\vu \in \R^n$. Therefore, a regret minimizer on $\Phi_\text{DT}^k$ can be constructed starting from any regret minimizer for tree-form decision problems; for example, {\em counterfactual regret minimization}~\citep{Zinkevich07:Regret}, or any of its modern variants. This enables us to rely on usual techniques for dealing with such problems, eventually leading to a complexity bound of $N^{O(k)}$, as we formalize in~\Cref{sec:bayesian}.

For the set of low-degree polynomials $\Phi^k_{\mathrm{poly}}$, we leverage a result from Boolean analysis relating (randomized) low-depth decision trees with low-degree polynomials, stated below. 

\begin{restatable}[\citealp{Midrijanis04:Exact}]{theorem}{booleananal}
    \label{th:boolean}
    Every degree-$k$ polynomial $f : \{0, 1\}^N \to \{0, 1\}$ can be written as a decision tree of depth at most $2k^3$.
\end{restatable}

In particular, this implies that $\Phi_\textup{poly}^k \subseteq \Phi^{2 k^3}_{\mathrm{DT}}$. Consequently, low-degree polynomials can be reduced to low-depth decision trees, albeit with an overhead in the exponent.

Turning to general extensive-form games, we follow a similar blueprint, although there are now additional technical challenges. In particular, in what follows, to describe the set of deviations it will be convenient to introduce a new formalism related to tree-form decision problems.

\begin{restatable}{definition}{dual}
The {\em dual} $\bar\X$ of $\X$ is the decision problem identical to $\X$, except that the decision points and observation points have been swapped.
\end{restatable}

\begin{restatable}{definition}{interleaving}
The {\em interleaving} $\X \otimes \Y$ is the tree-form decision problem defined as follows. There is a state $\vs = (s_1, s_2) \in \S_1 \times \S_2$. The root state is the tuple $(\Root, \Root)$. The decision problem is defined by the player being able to interact with {\em both} decision problems, in the following manner. At each state $\vs = (s_1, s_2)$:
        \begin{itemize}
            \item If $s_1$ and $s_2$ are both terminal then so is $\vs$. Otherwise:
            \item If either of the $s_i$s is an observation point, then so is $\vs$. The children are the states $(s_i', s_{-i})$ where $s_i'$ is a child of $s_i$. (If both $s_i$s are observation points, both children $s_1', s_2'$ are selected simultaneously. This can only happen at the root.)
            \item Otherwise, $\vs$ is a decision point. The player selects an index $i \in \{1, 2\}$ at which to act, and a child $s_i'$ to transition to. The next state is $(s_i', s_{-i})$.
        \end{itemize}
\end{restatable}

In $\X \otimes \Y$, the same state $(s_1, s_2)$ can be reachable through possibly exponentially many paths, because the learner may choose to interleave actions in $\X$ with actions in $\Y$ in any order. Thus, each state $(s_1, s_2)$ corresponds to actually exponentially many histories in $\X \otimes \Y$. In the discussion below, we will therefore carefully distinguish between {\em histories} and {\em states}. In light of the above exponential gap between histories and states, it seems wasteful to represent $\X \otimes \Y$ as a tree. Indeed, \citet{Zhang23:Team_DAG} recently studied {\em DAG}-form decision problems, and showed that regret minimization on them is possible so long as the DAG obeys some natural properties.

Using the language we have now introduced, we can define the set of {\em $k$-mediator deviations $\Phi_\textup{med}^k$} as the set of reduced strategies in the decision problem $\X \otimes \bar\X^{\otimes k}$. That is, the player has access to not one but $k$ mediators, all holding strategy $\vx$, which the player can query at any time. This is a significant advantage over having just one mediator since the player can send different queries to each of the $k$ mediators (who must all reply according to $\vx$), and therefore can learn more about the strategy $\vx$ than it could have otherwise. We will call the responses given by the mediator {\em action recommendations}. For a graphical illustration of such deviations, we refer to~\Cref{fig:deviation-example} (in \Cref{sec:efgs}).

Reduced strategies $\vq \in \pi(\X \otimes \bar \X^{\otimes k})$, once again, induce functions $\phi_\vq : \X \to \co \X$ given by
\begin{align}
    \phi_\vq(\vx)[z] = \sum_{z_1, \dots, z_k} \vq[z, z_1, \dots, z_k] \prod_{i=1}^k \vx[z_i],
\end{align}
and in particular we have that $\phi_\vq$ is a degree-$k$ polynomial. We define $\Phi^k_\textup{med}$ as the set of such deviations. For that set, we show that there is a reduction to a particular type of DAG-form decision problem of size $N^{O(k)}$. As we explained, that formulation is more suitable than tree-form decision problems when the number of possible histories far exceeds the number of states, which is precisely the case when the player is gradually querying multiple mediators as the game progresses.

Finally, we establish a reduction from low-degree polynomials to having few mediators; namely, we show that $\Phi_\textup{poly}^k \subseteq \Phi^{O(k d)^3}_{\mathrm{med}}$, where we recall that $d$ is the depth of the game tree. Our basic strategy is to again leverage the connection between low-depth decision trees and low-degree polynomials we described earlier (\Cref{th:boolean}). To do so, we need to cast our problem in terms of functions $\{0, 1\}^N \to \{0, 1\}^N$ instead of $\cX \to \cX$. To that end, we first show how to \emph{extend} a degree-$k$ function $f : \cX \to \{0, 1\}$ to a degree-$kd$ function $\bar{f} : \{0, 1\}^N \to \{0, 1\}$; that is, $\bar{f}$ coincides with $f$ on all points in $\cX \subseteq \{0, 1\}^N$ (\Cref{lemma:extension}). This step is where the overhead factor $d$ comes from. The final technical piece is to show that if each component of $\phi : \cX \to \cX$ can be expressed using $K$ mediators, the same holds for $\phi$; the naive argument here incurs another factor of $d$, but we show that this is in fact not necessary. The details of the above argument are deferred to~\Cref{sec:efgs}.
\section{Further related research}
\label{sec:related}

A key reference point is the result of~\citet{Blum07:From}, and a generalization due to~\citet{Gordon08:No}, which reduces minimizing swap regret to minimizing external regret. Specifically, for the probability simplex $\Delta(\cA)$, it maintains a separate external-regret minimizer, one for each action $a \in \cA$. Both the per-iteration complexity and the number of iterations required is generally polynomial in $A \defeq |\cA|$. Therefore, in settings where $A$ is exponentially large in the natural parameters of the problem (such as extensive-form games) it does not appear that the reduction of~\citet{Blum07:From} is of much use. It is tempting to instead rely on the reduction of~\citet{Stoltz05:Internal} for minimizing \emph{internal} regret, a weaker notion than swap regret, which is nonetheless sufficient for (asymptotic) convergence to correlated equilibria. However, one should be careful when relying on internal regret in settings where $A$ is exponentially large; as we point out in~\Cref{remark:internal}, internal regret can be smaller than swap regret by up to a factor of $A$, so it is only meaningful when $\eps \le 1/A$, a regime which is generally out of reach for regret minimization techniques when $A$ is exponentially large.

This gap motivated the new reduction by~\citet{Dagan23:From} and~\citet{Peng23:Fast}, which we discussed earlier. Beyond extensive-form games, those reductions apply whenever it is possible to minimize external regret efficiently. The complexity of computing correlated equilibria beyond the regime where the precision parameter $\epsilon$ is an absolute constant remains a major open problem, generally conjectured to be hard~\citep{Stengel08:Extensive}; the recent online lower bound in the adversarial setting~\citep{Anonymous24:Lower} provides further evidence in support of that conjecture.

As a result, most prior work has focused on more permissive equilibrium concepts, understood through the framework of $\Phi$-regret~\citep{Morrill21:Efficient,Morrill21:Hindsight,Farina22:Simple,Bai22:Efficient,Bernasconi23:Constrained,Noarov23:High,Dudik09:SamplingBased,Gordon08:No,Fujii23:Bayes,Dann23:Pseudonorm,Mansour22:Strategizing,Sharma24:Internal,Cai24:Tractable}). In terms of the most recent developments, \citet{Farina23:Polynomial} established efficient learning dynamics minimizing what is referred to as linear swap regret (\emph{cf.} \citet{Dann23:Pseudonorm,Fujii23:Bayes} for related results in Bayesian games). The solution concept that arises from linear swap regret was later endowed with a natural mediator-based interpretation by~\citet{Zhang24:Mediator}, which can be viewed as a natural precursor to this work. Convergence to correlated equilibria has also attracted attention in the context of Markov (aka. stochastic) games (\emph{e.g.},~\citet{Cai24:Near,Jin21:Vlearning,Erez23:Regret,Liu23:Partially}, and references therein).

Moreover, as we explained earlier, our approach also gives rise to a faster algorithm for computing approximate correlated equilibria in a certain regime. As we discuss further in~\Cref{appendix:fastce}, improving the per-iteration complexity of~\citet{Blum07:From} has received interest in prior work~\citep{Ito20:Tight,Greenwald06:Bounds,Yang17:Online} (see also~\citet{Huang23:Near,Huang23:End}). The main bottleneck lies in the (approximate) computation of a stationary distribution of a Markov chain, which can be phrased as a linear system. %

Finally, although we have so far mostly directed our attention to the game-theoretic implication of minimizing swap (or indeed $\Phi$) regret, namely the celebrated connection with correlated equilibria in repeated games, the notion of swap regret is a fundamental solution concept in its own right more broadly in online learning and learning theory. Compared to the more common notion of external regret, swap regret gives rise to a more appealing notion of hindsight rationality; as such, it is often adopted as a behavioral assumption to model learning agents (\emph{e.g.}, \citet{Deng19:Strategizing}). It is also fundamentally tied to the notion of \emph{calibration}~\citep{Hu24:Predict}, and recently inspired work by~\citet{Gopalan23:Swap} in the context of multi-group fairness.
\section{Conclusions and future research}
\label{sec:conclusion}

We provided a new family of parameterized algorithms for minimizing $\Phi$-regret in extensive-form games. Our results capture perhaps the most natural class of functions interpolating between linear-swap and swap deviations, namely degree-$k$ deviations. Along the way, we refined the usual template for minimizing $\Phi$-regret---taught in many courses on algorithmic game theory and online learning---which revolves around (approximate) fixed points~\citep{Gordon08:No,Blum07:From,Stoltz05:Internal}. Instead, we showed that it suffices to rely on a relaxation that we refer to as an approximate fixed point in expectation, which---unlike actual fixed points---can always be computed efficiently. Our refinement of the usual template for minimizing $\Phi$-regret is of independent interest beyond extensive-form games. For example, it can speed up the computation of approximate correlated equilibria even in normal-form games, as it obviates the need to solve a linear system in every round. As in the recent works by~\citet{Dagan23:From} and~\citet{Peng23:Fast}, a crucial feature of our approach is to allow the learner to select a distribution over pure strategies, for otherwise we showed that regret minimization immediately becomes \PPAD-hard (under a strongly adaptive adversary).

There are many interesting avenues for future research. First, the complexity of our algorithm pertaining to degree-$k$ deviations depends exponentially on the depth of the game tree. We suspect that such a dependency could be superfluous. To show this, it would be enough to refine \Cref{lemma:extension} by coming up with an extension whose degree does not depend on the depth of the game tree. It would also be interesting to devise parameterized algorithms for $k$-mediator deviations that recover as a special case the $\PTAS$ of~\citet{Peng23:Fast} and~\citet{Dagan23:From}, so as to smoothly interpolate between existing results for linear-swap regret~\citep{Farina23:Polynomial} and the aforementioned results for swap regret; is $k = \tilO(1/\epsilon)$ enough to capture swap regret?

Finally, perhaps the most important question is to understand the computational complexity of computing $\Phi$-equilibria in extensive-form games. In particular, our results raise the interesting question of whether there is an algorithm (in the centralized model) for computing in polynomial time an \emph{exact} correlated equilibrium induced by low-degree deviations. Extending the paradigm of~\citet{Papadimitriou08:Computing} in that setting presents several challenges, not least because computing fixed points---which are crucial for implementing the separation oracle~\citep{Papadimitriou08:Computing}---is now computationally hard. Relatedly, we suspect that there is an inherent connection between fixed points and correlated equilibria, in the spirit of the equivalence between $\Phi$-regret minimization and fixed points established by~\citet{Hazan07:Computational}.%

\section*{Acknowledgements} We are grateful to the anonymous reviewers at NeurIPS for their helpful feedback. This material is based on work supported by the Vannevar Bush Faculty
Fellowship ONR N00014-23-1-2876, National Science Foundation grants
RI-2312342 and RI-1901403, ARO award W911NF2210266, and NIH award
A240108S001.

\bibliography{dairefs}

\clearpage

\appendix

\section{Further preliminaries}
\label{section:furtherprel}

In this section, we introduce some further preliminaries. For additional background, we refer the interested reader to the excellent books of~\citet{Cesa-Bianchi06:Predictiona} and \citet{Shoham09:Multiagent}. Before we describe more formally the construction of~\citet{Gordon08:No}, we make a remark regarding minimizing internal regret in extensive-form games.

\begin{remark}[Swap versus internal regret]
    \label{remark:internal}
    When it comes to defining correlated equilibria in normal-form games, there are two prevalent definitions appearing in the literature; one is based on \emph{internal regret}, while the other on \emph{swap regret} (\emph{e.g.}, \citep{Ganor18:Communication,Goldberg16:Bounds}). The key difference is that internal regret only contains deviations that swap a \emph{single action}---thereby being weaker. Nevertheless, it is not hard to see that swap regret can only be larger by a factor of $|\cX|$~\citep{Blum07:From}, where we recall that $\cX$ denotes the set of pure strategies. So, in normal-form games those two definitions are polynomially equivalent, and in most applications one can safely switch from one to the other.
    
    However, this is certainly not the case in games with an exponentially large action space, such as extensive-form games. In fact, the definition of internal regret itself is problematic when the action set is exponentially large: the uniform distribution always attains an error of at most $1/|\cX|$. Consequently, any guarantee for $\epsilon \geq 1/|\cX|$ is vacuous. That is, if $|\cX|$ is exponentially large, an algorithm that requires a number of iterations polynomial in $1/\epsilon$---which is what we expect to get from typical no-regret dynamics---would need an exponential number of iterations to yield a non-trivial guarantee; this issue with internal regret was also observed by~\citet{Fujii23:Bayes}. Nevertheless, internal regret in the context of games with an exponentially large action set was used in a recent work by~\citet{Chen23:AI}, who provided oracle-efficient algorithms for minimizing internal regret.
\end{remark}

\subsection{The construction of\texorpdfstring{~\citet{Gordon08:No}}{Gordon et al.}}

\citet{Gordon08:No}, building on earlier work by~\citet{Blum07:From} and~\citet{Stoltz05:Internal}, came up with a general recipe for minimizing $\Phi^\delta$-regret. That construction relies on a no-regret learning algorithm on the set of deviations $\Phi^\delta$, which we denote by $\MR_\Phi$. Then, a $\Phi^\delta$-regret minimizer on $\co \X$ can be constructed as follows: on each iteration $t= 1, \dots, T$, the learner performs the following steps.
\begin{enumerate}
    \item Receive $\phi^{(t)}$ from $\MR_\Phi$. Select $\vx^{(t)} \in \co \cX$ as an $\eps$-fixed point of $\phi^{(t)}$: $\| \phi^{(t)}(\vx^{(t)}) - \vx^{(t)} \|_\X \le \eps$.
    \item Upon receiving utility $\vu^{(t)} \in \mc U$, pass utility $\Phi^\delta \ni \phi^\delta \mapsto \ev{\vu^{(t)}, \phi^\delta(\vx^{(t)})}$ to $\MR_\Phi$.
\end{enumerate}

The main guarantee regarding the above algorithm is summarized below.

\begin{theorem}[\citealp{Gordon08:No}]
    \label{theorem:Gordon}
    Suppose that $\reg^T$ is the external regret incurred by $\MR_\Phi$. After $T$ rounds of the above algorithm, we have
    \begin{equation}
    \label{eq:Phi-regret-gordon}
     \max_{\phi^\delta \in \Phi^\delta} \frac{1}{T}\sum_{t=1}^T \ip{\vec u^{(t)}, \phi^\delta(\vx^{(t)}) - \vx^{(t)}} \le \reg^{T} + \eps.
\end{equation}
\end{theorem}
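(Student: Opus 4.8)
The plan is the standard one-line decomposition of $\Phi^\delta$-regret into the external regret of $\MR_\Phi$ plus the per-round fixed-point error of the construction. Fix an arbitrary target $\phi^\delta \in \Phi^\delta$. In round $t$ the learner has chosen $\vx^{(t)} \in \co\X$ and $\MR_\Phi$ has produced $\phi^{(t)}$, which (via its extended mapping) we apply to the point $\vx^{(t)}$. I would add and subtract $\phi^{(t)}(\vx^{(t)})$ inside the inner product:
\[
\ip{\vu^{(t)}, \phi^\delta(\vx^{(t)}) - \vx^{(t)}} = \underbrace{\ip{\vu^{(t)}, \phi^\delta(\vx^{(t)}) - \phi^{(t)}(\vx^{(t)})}}_{(\mathrm{I})} \;+\; \underbrace{\ip{\vu^{(t)}, \phi^{(t)}(\vx^{(t)}) - \vx^{(t)}}}_{(\mathrm{II})} .
\]

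For $(\mathrm{I})$: the utility passed to $\MR_\Phi$ in round $t$ is exactly the linear functional $\Phi^\delta \ni \psi \mapsto \ip{\vu^{(t)}, \psi(\vx^{(t)})}$, and $\MR_\Phi$ plays $\phi^{(t)}$ against it. Thus the hypothesis that $\reg^T$ upper bounds the external regret of $\MR_\Phi$ is precisely the statement $\frac1T\sum_{t=1}^T (\mathrm{I}) \le \reg^T$, and this holds simultaneously for every $\phi^\delta$, since the maximum over competitors is already built into $\reg^T$. This is the only property of $\MR_\Phi$ the proof uses.

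For $(\mathrm{II})$: the fixed-point step guarantees $\norm{\phi^{(t)}(\vx^{(t)}) - \vx^{(t)}}_\X \le \eps$ in every round. Since $\vu^{(t)} \in \mc U$ and, by definition of the induced norm, $\norm{\vz}_\X = \max_{\vu \in \mc U}\ip{\vu,\vz} \ge \ip{\vu^{(t)},\vz}$ for any $\vz$, we get $(\mathrm{II}) \le \eps$ for each $t$, hence $\frac1T\sum_{t=1}^T (\mathrm{II}) \le \eps$. Adding the two bounds and taking the maximum over $\phi^\delta \in \Phi^\delta$ gives the claimed inequality.

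I do not anticipate a real obstacle: this is a decomposition argument with no telescoping, no convexity, and no use of the structure of $\Phi$ beyond (a) the per-round losses fed to $\MR_\Phi$ being well-defined linear functionals on $\Phi^\delta$, and (b) existence of an $\eps$-fixed point in $\co\X$, which Brouwer supplies whenever the extended maps send $\co\X$ into itself. The only things to watch are bookkeeping---treating $\phi^{(t)}$ consistently as a self-map of $\co\X$ so that $(\mathrm{I})$ matches the regret of $\MR_\Phi$ verbatim---and noting that only the one-sided bound $\ip{\vu,\vz} \le \norm{\vz}_\X$ is needed, not an absolute-value bound.
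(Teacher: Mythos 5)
Your decomposition is correct and is the standard argument: the paper does not reproduce a proof of this theorem (it is cited directly from Gordon, Greenwald, and Marks, and the same decomposition is later invoked implicitly for \Cref{th:phi-regret}), and your two-term split into the external regret of $\MR_\Phi$ plus the per-round fixed-point error is exactly that argument. Your parenthetical remark that only the one-sided bound $\ip{\vu^{(t)}, \vz} \le \norm{\vz}_\X$ is needed is a fair observation, though with $\mc U$ symmetric the induced $\norm{\cdot}_\X$ is in fact a genuine norm.
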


In~\Cref{sec:exp-fp-Gordon}, we will relax the requirement of needing (approximate) fixed points, while at the same time maintaining the guarantee of~\Cref{theorem:Gordon}.

\section{Hardness of minimizing \texorpdfstring{$\Phi$}{Phi}-regret in  behavioral strategies}
\label{sec:hardness-fp}

In this section, we show that if the learner is constrained to output in reach round a strategy in $\co \cX$, then there is no efficient algorithm (under standard complexity assumptions) minimizing $\Phi^\beta$-regret (\Cref{theorem:behavioral}); here, $\beta: \co \cX \to \Delta(\cX)$ is the behavioral strategy mapping (introduced in the sequel as~\Cref{def:beta}), the expression of which is not important for the purpose of this section. The key connection is a result by~\citet{Hazan07:Computational}, showing that any $\Phi^\beta$-regret minimizer is able to compute approximate fixed points of any deviations in $\Phi^\beta$. We then show that the set of induced deviations, even on the hypercube $\cX = \{0, 1\}^N$, is rich enough to approximate \PPAD-hard fixed-point problems.

In this context, consider a transformation $\Phi^\beta \ni \phi^\beta: [0, 1]^N \to [0, 1]^N$ for which we want to compute an approximate fixed point $\vx \in \co \X$; that is, $\| \phi^\beta(\vx) - \vx \|_2 \leq \epsilon$, for some precision parameter $\epsilon > 0$. (It is convenient in the construction below to measure the fixed-point error with respect to $\|\cdot\|_2$.) \citet{Hazan07:Computational} observed that a $\Phi^\beta$-regret minimizer can be readily turned into an algorithm for computing fixed points of any function in $\Phi^\beta$, as stated formally below. Before we proceed, we remind that here and throughout we operate under a strongly adaptive adversary, which is quite crucial in the construction of~\citet{Hazan07:Computational}.

\begin{restatable}[\citealp{Hazan07:Computational}]{proposition}{Hazan}
    \label{prop:Hazan-fixedpoint}
    Consider a regret minimizer $\cR$ operating over $[0, 1]^N$. If $\cR$ runs in time $\poly(N, 1/\epsilon)$ and guarantees $\reg_{\Phi^\beta}^T \leq \epsilon$ for any sequence of utilities, then there is a $\poly(N, 1/\epsilon)$ algorithm for computing an $(\epsilon \sqrt{N})$-fixed point of any $\phi^\beta \in \Phi^\beta$ with respect to $\|\cdot\|_2$, assuming that $\phi^\beta$ can be evaluated in polynomial time.
\end{restatable}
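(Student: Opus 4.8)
The plan is to carry out the reduction of \citet{Hazan07:Computational} in the present notation: we use the hypothesized $\Phi^\beta$-regret minimizer $\cR$ as a subroutine, run it against a sequence of utilities that we craft adaptively so as to ``push'' its iterates toward fixed points of the target map $\phi^\beta$, and then output the best iterate it produced. Concretely, fix the transformation $\phi^\beta \in \Phi^\beta$ whose fixed point we seek. We simulate $\cR$ over its horizon $T$, which is $\poly(N, 1/\epsilon)$ and large enough that $\reg_{\Phi^\beta}^T \le \epsilon$. In round $t$, when $\cR$ outputs $\vx^{(t)} \in [0,1]^N = \co\cX$, we form the residual $\vec{g}^{(t)} \defeq \phi^\beta(\vx^{(t)}) - \vx^{(t)}$---computable in polynomial time since $\phi^\beta$ is, by hypothesis, polynomial-time evaluable---and pass to $\cR$ the utility $\vu^{(t)} \defeq \vec{g}^{(t)} / (\sqrt N \, \lVert \vec{g}^{(t)} \rVert_2)$ when $\vec{g}^{(t)} \ne \vec{0}$, and $\vu^{(t)} \defeq \vec{0}$ otherwise (in the latter case $\vx^{(t)}$ is already an exact fixed point and we halt).

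Two points then need to be checked. First, each $\vu^{(t)}$ must be an admissible utility, i.e. $\vu^{(t)} \in \mc U$. On the hypercube $\cX = \{0,1\}^N$ one has $\abs{\ip{\vu, \vx}} \le \lVert \vu \rVert_1$ for every $\vx \in \cX$, so it suffices that $\lVert \vu^{(t)} \rVert_1 \le 1$, and indeed $\lVert \vu^{(t)} \rVert_1 \le \sqrt N \, \lVert \vu^{(t)} \rVert_2 = 1$ by Cauchy--Schwarz. Note that $\vu^{(t)}$ depends on the learner's own iterate $\vx^{(t)}$, so this step genuinely uses the fact that $\cR$ retains its regret guarantee against a strongly adaptive adversary---exactly the regime assumed throughout. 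Second, since $\phi^\beta \in \Phi^\beta$, the definition of $\Phi^\beta$-regret dominates the regret incurred against the single deviation $\phi^\beta$, whence
\[
\epsilon \;\ge\; \reg_{\Phi^\beta}^T \;\ge\; \frac1T \sum_{t=1}^T \ip{\vu^{(t)}, \phi^\beta(\vx^{(t)}) - \vx^{(t)}} \;=\; \frac{1}{T} \sum_{t=1}^T \frac{\lVert \vec{g}^{(t)} \rVert_2}{\sqrt N},
\]
where the last equality is just the definition of $\vu^{(t)}$ (and holds trivially on rounds with $\vec{g}^{(t)} = \vec{0}$). Therefore $\min_{t \in [T]} \lVert \vec{g}^{(t)} \rVert_2 \le \frac1T \sum_{t=1}^T \lVert \vec{g}^{(t)} \rVert_2 \le \epsilon \sqrt N$, so outputting $\vx^{(t^\star)}$ for any $t^\star \in \argmin_t \lVert \vec{g}^{(t)} \rVert_2$ yields an $(\epsilon\sqrt N)$-fixed point of $\phi^\beta$ with respect to $\lVert \cdot \rVert_2$. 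The total running time is $T$ calls to $\cR$ plus $T$ evaluations of $\phi^\beta$ (and a final $\argmin$), i.e. $\poly(N, 1/\epsilon)$.

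I do not expect a serious obstacle here; the argument is a direct translation of \citet{Hazan07:Computational} into our setup. The only genuinely load-bearing choice is the renormalization of the residual into $\mc U$: on the hypercube this costs the $\ell_1/\ell_2$ factor $\sqrt N$, and this is precisely where the $\epsilon\sqrt N$ slack in the statement comes from. The two remaining things to keep in mind are that it suffices to find \emph{some} good iterate---so taking a minimum over rounds, rather than a last- or average-iterate, is enough---and that the whole scheme collapses without a strongly adaptive adversary, since the adversary must commit to $\vu^{(t)}$ only after observing $\vx^{(t)}$.
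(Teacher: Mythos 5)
Your proposal is correct and follows essentially the same route as the paper's proof: same choice of normalized residual utility, same Cauchy–Schwarz/$\ell_1$–$\ell_2$ admissibility check, same use of the regret bound to extract a low-residual iterate. The only cosmetic difference is that the paper writes the utility as the affine function $\vx \mapsto \frac{1}{\sqrt{N}\|\vec{g}^{(t)}\|_2}\langle \vec{g}^{(t)}, \vx - \vx^{(t)}\rangle$ (whose constant shift cancels in the regret) and phrases the conclusion as a contradiction, whereas you work directly with the unshifted vector and take the $\argmin$ over rounds; these are equivalent.
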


\Cref{prop:Hazan-fixedpoint} significantly circumscribes the class of problems for which efficient $\Phi^\beta$-regret minimization is possible, at least when operating in behavioral strategies. Indeed, computing fixed points is in general a well-known (presumably) intractable problem. In our context, the set $\Phi^\beta$ does not contain arbitrary (Lipschitz continuous) functions $[0, 1]^N \to [0, 1]^N$, but instead contains multilinear functions from $[0, 1]^N$ to $[0, 1]^N$. Nonetheless, we show that \PPAD-hardness persists in our setting. The basic idea is as follows. We start with a \emph{generalized circuit} (\Cref{def:gcircuit}), and we show that all gates can be approximately simulated using exclusively gates involving multilinear operations (\Cref{prop:multilinear-circuit}). The proof of that claim appears in~\Cref{appendix-fphardness}. As a result, we arrive at the main hardness result of this section, restated below.

\behavioral*

We also obtain a stronger hardness result under a stronger complexity assumption put forward by~\citet{Babichenko16:Can} (\Cref{theorem:ETH-behavioral}). At first glance, it may seem that the above results are at odds with the recent positive results of~\citet{Dagan23:From} and \citet{Peng23:Fast}, which seemingly obviate the need to compute approximate fixed points. As we have alluded to, the key restriction that drives~\Cref{theorem:behavioral} lies in constraining the learner to output behavioral strategies. In the coming section, we show that there is an interesting twist which justifies the discrepancy highlighted above.

\section{Circumventing fixed points}
\label{sec:circ-fp}

The previous section, and in particular \Cref{theorem:behavioral}, seems to preclude the ability to minimize $\Phi$-regret efficiently when the set of (extended) deviations contains nonlinear functions.\footnote{For linear functions, fixed points can be computed exactly via a linear program.} In this section, we will show how to circumvent this issue via a relaxed notion of what constitutes a fixed point (\Cref{def:expected-fixedpoint}). In the sequel, we will work with deviations $\phi$ with domain $\X$ instead of $\co \X$.

\subsection{Approximate expected fixed points}
\label{sec:exp-fp-Gordon}

The key to our construction is to allow the learner to play {\em distributions} over $\X$, not merely points in $\co\X$, and to use a relaxed notion of a fixed point, formally introduced below.

\begin{definition}
    \label{def:expected-fixedpoint}
    We say that a distribution $\pi \in \Delta(\X)$ is an {\em $\eps$-expected fixed point} of $\phi \in (\co\X)^\X$ if $\norm{\E_{\vx\sim\pi}[\phi(\vx) - \vx]}_\X \le \eps$.
\end{definition}
The key now is to replace the fixed point oracle in the framework of \citet{Gordon08:No} (recalled in \Cref{section:furtherprel}) with an oracle that instead returns an $\epsilon$-fixed point in expectation per \Cref{def:expected-fixedpoint}. The learner otherwise proceeds as in the algorithm of~\citet{Gordon08:No} (our overall construction is spelled out as~\Cref{alg:expfp} in~\Cref{appendix-circ-fp}). It is easy to show, following the proof of \citet{Gordon08:No}, that a fixed point in expectation is still sufficient to minimize $\Phi$-regret.

\begin{theorem}[$\Phi$-regret with $\epsilon$-expected fixed points]\label{th:phi-regret}
    Suppose that the external regret of $\MR_\Phi$ over $\Phi$ after $T$ repetitions is at most $\reg^T$. Then, the $\Phi$-regret of \Cref{alg:expfp} can be bounded as $\reg^{T} + \eps$.
\end{theorem}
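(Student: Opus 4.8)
The plan is to mimic the proof of \Cref{theorem:Gordon} almost verbatim, replacing the single point $\vx^{(t)}$ by a distribution $\pi^{(t)} \in \Delta(\X)$ and replacing the (approximate) fixed-point inequality $\norm{\phi^{(t)}(\vx^{(t)}) - \vx^{(t)}}_\X \le \eps$ by the $\eps$-expected-fixed-point inequality $\norm{\E_{\vx \sim \pi^{(t)}}[\phi^{(t)}(\vx) - \vx]}_\X \le \eps$. First I would recall the structure of \Cref{alg:expfp}: on round $t$ the learner receives $\phi^{(t)}$ from $\MR_\Phi$, computes an $\eps$-expected fixed point $\pi^{(t)}$ of $\phi^{(t)}$ (via the cheap iterative averaging procedure described in the technical overview), plays $\pi^{(t)}$, observes $\vu^{(t)}$, and feeds to $\MR_\Phi$ the linear utility $\phi \mapsto \ip{\vu^{(t)}, \E_{\vx \sim \pi^{(t)}}[\phi(\vx)]}$. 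Note this utility is bounded appropriately since $\E_{\vx \sim \pi^{(t)}}[\phi(\vx)] \in \co\X$, so $\MR_\Phi$ is being fed legitimate utilities in $\mc U$.

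Next I would write the main chain of (in)equalities. Fix any comparator $\phi^\star \in \Phi$. By the external-regret guarantee of $\MR_\Phi$ against the sequence of utilities $\ell^{(t)} : \phi \mapsto \ip{\vu^{(t)}, \E_{\vx \sim \pi^{(t)}}[\phi(\vx)]}$,
\begin{equation*}
\frac{1}{T}\sum_{t=1}^T \ip{\vu^{(t)}, \E_{\vx \sim \pi^{(t)}}[\phi^\star(\vx)]} - \frac{1}{T}\sum_{t=1}^T \ip{\vu^{(t)}, \E_{\vx \sim \pi^{(t)}}[\phi^{(t)}(\vx)]} \le \reg^T .
\end{equation*}
It therefore suffices to control the ``diagonal'' term $\ip{\vu^{(t)}, \E_{\vx \sim \pi^{(t)}}[\phi^{(t)}(\vx) - \vx]}$: by Cauchy--Schwarz against the induced norm, $\ip{\vu^{(t)}, \E_{\vx \sim \pi^{(t)}}[\phi^{(t)}(\vx) - \vx]} \le \norm{\E_{\vx \sim \pi^{(t)}}[\phi^{(t)}(\vx) - \vx]}_\X \le \eps$, the last step being exactly the defining property of an $\eps$-expected fixed point (\Cref{def:expected-fixedpoint}). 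Adding $\frac{1}{T}\sum_t \ip{\vu^{(t)}, \E_{\vx\sim\pi^{(t)}}[\phi^{(t)}(\vx)-\vx]} \le \eps$ to the displayed regret bound and rearranging gives
\begin{equation*}
\frac{1}{T}\sum_{t=1}^T \ip{\vu^{(t)}, \E_{\vx \sim \pi^{(t)}}[\phi^\star(\vx) - \vx]} \le \reg^T + \eps .
\end{equation*}
Taking the maximum over $\phi^\star \in \Phi$ yields $\reg_\Phi^T \le \reg^T + \eps$, which is the claim.

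There is no serious obstacle here — the argument is essentially identical to Gordon's once one has \Cref{def:expected-fixedpoint} in hand. The one point that deserves a sentence of care is that the "utility" passed to $\MR_\Phi$ is indeed a bounded linear functional on $\Phi$ (so that the external-regret bound $\reg^T$ of $\MR_\Phi$ applies): this holds because $\phi \mapsto \E_{\vx \sim \pi^{(t)}}[\phi(\vx)]$ is linear in the averaged sense and lands in $\co\X$, so $|\ip{\vu^{(t)}, \E_{\vx\sim\pi^{(t)}}[\phi(\vx)]}| \le 1$ by definition of $\mc U$. A second minor point is the linearity of expectation that lets us push $\E_{\vx\sim\pi^{(t)}}$ through the inner product; this is immediate since $\vu^{(t)}$ is fixed given the history. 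I would also remark that the strongly-adaptive adversary causes no trouble: $\vu^{(t)}$ may depend on $\pi^{(t)}$, but $\pi^{(t)}$ is determined before $\vu^{(t)}$ is revealed, exactly as in Gordon's analysis. Hence the bound $\reg^T + \eps$ follows.
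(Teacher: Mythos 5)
Your proof is correct and matches the paper's intended argument: the paper itself simply asserts that the theorem follows "by following the proof of \citet{Gordon08:No}," with the only change being that the pointwise fixed-point bound $\|\phi^{(t)}(\vx^{(t)}) - \vx^{(t)}\|_\X \le \eps$ is replaced by the in-expectation bound from \Cref{def:expected-fixedpoint}, and your telescoping of the diagonal term is exactly that substitution carried out. One small terminological quibble: the inequality $\ip{\vu^{(t)}, \E_{\vx\sim\pi^{(t)}}[\phi^{(t)}(\vx)-\vx]} \le \|\E_{\vx\sim\pi^{(t)}}[\phi^{(t)}(\vx)-\vx]\|_\X$ is immediate from the definition $\|\vz\|_\X = \max_{\vu\in\mc U}\ip{\vu,\vz}$ and the fact that $\vu^{(t)}\in\mc U$, rather than being an instance of Cauchy--Schwarz, but the step is valid.
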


Analogously to~\Cref{prop:Hazan-fixedpoint}, it turns out that there is a certain equivalence between minimizing $\Phi$ in $\Delta(\cX)$ and computing \emph{expected} fixed points:

\begin{restatable}{proposition}{extHazan}
    \label{prop:expected-fixedpoint}
    Consider a regret minimizer $\cR$ operating over $\Delta(\cX)$. If $\cR$ runs in time $\poly(N, 1/\epsilon)$ and guarantees $\reg_\Phi^T \leq \epsilon$ for any sequence of utilities, then there is a $\poly(N, 1/\epsilon)$ algorithm for computing $(\epsilon \diam_{\cX})$-expected fixed points of $\phi \in \Phi$, assuming that we can efficiently compute $\E_{\vx^{(t)} \sim \pi^{(t)}} [\phi(\vx^{(t)}) - \vx^{(t)} ]$ at any time $t$. Here, $\diam_{\cX}$ is the diameter of $\cX$ with respect to $\|\cdot\|_2$.
\end{restatable}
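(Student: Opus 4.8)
The plan is to transplant the classical reduction of~\citet{Hazan07:Computational} (the one behind \Cref{prop:Hazan-fixedpoint}) from the domain $\co\cX$ to the domain $\Delta(\cX)$. Fix the deviation $\phi \in \Phi$ whose expected fixed point we seek, supplied together with the promised ability to evaluate $\E_{\vx\sim\pi}[\phi(\vx)-\vx]$ for any $\pi$, and let $\cR$ be the hypothesized regret minimizer over $\Delta(\cX)$, which runs in time $\poly(N,1/\epsilon)$ and attains $\reg_\Phi^T \le \epsilon$ after $T = \poly(N,1/\epsilon)$ rounds against \emph{any} utility stream in $\cU$. I would run $\cR$ for $T$ rounds while playing the role of its adversary: in round $t$, take the distribution $\pi^{(t)} \in \Delta(\cX)$ produced by $\cR$, use the oracle to form the \emph{expected deviation gap} $\vd^{(t)} \defeq \E_{\vx\sim\pi^{(t)}}[\phi(\vx)-\vx]$, and feed back a utility $\vu^{(t)} \in \cU$ tailored to ``point against'' $\vd^{(t)}$, namely a maximizer $\vu^{(t)} \in \argmax_{\vu\in\cU}\ip{\vu,\vd^{(t)}}$, so that $\ip{\vu^{(t)},\vd^{(t)}} = \norm{\vd^{(t)}}_\X$. (This maximizer is attained because $\vd^{(t)}$ lies in the linear span of $\cX-\cX$, on which the support function of $\cU$ is finite; it can be obtained from a $\poly(N)$-size linear program using the treeplex description of $\cX$. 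If one insists on using only black-box access to the gap, feeding a rescaled copy of $\vd^{(t)}$ works as well, and the resulting dimension-dependent factor is exactly what the $\diam_\cX$ in the statement accommodates.)

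The heart of the argument is then a one-line averaging step. Since $\reg_\Phi^T$ maximizes over \emph{all} deviations in $\Phi$, it dominates the single term corresponding to our fixed $\phi$, so
\[
    \epsilon \;\ge\; \reg_\Phi^T \;\ge\; \frac1T \sum_{t=1}^T \ip{\vu^{(t)}, \E_{\vx\sim\pi^{(t)}}[\phi(\vx)-\vx]} \;=\; \frac1T \sum_{t=1}^T \norm{\vd^{(t)}}_\X,
\]
and hence some round $t^\star \in [T]$ satisfies $\norm{\E_{\vx\sim\pi^{(t^\star)}}[\phi(\vx)-\vx]}_\X = \norm{\vd^{(t^\star)}}_\X \le \epsilon$. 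Since distinct $0/1$ vectors are at $\ell_2$-distance at least $1$, we have $\diam_\cX \ge 1$ unless $|\cX| = 1$ (in which case $\phi$ is forced to be the identity and the claim is trivial), so $\pi^{(t^\star)}$ is in particular an $(\epsilon\diam_\cX)$-expected fixed point of $\phi$ in the sense of \Cref{def:expected-fixedpoint}. I would return this $\pi^{(t^\star)}$, located by scanning the $T$ recorded rounds. The total running time is $T = \poly(N,1/\epsilon)$ rounds, each costing the per-round time of $\cR$, one oracle evaluation, and one small linear program --- all polynomial.

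Two points I would stress. First, the reduction inherently uses the \emph{strongly adaptive} adversary model, exactly as in~\citet{Hazan07:Computational}, because $\vu^{(t)}$ is manufactured from $\pi^{(t)}$, the learner's own round-$t$ output; this is consistent with the standing assumption of the paper. Second, the step I expect to need the most care is the design and normalization of the feedback utility $\vu^{(t)}$: it must belong to $\cU$ so that the guarantee of $\cR$ applies, while still certifying a controlled fraction of $\norm{\vd^{(t)}}_\X$; the mismatch between the game-induced norm $\norm{\cdot}_\X$ used in \Cref{def:expected-fixedpoint} and the Euclidean geometry defining $\diam_\cX$ is precisely what produces the $\diam_\cX$ overhead --- the analogue of the $\sqrt N$ factor (the $\ell_2$-diameter of $[0,1]^N$) appearing in \Cref{prop:Hazan-fixedpoint}. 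Everything else is routine bookkeeping.
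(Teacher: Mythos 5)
Your proof is correct and follows the same Hazan--Kale adversarial reduction as the paper's own proof: query $\cR$'s distribution $\pi^{(t)}$, feed back a utility in $\cU$ aligned with the expected deviation gap $\vd^{(t)}$, and invoke the regret guarantee to force a round $t^\star$ with a small gap. The only difference is the choice of adversarial utility --- the paper feeds the $\ell_2$-normalized and shifted gap $\vd^{(t)}/(\diam_\cX\norm{\vd^{(t)}}_2)$ (the alternative you mention parenthetically), which yields $\norm{\vd^{(t^\star)}}_2\le\epsilon\diam_\cX$, whereas your primary route maximizes $\ip{\vu,\vd^{(t)}}$ over $\cU$ and lands directly on the slightly cleaner bound $\norm{\vd^{(t^\star)}}_\X\le\epsilon$.
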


The proof proceeds similarly to~\Cref{prop:Hazan-fixedpoint}, and so we include it in~\Cref{appendix-circ-fp}. Next, we present a method for computing approximate expected fixed points of functions $\phi \in \Phi$ without having to solve a \PPAD-hard problem.

\subsection{Extending deviation maps to $\co \X$}\label{sec:consistent-map}

First, since we will work both over $\co \X$ and distributions in $\Delta(\X)$, we need efficient methods for passing between them. To that end, we introduce the following notion.

\begin{definition}
    A map $\delta : \co \X \to \Delta(\X)$ is
    \begin{itemize}
        \item {\em consistent} if $\E_{\vx' \sim \delta(\vx)} \vx' = \vx$, and
    \item {\em efficient} if, given some $\phi \in \Phi$ and $\vx \in \co \X$, it is easy to compute $\phi^\delta(\vx) := \E_{\vx' \sim \delta(\vx)} \phi(\vx')$.
    \end{itemize}
    We will call the map $\phi^\delta : \co \X \to \co \X$ the {\em extended map} of $\phi$.
\end{definition}

One may ask why we use this indirect method of defining  $\phi^\delta$ rather than simply directly using the representation of $\phi$ (for example, as a polynomial) to extend $\phi$ to $\co \X$. The answer is that, even assuming that $\phi : \X \to \X$ is represented as a multilinear polynomial (which is the representation assumed in the majority of this paper), naively extending that polynomial to domain $\co \X$ will not necessarily result in a function $\bar\phi : \co \X \to \co \X$. For an example, consider the decision problem $\X$ depicted in \Cref{fig:dp-example}, and consider the function $\phi : \X \to \X$ given by $\phi(\vx) = (x_1+x_3, x_2x_4, x_2x_5, x_2, 0)$. One can easily check by hand that $\phi$ is indeed a function $\X \to \X$, but also that, for the strategy $\vx = (1/2,  1/2, 0, 1/2, 0) \in \co \X$, we have $\phi(\vx) = (1/2, 1/4, 0, 1/2, 0) \notin \co \X$. Thus, we need a more robust way of extending functions $\X \to \co \X$ to functions $\co \X \to \co \X$, ideally one that is dependent only the function $\phi$, not its representation.

We now give two methods of constructing consistent and efficient maps $\delta : \co \X \to \Delta(\X)$ for tree-form strategy sets $\X$. The first is the behavioral strategy map.
\begin{definition}
    \label{def:beta}
    The {\em behavioral strategy map}  $\beta : \co \X \to \Delta(\X)$ is defined as follows: $\beta(\vx)$ is the distribution of pure strategies generated by sampling, at each decision point $j$ for which $\vx[j] > 0$, an action $a$ according to the probabilities $\vx[ja] / \vx[j]$. Formally, 
    \begin{align}
        \beta(\vx)[\vy] := \prod_{\substack{ja :\vx[j] > 0,\vy[ja]=1}}\frac{\vx[ja]}{\vx[j]}.
    \end{align}
\end{definition}
It is possible for $\phi^\beta$ to be not a polynomial even when $\phi$ is a polynomial, because $\beta$ is {\em itself} not a polynomial. It is clear that $\beta$ is consistent. For efficiency, we show the following claim.

\begin{proposition}\label{prop:low-degree-beh}
    Let $\beta : \co \X \to \Delta(\X)$ be the behavioral strategy map. Let $\phi : \X \to \co \X$ be expressed as a polynomial of degree at most $k$, in particular, as a sum of at most $O(N^k)$ terms. Then there is an algorithm running in time $N^{O(k)}$ that, given $\phi$ and $\vx \in \co \X$, computes $\phi^\beta(\vx)$. 
\end{proposition}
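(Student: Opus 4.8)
The plan is to exploit multilinearity of $\phi$ together with the product structure of $\beta(\vx)$. Write $\phi(\vx) = \sum_{S} c_S \vx^S$, where the sum ranges over at most $O(N^k)$ monomials $\vx^S = \prod_{z \in S} \vx[z]$ with $|S| \le k$, and each coefficient $c_S \in \co\X \subseteq \R^N$. Since $\phi^\beta(\vx) = \E_{\vy \sim \beta(\vx)}[\phi(\vy)]$ is linear in $\phi$, it suffices to compute $\E_{\vy \sim \beta(\vx)}[\vy^S] = \Pr_{\vy \sim \beta(\vx)}[\vy[z] = 1 \text{ for all } z \in S]$ for each of the $O(N^k)$ monomials, and then take the corresponding linear combination of the $c_S$'s. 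So the whole problem reduces to: given a set $S$ of at most $k$ terminal nodes, compute the probability that a behavioral sample $\vy \sim \beta(\vx)$ simultaneously plays all the actions on the $\Root \to z$ path for every $z \in S$.

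The key structural observation is that this event has a clean description in terms of the tree. The union of the $\Root \to z$ paths over $z \in S$ forms a subtree $\tree_S$ rooted at $\Root$ with at most $k$ leaves, hence at most $k$ "branching decision points" and total size $O(k \cdot \mathrm{depth}) = O(kN)$ in the worst case (really $O(|\tree_S|)$, which we bound crudely by $O(kN)$). Under $\beta(\vx)$, the actions chosen at distinct decision points are independent, and the event "$\vy$ plays every action in $\tree_S$" requires that at each decision point $j$ lying in $\tree_S$, the sampled action be the unique child of $j$ that also lies in $\tree_S$. Wait — a decision point in $\tree_S$ may have two children in $\tree_S$ (a branching node); but then the event is simply infeasible unless... no: $\tree_S$ is a union of root-to-leaf paths, so at a branching decision point $j$ the set $S$ splits between the two children, and since $\vy$ can only follow one action at $j$, the event is impossible and the probability is $0$. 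Hence the probability is nonzero only when $\tree_S$ is itself a single $\Root \to z$ path, i.e. all nodes of $S$ are comparable and $z$ is the deepest; in that case the event "$\vy[z]=1$" has probability exactly $\vx[z]$ by consistency of $\beta$ (indeed $\E_{\vy\sim\beta(\vx)}\vy[z] = \vx[z]$ is immediate from the telescoping product $\prod_{ja \preceq z, \vx[j]>0} \vx[ja]/\vx[j] \cdot \vx[j]$ summed appropriately, or directly from consistency). So $\E_{\vy\sim\beta(\vx)}[\vy^S]$ equals $\vx[\max S]$ if $S$ is a chain in the tree order and $0$ otherwise.

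Given this, the algorithm is: for each of the $O(N^k)$ monomials $\vx^S$ appearing in $\phi$, check in time $O(k^2 \cdot \mathrm{depth}) = N^{O(1)}$ (or better, using precomputed ancestor information) whether $S$ is a chain; if so, add $c_S \cdot \vx[\max S]$ to the running total, otherwise skip it. This is $O(N^k)$ monomials times $N^{O(1)}$ work each, which is $N^{O(k)}$ overall, as claimed; moreover each $c_S$ is a vector of $O(N)$ coordinates, absorbed into the $N^{O(1)}$ factor. The main thing to be careful about — the only real obstacle — is the edge case handling at decision points reached with probability zero under $\vx$ (the clause "$\vx[j]>0$" in \Cref{def:beta}): one must verify that when some $\vx[j]=0$ along the chain, both sides of the claimed identity are $0$, so that the formula $\E[\vy^S] = \vx[\max S]$ remains valid verbatim and no special-casing is needed in the algorithm. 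This follows because $\vx[\max S] \le \vx[j] = 0$ when $j$ is an ancestor of $\max S$. With that checked, assembling the pieces gives the stated $N^{O(k)}$ bound, and I expect the write-up to be short once the "chain or zero" dichotomy is stated precisely.
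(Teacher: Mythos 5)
Your overall strategy (reduce to computing $\E_{\vy \sim \beta(\vx)}[m_S(\vy)]$ for each of the $O(N^k)$ monomials, then take the linear combination) is the same as the paper's, and the early reasoning—vanishing monomials occur exactly when the union of root-to-leaf paths $\tree_S$ branches at a \emph{decision} point, and under $\beta$ the actions at distinct decision points are independent—is correct. But you then take a wrong turn: from ``branching at a decision point forces probability $0$'' you conclude ``hence $\tree_S$ must be a single root-to-$z$ path.'' This does not follow. $\tree_S$ may branch at an \emph{observation} point, which does not kill the event, because observations are resolved by the environment, not by the player. Since $S \subseteq \mc Z$ consists of terminal (leaf) nodes, two distinct elements of $S$ are never comparable, so your criterion ``$S$ is a chain'' is equivalent to $|S| \le 1$, and your algorithm would return $0$ for essentially every degree-$\ge 2$ monomial—even for monomials that are not identically zero on $\X$.

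Concretely, in the decision problem of Figure~\ref{fig:dp-example}, take $S = \{z_2, z_4\}$ (the monomial $x_2 x_4$), which is nonzero on $\X$: the paths to $z_2$ and $z_4$ share the edge from $A$ labelled ``0,'' then split at the observation point into $B$ and $C$. For $\vx = (0, \tfrac12, \tfrac12, \tfrac12, \tfrac12)$, independence under $\beta$ gives $\E_{\vy\sim\beta(\vx)}[\vy[z_2]\vy[z_4]] = \tfrac{\vx[A0]}{\vx[A]}\cdot\tfrac{\vx[B2]}{\vx[B]}\cdot\tfrac{\vx[C4]}{\vx[C]} = 1 \cdot \tfrac12 \cdot \tfrac12 = \tfrac14$, not $0$. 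The correct formula (which the paper uses) is: if some decision point has two distinct children $ja, ja' \preceq S$, then $m_S \equiv 0$; otherwise $\E_{\vy\sim\beta(\vx)}[m_S(\vy)] = \prod_{ja \preceq S: \vx[j]>0} \vx[ja]/\vx[j]$, a product over all decision-point--action pairs lying in $\tree_S$, not just those on the path to a single deepest node. Your telescoping collapse to $\vx[\max S]$ is valid only when $\tree_S$ is a genuine single path, i.e.\ $|S| = 1$. Fixing the proof means replacing your ``chain or zero'' dichotomy with this ``decision-branch or product-over-$\tree_S$'' dichotomy; the rest of your algorithm and complexity accounting then goes through.
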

\begin{proof}
    To compute $\E_{\vx' \sim \beta(\vx)} \phi(\vx')$, since $\phi$ is a polynomial, it suffices to compute $\E_{\vx' \sim \beta(\vx)} m(\vx')$ for multilinear monomials $m$ of degree at most $K$, that is, functions of the form $m_S(\vx) := \prod_{z \in S} \vx[z] $ where $S \subseteq \mc Z$ has size at most $k$. There are two cases. First, there are monomials that are clearly identically zero: in particular, if there are two nodes $ja, ja' \preceq S$ for $a \ne a'$, then $m_S \equiv 0$ because a player cannot play two different actions at $j$. For monomials that are not identically zero, we have
    \begin{align}
        \E_{\vx' \sim \beta(\vx)} \prod_{ja \in S} \vx[ja] = \prod_{\substack{ja \preceq S:  \vx[j] > 0}}\frac{\vx[ja]}{\vx[j]},
    \end{align}
    which is computable in time $O(kd)$. Thus, the overall time complexity is $O(kdN^k) \le N^{O(k)}$.
\end{proof}
The behavioral strategy map is in some sense the {\em canonical} strategy map: when one writes a tree-form strategy $\vx \in \co \X$ without further elaboration on what distribution $\Delta(\X)$ it is meant to represent, it is often implicitly or explicitly assumed to mean the behavioral strategy.

The behavioral strategy map has the unfortunate property that it usually outputs distributions of exponentially-large support; indeed, if $\vx \in \op{relint} \co \X$ then $\beta(\vx)$ is {\em full-}support.

The second example we propose, which we call a {\em Carath\'eodory map}, always outputs low-support distributions. In particular, for any $\vx \in \co \X$, Carath\'eodory's theorem on convex hulls guarantees that $\vx$ is a convex combination of $N$ pure strategies\footnote{Applying Carath\'eodory naively would give $N+1$ instead of $N$, but we can save $1$ because the tree-form strategy set is never full-dimensional as a subset of $\{0, 1\}^N$.} $\vx_1, \dots, \vx_N \in \X$. \citet[Theorem 3.9]{Groetschel81:ellipsoid} moreover showed that there exists an efficient algorithm for computing the appropriate convex combination. Thus, fixing some efficient algorithm for this computational problem, we define a {\em Carath\'eodory map} $\gamma : \co \X \to \Delta(\X)$ to be any consistent map that returns a distribution of support at most $N$. Given such a mapping, computing $\phi^\gamma(\vx)$ is easy: one simply writes $\vx = \sum_i \alpha_i \vx_i$ by computing $\gamma(\vx)$, and returns $\phi^\gamma(\vx) = \sum_i \alpha_i \phi(\vx_i)$. This only requires a $\poly(N)$-time computation of $\gamma$, and $N$ evaluations of the function $\phi$. As before, when $\phi$ is a degree-$k$ polynomial, the time complexity of computing $\phi^\gamma$ is bounded by $N^{O(k)}$.

\subsection{Efficiently computing fixed points in expectation}
Now let $\delta : \co \X \to \Delta(\X)$ be consistent and efficient. Consider the following algorithm. Given $\phi \in \Phi$, select $\vx_1 \in \co \X$ arbitrarily, and then for each $\ell > 1$ set $\vx_\ell := \phi^\delta(\vx_{\ell-1}) $. Finally, select $\pi := \E_{\ell \sim [L]} \delta(\vx_\ell) \in \Delta(\X)$ as the output distribution. By a telescopic cancellation, we have
\begin{align}
    \norm{\E_{\vx \sim \pi} [\phi(\vx) - \vx]}_\X = \frac{1}{L} \norm{ \sum_{\ell=1}^L \E_{\vx \sim \delta(\vx_\ell)} [\phi(\vx) - \vx]}_\X \leq \frac{1}{L}  \norm{  \E_{\vx \sim \delta(\vx_L)} [\phi(\vx) - \vx_1] }_\X \le \frac{2}{L},
\end{align}
as desired. As a result, applying \Cref{th:phi-regret}, we arrive at the following conclusion.

\begin{theorem}
    \label{theorem:fp-exp}
    Let $\MR_\Phi$ be an regret minimizer on $\Phi$ whose external regret after $T$ iterations is $\reg^{T}$ and whose per-iteration runtime is $R_1$, and assume that evaluating the extended map $\phi^\delta : \co \X \to \co \X$ takes time $R_2$. Then, for every $\eps > 0$, there is a learning algorithm on $\X$ whose $\Phi$-regret after $T$ iterations is at most $\reg^{T} + \eps$ and whose per-iteration runtime is $O(R_1 + R_2/\eps)$.
\end{theorem}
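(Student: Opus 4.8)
The plan is to realize the learning algorithm on $\X$ as the relaxed Gordon template (\Cref{alg:expfp}), driven by $\MR_\Phi$ on the outside and, on the inside, by the explicit fixed-point-in-expectation iteration written out just before the statement. Concretely: at round $t$ we query $\MR_\Phi$ for a deviation $\phi^{(t)} \in \Phi$; we run the inner iteration to obtain a distribution $\pi^{(t)} \in \Delta(\X)$; we output $\pi^{(t)}$; and, upon observing $\vu^{(t)}$, we return to $\MR_\Phi$ the linear functional $\phi \mapsto \ev{\vu^{(t)}, \E_{\vx \sim \pi^{(t)}} \phi(\vx)}$. By \Cref{th:phi-regret}, provided each $\pi^{(t)}$ is an $\eps$-expected fixed point of $\phi^{(t)}$ in the sense of \Cref{def:expected-fixedpoint}, the $\Phi$-regret after $T$ rounds is at most $\reg^T + \eps$. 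So the entire proof reduces to specifying the inner iteration together with a runtime accounting.

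For the inner iteration I would set $L \defeq \lceil 2/\eps \rceil$, pick any $\vx_1 \in \co\X$, put $\vx_\ell \defeq (\phi^{(t)})^\delta(\vx_{\ell-1})$ for $\ell > 1$ (where $\phi^\delta$ is the extended map associated to the fixed consistent, efficient map $\delta$), and return $\pi^{(t)} \defeq \E_{\ell \sim [L]} \delta(\vx_\ell)$. Correctness is exactly the telescoping computation displayed above the theorem: consistency of $\delta$ gives $\E_{\vx \sim \delta(\vx_\ell)}[\phi^{(t)}(\vx) - \vx] = (\phi^{(t)})^\delta(\vx_\ell) - \vx_\ell$, the sum over $\ell$ collapses, and $\norm{\E_{\vx \sim \pi^{(t)}}[\phi^{(t)}(\vx) - \vx]}_\X \le 2/L \le \eps$ since every point of $\co\X$ has $\norm{\cdot}_\X \le 1$. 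Two features of the setup are what make this work and are worth flagging: the iteration must be run over $\co\X$ through the \emph{extended} map $\phi^\delta$, not the raw $\phi$ --- applying $\phi$ itself can leave $\co\X$, as in the example after \Cref{def:beta} --- and it is consistency of $\delta$ that turns the whole sum into a single $O(1/L)$ term, so that only $L = \Theta(1/\eps)$ evaluations are needed.

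It remains to count. The per-round cost is the single call to $\MR_\Phi$, which by hypothesis costs $R_1$, plus the inner iteration, which performs $O(1/\eps)$ evaluations of the extended map $\phi^\delta$ at cost $R_2$ each, i.e.\ $O(R_2/\eps)$; these add to the claimed $O(R_1 + R_2/\eps)$. The only spot requiring a moment's care is the feedback handed back to $\MR_\Phi$: the functional $\phi \mapsto \ev{\vu^{(t)}, \E_{\vx \sim \pi^{(t)}} \phi(\vx)} = \frac{1}{L} \sum_\ell \ev{\vu^{(t)}, (\phi^{(t)})^\delta(\vx_\ell)}$ is built from the values $(\phi^{(t)})^\delta(\vx_\ell) = \vx_{\ell+1}$ already produced inside the loop, so it costs no more than the loop itself and does not inflate the $R_2/\eps$ term. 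There is no deeper obstacle here; the theorem is the synthesis of \Cref{th:phi-regret} with the elementary expected-fixed-point iteration, and the content is precisely the observation that this iteration runs in $O(1/\eps)$ extended-map evaluations.
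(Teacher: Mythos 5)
Your proposal is correct and reproduces the paper's own argument: \Cref{theorem:fp-exp} is exactly the synthesis of the telescoping iteration (yielding a $2/L$-expected fixed point in $L-1$ evaluations of $\phi^\delta$) with \Cref{th:phi-regret}, and $L = \Theta(1/\eps)$ gives the stated $O(R_1 + R_2/\eps)$ per-round cost. The only slip is in the last paragraph, where you write $(\phi^{(t)})^\delta$ in place of a general $\phi^\delta$ inside the feedback functional handed to $\MR_\Phi$; the point you need is that the functional $\phi \mapsto \frac{1}{L}\sum_{\ell}\ev{\vu^{(t)}, \phi^\delta(\vx_{\ell})}$ is \emph{represented} by the already-computed data $(\vu^{(t)}, \vx_1, \ldots, \vx_L)$, so the conclusion that forming the feedback costs no extra time still stands.
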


The above result provides a full black-box reduction from $\Phi$-regret minimization to external regret minimization on $\Phi$, with no need for the possibly-expensive computation of a fixed point. We note that the iterates of the algorithm will depend on the choice of $\delta$---for example, setting $\delta=\beta$ and setting $\delta=\gamma$ will produce different iterates.

\section{Low-degree regret on the hypercube}\label{sec:bayesian}

In this section, we let $\X$ be the hypercube $\{0, 1\}^N$. For the convenience of the reader, we first recall that the set of deviations $\Phi^k_{\mathrm{DT}}$ is defined as follows:
\begin{enumerate}
    \item The deviator observes an index $j_0 \in \range{N}$.
    \item For $i = 1, \dots, k$: The deviator selects an index $j_i \in \range{N}$, and observes $\vx[j_i]$. \label{step:obs}
    \item The deviator selects $a_0 \in \{ 0, 1\}$. \label{step:act}
\end{enumerate}

As we observed earlier, the above process describes a tree-form decision problem of size $N^{O(k)}$. In particular, terminal nodes in this decision problem are identified by the original index $j_0 \in [N]$, the queries $j_1, \dots, j_k \in [N]$, their replies $a_1, \dots, a_k \in \{0, 1\}$, and finally the action $a_0 \in \{0, 1\}$ that is played. Each tree-form strategy $\vq$ in this decision problem defines a function $\phi_\vq : \X \to \co \X$, which is computed by following the strategy $\vq$ through the decision problem. Namely,
\begin{align}
    \phi_\vq(\vx)[j_0] = \sum_{j_1, a_1, \dots, j_k, a_k} \vq[j_0, j_1, a_1, \dots, j_k, a_k, 1] \prod_{i=1}^k \vx[j_i, a_i]
\end{align}
where $\vx[j_i, a_i] = \vx[j_i]$ if $a_i=1$, and $1-\vx[j_i]$ if $a_i=0$. We see that $\phi_\vq$ is a degree-$k$ polynomial in $\vx$. 

We define $\Phi^k_{\mathrm{DT}}$ as the set of such functions $\phi_\vq$. The ``DT'' in the name $\Phi^k_{\mathrm{DT}}$ stands for \emph{decision tree}: the set of functions $\phi : \X \to \co \X$ that can be expressed in the above manner is precisely the set of functions representable as (randomized) depth-$k$ decision trees on $N$ variables.

For intuition, we mention the following special cases:
\begin{itemize}
    \item $\Phi_\text{DT}^0$ is the set of external deviations. 
    \item $\Phi_\text{DT}^1$ is the set of all single-query deviations, which \citet{Fujii23:Bayes} showed to be equivalent to the set of all linear deviations when $\X$ is a hypercube.
    \item $\Phi_\text{DT}^N$ is the set of all swap deviations.
\end{itemize}
Since $\vq \mapsto \phi_\vq(\vx)[i]$ is linear, it follows that $\vq \mapsto \ip{\vu, \phi_\vq(\vx)}$ is also linear for any given $\vu \in \R^n$. Therefore, a regret minimizer on $\Phi_\text{DT}^k$ can be constructed starting from any regret minimizer for tree-form decision problems, such as counterfactual regret minimization~\citep{Zinkevich07:Regret}.
\begin{proposition}
    There is a $N^{O(k)}$-time-per-round regret minimizer on $\Phi_\textup{DT}^k$ whose external regret is at most $\eps$ after $N^{O(k)}/\eps^2$ rounds.
\end{proposition}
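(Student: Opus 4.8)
The idea is to observe that external-regret minimization over $\Phi^k_{\mathrm{DT}}$ is, after an innocuous change of variables, ordinary external-regret minimization over a tree-form decision problem with \emph{linear} utilities, and then to invoke any textbook algorithm for the latter. Let $\Q$ denote the size-$N^{O(k)}$ tree-form decision problem described above, so that every element of $\Phi^k_{\mathrm{DT}}$ has the form $\phi_\vq$ for some (possibly randomized) tree-form strategy $\vq$ of $\Q$, with
\[
  \phi_\vq(\vx)[j_0] \;=\; \sum_{j_1, a_1, \dots, j_k, a_k} \vq[j_0, j_1, a_1, \dots, j_k, a_k, 1]\,\prod_{i=1}^k \vx[j_i, a_i].
\]
As already remarked, each coordinate $\vq \mapsto \phi_\vq(\vx)[j_0]$---and hence $\vq \mapsto \ip{\vu, \phi_\vq(\vx)}$ for any fixed $\vu$ and $\vx$---is linear in $\vq$. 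Thus the utility functional $\phi \mapsto \ip{\vu^{(t)}, \phi(\vx^{(t)})}$ that an external-regret minimizer on $\Phi^k_{\mathrm{DT}}$ receives in round $t$ corresponds, in the $\vq$-parametrization, to the linear utility $\vq \mapsto \ip{\vec{g}^{(t)}, \vq}$ whose coefficient at the terminal node $(j_0, j_1, a_1, \dots, j_k, a_k)$ with last action $a_0 = 1$ is $\vu^{(t)}[j_0]\prod_{i=1}^k \vx^{(t)}[j_i, a_i]$ (and is $0$ at the twin node with $a_0 = 0$). Each coefficient takes $O(k)$ arithmetic operations, so the vector $\vec{g}^{(t)}$, of dimension $N^{O(k)}$, can be assembled in time $N^{O(k)}$.

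With this reduction in hand, I would simply run an off-the-shelf regret minimizer for tree-form decision problems on $\Q$---say, counterfactual regret minimization~\citep{Zinkevich07:Regret} with regret matching at each decision point, or any of its modern variants---fed the utilities $\vec{g}^{(t)}$. Since $\phi_\vq(\vx) \in \co\X$ for every $\vq$ and $\vu^{(t)} \in \mc U$, the payoff $\ip{\vec{g}^{(t)}, \vq} = \ip{\vu^{(t)}, \phi_\vq(\vx^{(t)})}$ lies in $[-1, 1]$ for every pure strategy $\vq$ of $\Q$, so the relevant payoff range is $O(1)$; the standard analysis then yields (time-)average external regret $N^{O(k)}/\sqrt{T}$ over the treeplex $\co\Q$ after $T$ rounds (recall $|\Q| = N^{O(k)}$ and the branching factor is at most $N$), at a per-round cost of one traversal of $\Q$ to propagate counterfactual values and update the local minimizers---i.e.\ $O(|\Q|) = N^{O(k)}$---plus the $N^{O(k)}$ spent building $\vec{g}^{(t)}$. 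Finally, since $\sum_t \ip{\vu^{(t)}, \phi_\vq(\vx^{(t)})}$ is linear in $\vq$, the maximum over $\phi \in \Phi^k_{\mathrm{DT}}$ in the definition of external regret is attained at a vertex of $\co\Q$, i.e.\ at a pure strategy of $\Q$; hence the external regret of this algorithm against all of $\Phi^k_{\mathrm{DT}}$ equals its external regret over $\co\Q$, and choosing $T = N^{O(k)}/\eps^2$ makes it at most $\eps$.

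I do not expect any genuine obstacle. The single place where the structure of $\Phi^k_{\mathrm{DT}}$ is used is the linearity of $\vq \mapsto \ip{\vu, \phi_\vq(\vx)}$, which is precisely what (i) turns the learning problem into a plain tree-form instance to which \citet{Zinkevich07:Regret} applies and (ii) lets us equate the $\Phi^k_{\mathrm{DT}}$-comparator with the treeplex comparator. Everything else is bookkeeping: confirming that $\Q$ has $N^{O(k)}$ nodes and that $\vec{g}^{(t)}$ can be produced within that budget, both of which are immediate from the description of the decision problem above.
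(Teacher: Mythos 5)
Your proof takes essentially the same route as the paper: exploit the linearity of $\vq \mapsto \ip{\vu, \phi_\vq(\vx)}$ to recast the problem as ordinary external-regret minimization over the size-$N^{O(k)}$ tree-form decision problem $\Q$ and run CFR on it. The paper leaves the gradient construction, payoff-range bound, and vertex-attainment argument implicit, but you have simply filled in the standard bookkeeping that the paper elides.
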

Thus, combining with \Cref{prop:low-degree-beh} and \Cref{theorem:fp-exp}, we immediately obtain a $\Phi_\textup{DT}^k$-regret minimizer with the following complexity.

\begin{corollary}
    \label{cor:Kmefd}
    There is a $N^{O(k)}/\eps$-time-per-round regret minimizer on $\X$ whose $\Phi_\textup{DT}^k$-regret is at most $\eps$ after $N^{O(k)}/\eps^2$ rounds.
\end{corollary}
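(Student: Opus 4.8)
The plan is to combine the three ingredients already developed in the excerpt. The target statement, \Cref{cor:Kmefd}, asserts the existence of an efficient $\Phi^k_{\mathrm{DT}}$-regret minimizer on $\X = \{0,1\}^N$; it is an immediate composition of the preceding \Cref{prop:low-degree-beh} (the behavioral strategy map $\beta$ is efficient on degree-$k$ polynomials, with evaluation time $N^{O(k)}$), the unnamed \Cref{prop:low-degree-beh}-adjacent proposition (there is an $N^{O(k)}$-time-per-round external-regret minimizer $\MR_\Phi$ on $\Phi^k_{\mathrm{DT}}$ with external regret $\le \eps$ after $N^{O(k)}/\eps^2$ rounds), and \Cref{theorem:fp-exp} (the black-box reduction turning an external-regret minimizer on $\Phi$ plus an efficient consistent map into a $\Phi$-regret minimizer, with no fixed-point computation). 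So the proof is essentially a bookkeeping exercise tracking the parameters through \Cref{theorem:fp-exp}.

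\textbf{Proof.}
Apply \Cref{theorem:fp-exp} with $\X = \{0,1\}^N$, $\Phi = \Phi^k_{\mathrm{DT}}$, and $\delta = \beta$ the behavioral strategy map. We must verify the three inputs that \Cref{theorem:fp-exp} requires. First, $\beta$ is consistent (as noted immediately after \Cref{def:beta}). Second, every $\phi \in \Phi^k_{\mathrm{DT}}$ is, by the formula $\phi_\vq(\vx)[j_0] = \sum_{j_1,a_1,\dots,j_k,a_k} \vq[j_0,j_1,a_1,\dots,j_k,a_k,1] \prod_{i=1}^k \vx[j_i,a_i]$, a degree-$k$ multilinear polynomial that is a sum of at most $N \cdot (2N)^k = O(N^{k+1})$ terms; hence \Cref{prop:low-degree-beh} gives that $\phi^\beta : \co\X \to \co\X$ can be evaluated in time $R_2 = N^{O(k)}$. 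Third, by the preceding proposition there is an external-regret minimizer $\MR_{\Phi^k_{\mathrm{DT}}}$ with per-iteration runtime $R_1 = N^{O(k)}$ and external regret $\reg^T \le \eps/2$ after $T = N^{O(k)}/\eps^2$ rounds. Plugging these into \Cref{theorem:fp-exp} with accuracy parameter $\eps/2$ yields a learning algorithm on $\X$ whose $\Phi^k_{\mathrm{DT}}$-regret after $T$ iterations is at most $\reg^T + \eps/2 \le \eps$, and whose per-iteration runtime is $O(R_1 + R_2/\eps) = N^{O(k)}/\eps$. Taking $T = N^{O(k)}/\eps^2$ gives the claim. \qed

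There is no real obstacle here --- the statement is a corollary in the literal sense --- but the one place to be slightly careful is that \Cref{prop:low-degree-beh} is stated for $\phi$ written as a polynomial with $O(N^k)$ terms, whereas $\phi_\vq$ as written has $O(N^{k+1})$ terms (because of the leading index $j_0$ and the $a_i \in \{0,1\}$ choices); this only changes constants in the exponent and is absorbed into the $N^{O(k)}$ notation, but one should note it rather than silently invoke the proposition. The only other thing worth flagging is that the $1/\eps$ factor in the per-iteration runtime comes, via \Cref{theorem:fp-exp}, from the $L = \Theta(1/\eps)$ iterations of the telescoping fixed-point-in-expectation procedure of \Cref{sec:exp-fp-Gordon}, each of which costs one evaluation of $\phi^\beta$; this is the step that replaces the potentially $\PPAD$-hard exact fixed-point computation, and is precisely why the corollary is unconditional.
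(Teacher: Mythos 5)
Your proof is correct and is exactly the paper's intended argument: the paper states the corollary as an immediate combination of the unnamed proposition giving an $N^{O(k)}$-time external-regret minimizer on $\Phi^k_{\mathrm{DT}}$, \Cref{prop:low-degree-beh}, and \Cref{theorem:fp-exp}. Your parameter bookkeeping and the side remark about the term count in $\phi_\vq$ being $O(N^{k+1})$ rather than $O(N^k)$ (harmlessly absorbed into the exponent) are accurate refinements of what the paper leaves implicit.
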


Next, we relate depth-$k$ decision trees to low-degree polynomials. Let $\Phi_\textup{poly}^k$ be the set of degree-$k$ polynomials $\phi : \X \to \X$. We appeal to a result from the literature on Boolean analysis, recalled below.

\booleananal*

In particular, $\Phi_\textup{poly}^k \subseteq \Phi_\textup{DT}^{2k^3}$. \Cref{cor:Kmefd} thus also implies a $\Phi_\textup{poly}^k$-regret minimizer:
\begin{corollary}

    Let $\X = \{0, 1\}^N$. There is an $N^{O(k^3)}/\eps$-time-per-round regret minimizer on $\X$ whose $\Phi_\textup{poly}^k$-regret is at most $\eps$ after $N^{O(k^3)}/\eps^2$ rounds.
\end{corollary}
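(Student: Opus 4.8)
The plan is to derive the corollary by combining \Cref{th:boolean} with \Cref{cor:Kmefd} through the single containment $\Phi_\textup{poly}^k \subseteq \Phi_\textup{DT}^{2k^3}$; all the real content sits in those two black boxes, so the argument is short.

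\textbf{Step 1: the containment.} Fix an arbitrary $\phi \in \Phi_\textup{poly}^k$. Since $\phi$ maps $\X = \{0,1\}^N$ into $\X$, each output coordinate $\vx \mapsto \phi(\vx)[i]$ is a Boolean function $\{0,1\}^N \to \{0,1\}$, and by hypothesis its unique multilinear representation has degree at most $k$. Applying \Cref{th:boolean} to each of the $N$ coordinates yields decision trees $T_1, \dots, T_N$, each of depth at most $2k^3$, with $T_i$ computing $\phi(\cdot)[i]$. This family packages exactly into a member of $\Phi_\textup{DT}^{2k^3}$: recalling the definition of $\Phi_\textup{DT}^{k'}$, the deviator first observes the index $j_0 = i$ (Step 1), and may therefore use a \emph{different} tree per output coordinate; it then simulates $T_i$, which adaptively queries at most $2k^3$ coordinates $\vx[j]$ (Step 2, with $k$ replaced by $2k^3$) and finally outputs the bit $T_i(\vx) = \phi(\vx)[i]$ (Step 3). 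Hence $\phi = \phi_\vq$ for the tree-form strategy $\vq$ in the depth-$2k^3$ decision-tree deviation problem corresponding to this family, so $\phi \in \Phi_\textup{DT}^{2k^3}$. (The trees from \Cref{th:boolean} are deterministic, so $\phi_\vq$ in fact lands in $\X$, matching the codomain of members of $\Phi_\textup{poly}^k$; in any event $\Phi_\textup{DT}^{2k^3}$ admits randomized trees and codomain $\co\X$, so nothing is lost.)

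\textbf{Step 2: conclude.} Since $\Phi$-regret is a maximum over $\phi \in \Phi$, it is monotone under inclusion of deviation sets; thus any learner achieving $\Phi_\textup{DT}^{2k^3}$-regret at most $\eps$ also achieves $\Phi_\textup{poly}^k$-regret at most $\eps$ on the same trajectory. Instantiating \Cref{cor:Kmefd} with parameter $2k^3$ in place of $k$ gives a regret minimizer on $\X = \{0,1\}^N$ with per-round running time $N^{O(2k^3)}/\eps = N^{O(k^3)}/\eps$ whose $\Phi_\textup{DT}^{2k^3}$-regret is at most $\eps$ after $N^{O(2k^3)}/\eps^2 = N^{O(k^3)}/\eps^2$ rounds, which is exactly the claimed bound.

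\textbf{Main obstacle.} There is essentially no substantial obstacle: the only care needed is the bookkeeping in Step 1 — verifying that a per-coordinate collection of depth-$2k^3$ decision trees is literally an element $\phi_\vq$ of $\Phi_\textup{DT}^{2k^3}$ (which hinges on the deviator observing $j_0$ \emph{before} making any query, so that a distinct tree may be used for each output coordinate), and checking that the notion of ``degree'' used to define $\Phi_\textup{poly}^k$ coincides with the degree of the multilinear representation over $\{0,1\}$ to which \Cref{th:boolean} applies.
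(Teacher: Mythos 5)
Your proof is correct and follows exactly the same route as the paper: apply \Cref{th:boolean} coordinate-wise to obtain $\Phi_\textup{poly}^k \subseteq \Phi_\textup{DT}^{2k^3}$, then instantiate \Cref{cor:Kmefd} with parameter $2k^3$. The paper states the containment more tersely; your Step 1 fills in the (correct, if routine) packaging argument that the deviator's initial observation of $j_0$ is what lets one assemble a per-coordinate family of depth-$2k^3$ trees into a single element of $\Phi_\textup{DT}^{2k^3}$.
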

It is reasonable to ask whether the above result generalizes to polynomials $\phi : \X \to \co \X$. Indeed, when $k \le 1$ or $k = N$, every degree-$k$ polynomials $\phi : \X \to \co \X$ can be written as a convex combination of degree-$k$ polynomials $\phi : \X \to \X$, even for arbitrary tree-form decision problems.\footnote{For degree $0$ and $N$ this is trivial; for degree 1 it is due to \citet{Zhang24:Mediator}.} However, this is not generally true. A brute-force search shows that the polynomial $\phi : \{0, 1\}^4 \to [0, 1]^4$ given by 
\begin{align}
    \phi(x_1, x_2, x_3, x_4) = x_1 - x_1 x_2 - \frac12 x_1 x_3 + \frac12 x_2 x_3 + \frac12 x_3 x_4
\end{align}
is quadratic, but it is not a convex combination of quadratics whose range is $\{0, 1\}^4$. Perhaps more glaringly, if one could efficiently represent the set of quadratic functions $\phi : \{ 0, 1\}^N \to [0, 1]^N$, then one could in particular {\em decide} whether a given quadratic function $\phi : \{0, 1\}^N \to \R^N$ has range $[0, 1]^N$. But this is a \coNP-complete problem.

\section{Extensive-form games}
\label{sec:efgs}

The goal of this section is to extend the results in the previous section to the {\em extensive-form} setting, that is, to generalize them to all tree-form decision problems. 

\subsection{Interleaving decision problems}

We first recall the operations of merging decision problems that will be very useful as notation in the subsequent discussion. In particular, given two decision problems $\X$ and $\Y$ with node sets $\S_1$ and $\S_2$ respectively, we restate the following definitions previously introduced in~\Cref{sec:techncal}.

\dual*

\interleaving*

It follows immediately from definitions that $\bar{\bar \X} = \X$, and $\otimes$ is associative and commutative. The name and notation for the dual is inspired by the observation that $\ip{\vx, \vy} = 1$ for all $\vx \in \X$ and $\vy \in \bar\X$: indeed, the component-wise product $\vx[z] \vy[z]$ is exactly the probability that one reaches terminal node $z$ by following strategy $\vx$ at $\X$'s decision points and $\vy$ at $\X$'s observation points. We also define the notation $\X^{\otimes k} := \X \otimes \dots \otimes \X$, where there are $k$ copies of $\X$.

It is important to observe that in $\X \otimes \Y$, the same state $(s_1, s_2)$ can be reachable through possibly exponentially many paths. This is because the learner may choose to interleave actions in $\X$ with actions in $\Y$ in any order, which means that a state $(s_1, s_2)$ corresponds to exponentially many histories in $\X \otimes \Y$. For that reason, we have to distinguish between {\em histories} and {\em states}.

In light of the above discussion, it is inefficient to represent $\X \otimes \Y$ as a tree. Indeed, \citet{Zhang23:Team_DAG} studied {\em DAG}-form decision problems, and showed that regret minimization on them is possible when the underlying DAG has some natural properties. We state here an immediate consequence of their analysis, which we will use as a black box. Intuitively, the below result states that, as long as utility vectors also only depend on the (terminal) state $\vs$ that is reached, regret minimization on an arbitrary interleaving of decision problems $\X_1 \otimes \dots \otimes \X_k$ is possible, and the complexity depends only on the number of states.

\begin{theorem}[Consequence of \citealp{Zhang23:Team_DAG}, Corollary A.4]\label{th:dag-cfr}
Let $\X := \X_1 \otimes \dots \otimes \X_k$, where $\X_i$ has terminal node set $\mc Z_i$. Let $\mc Z := \mc Z_1 \times \dots \times \mc Z_k$ be the set of terminal states for $\X$. For each such terminal state $z\in \mc Z$, let $V(z)$ be the set of histories of $\X$ whose state is $z$. Define the {\em projection} $\pi : \X \to \R^{\mc Z}$ by $$\pi(\vx)[z] = \sum_{v \in V(z)} \vx[v].$$ Then there exists an efficient regret minimizer on $\pi(\X) := \{ \pi(\vx) : \vx \in \X \} \subset \R^\Z$: its per-round complexity is $\poly(|\mc Z|)$, and its regret is $\eps$ after $\poly(|\mc Z|)/\eps^2$ rounds. 
\end{theorem}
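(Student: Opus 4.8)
The proof is, in spirit, the one the paper advertises: \textbf{an immediate application of the machinery of \citet{Zhang23:Team_DAG}}. The plan is to exhibit the interleaving $\X := \X_1 \otimes \cdots \otimes \X_k$, after collapsing all histories that share a common state, as a DAG-form decision problem of exactly the kind handled by their Corollary A.4; to check that its size is $\poly(|\mc Z|)$; and to read off that the feasible region of their regret minimizer is exactly $\pi(\X)$. Almost all the work is then bookkeeping --- verifying that the hypotheses of \citet{Zhang23:Team_DAG} are met by the construction of $\otimes$.

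First I would build the DAG $G$ explicitly. Its vertices are the reachable states $\vs = (s_1, \dots, s_k) \in \S_1 \times \cdots \times \S_k$, with source $(\Root, \dots, \Root)$; each vertex is labeled terminal, observation, or decision according to the three cases in the definition of $\X \otimes \Y$ (extended to $k$ factors by associativity), and its out-edges are those prescribed by the transition rules there. The key fact that makes this well defined --- and that is also the structural property \citet{Zhang23:Team_DAG} need --- is that the label of a state, together with the set of decision-point actions that must have been played to reach it, depends only on the state and not on which of the (possibly exponentially many) histories was used: to reach $\vs$ one must have played exactly the actions on the $\Root \to s_i$ path in each $\X_i$, and whether $s_i$ is a decision point of $\X_i$ is intrinsic to $s_i$, independent of the interleaving order. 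The one cosmetic technicality --- two observation coordinates resolving simultaneously, which by alternation can occur only at the source --- is handled by resolving them one coordinate at a time through an auxiliary observation vertex, changing $|G|$ by only a constant factor and not changing the feasible set.

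Next, a size bound: since decision and observation points alternate and decision points have branching factor at least $2$, we have $|\S_i| = O(|\mc Z_i|)$, hence $|G| \le \prod_{i=1}^k |\S_i| \le c^{k}\prod_{i=1}^k |\mc Z_i| = c^{k} |\mc Z|$ for an absolute constant $c$; since $|\mc Z| \ge 2^k$ (after dropping any trivial factor $\X_i$), the factor $c^{k}$ is $|\mc Z|^{O(1)}$, so $|G| = \poly(|\mc Z|)$. Moreover, the sequence-form feasible set associated with $G$ --- on which the regret minimizer of \citet{Zhang23:Team_DAG} operates --- is by construction exactly $\pi(\X) = \{\pi(\vx): \vx \in \X\} \subseteq \R^{\mc Z}$, since $\pi(\vx)[z]$ is precisely the total reach probability the strategy $\vx$ assigns to the terminal state $z$, and any linear objective $\vu \in \R^{\mc Z}$ on $\pi(\X)$ pulls back to a utility on $\X$ depending only on the terminal state --- exactly the class allowed in the statement. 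Applying \citet[Corollary A.4]{Zhang23:Team_DAG} to $G$ then yields a regret minimizer over $\pi(\X)$ with per-round running time $\poly(|G|) = \poly(|\mc Z|)$ and external regret at most $\eps$ after $\poly(|G|)/\eps^2 = \poly(|\mc Z|)/\eps^2$ rounds, which is exactly the claimed guarantee.

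The only real obstacle I anticipate is matching the precise well-formedness hypotheses under which \citet{Zhang23:Team_DAG} state their corollary --- acyclicity of $G$, the existence for each vertex of a single ``partial strategy'' consistent with every path reaching it, and the objective factoring through terminal vertices --- to the construction above. But each of these follows from the single observation that the partial strategy played to reach a state is determined by that state, so once the dictionary between ``histories versus states'' in $\X_1 \otimes \cdots \otimes \X_k$ and ``tree unrolling versus DAG'' in their framework is fixed, the verification is routine and carries no additional mathematical content.
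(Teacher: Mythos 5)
Your proposal is correct and takes exactly the route the paper intends: \Cref{th:dag-cfr} is stated as a direct consequence of \citet{Zhang23:Team_DAG}, Corollary A.4, used as a black box, and you have simply carried out the bookkeeping that the paper elides --- building the state DAG on $\S_1 \times \dots \times \S_k$, bounding its size by $\poly(\card{\mc Z})$ via $\card{\S_i} = O(\card{\mc Z_i})$ and $\card{\mc Z} \ge 2^k$, and identifying the DAG's sequence-form polytope with $\pi(\X)$. The single substantive structural fact you isolate --- that the partial strategy required to reach an interleaving state is determined by the state alone, independent of the interleaving order --- is indeed what makes the citation well-posed, so nothing is missing.
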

Whenever we speak of regret minimizing on interleavings, it will always be the case that utility vectors depend only on the state, so we will always be able to apply the above result. We will call vectors in $\pi(\X)$ {\em reduced strategies}.

Before proceeding, it is instructive to describe in more detail a result of \citet{Zhang24:Mediator}, which we will also use later, in the language of this section. Let $\X$ and $\Y$ be any two decision problems with terminal node sets $\mc Z_1$ and $\mc Z_2$ respectively. A reduced strategy $\vq \in \pi(\X \otimes \bar\Y)$ induces a linear map $\phi_\vq : \Y \to \co \X$, given by 
\begin{align}
    \phi_\vq(\vy)[z_1] = \sum_{z_2 \in \mc Z_2} \vq[z_1, z_2] \vy[z_2].
\end{align}
It is instructive to think, as \citet{Zhang24:Mediator} detailed extensively in their paper, about what strategies $\vq \in \pi(\X \otimes \bar \Y)$ represent, and why they induce the linear maps $\phi_\vq$. Decision points $j$ in $\Y$ become observation points in $\X\otimes \bar\Y$---at these observation points, the player should observe the action taken by strategy $\vy$ at $j$. The player in $\X \otimes \bar \Y$ is given the ability to {\em query} the strategy $\vy$ by {\em taking the role of the environment} in $\Y$, while the environment, holding a strategy $\vy \in \Y$, takes the role of the player and answers decision point queries with the actions that it plays. The player then uses these queries to inform how it plays in the true decision problem $\X$. This is the sense in which $\vq$ induces a map $\phi_\vq$: the output $\phi_\vq(\vy)$ is precisely the strategy that would be played if the environment in $\X\otimes \bar\Y$ answers the queries by consulting the strategy $\vy$. We will call a device that answers queries using strategy $\vy$ a {\em mediator holding strategy $\vy$}. \citet{Zhang24:Mediator} then showed the following fact, which we will use critically and repeatedly in the rest of this paper.

\begin{theorem}[\citealp{Zhang24:Mediator}, Theorem A.2]\label{th:linear}
    Every linear map $\phi : \Y \to \co \X$ is induced by some reduced strategy $\vq \in \pi(\X \otimes \bar \Y)$.
\end{theorem}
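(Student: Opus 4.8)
I will prove the statement by induction on $|\mc Z_1|+|\mc Z_2|$, branching on the type of the root of $\X$ or of $\Y$. Three elementary facts are used throughout: that $\phi_\vq(\vy)[z_1]=\sum_{z_2}\vq[z_1,z_2]\,\vy[z_2]$; that a mixture of reduced strategies is again a reduced strategy; and that prepending a single action, or a single (environment) observation, to a reduced strategy of a subproblem of $\X\otimes\bar\Y$ yields a reduced strategy of $\X\otimes\bar\Y$. The base case is $\Y$ a single node: then $\co\Y$ is a point, $\phi$ is the constant map onto some $\vx^\star\in\co\X$, and since $\X\otimes\bar\Y\cong\X$ the reduced strategy $\vq:=\vx^\star$ induces $\phi$.

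Two inductive cases will be easy. If the root of $\X$ is an observation point with subtrees $\X_e$ below its children, then $\co\X=\prod_e\co\X_e$, so $\phi$ splits coordinatewise into linear maps $\phi_e:\Y\to\co\X_e$; the inductive hypothesis yields $\vq_e\in\pi(\X_e\otimes\bar\Y)$ inducing $\phi_e$, and gluing them under the (observation) root of $\X\otimes\bar\Y$ produces the desired $\vq$. Otherwise, if the root of $\Y$ is a decision point with actions $a$ and subtrees $\Y_a$, set $\phi_a:\Y_a\to\co\X$, $\phi_a(\vy_a):=\phi(\iota_a\vy_a)$, where $\iota_a\vy_a\in\Y$ plays $a$ at the root of $\Y$ and then $\vy_a$; this is linear into $\co\X$, so the inductive hypothesis gives $\vq_a\in\pi(\X\otimes\bar\Y_a)$ inducing $\phi_a$. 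The root of $\X\otimes\bar\Y$ is now an observation point (the dual of $\Y$'s root) at which the mediator reveals $\vy$'s action $a$; gluing the $\vq_a$ there yields $\vq$, and $\phi_\vq=\phi$ follows because, writing $\lambda_a$ for the mass $\vy$ places on action $a$ at $\Y$'s root, one has $\vy|_{\mc Z_2^a}=\lambda_a\,\vy_a$ with $\vy_a\in\co\Y_a$ and $\sum_a\lambda_a=1$.

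The remaining case --- root of $\X$ a decision point, root of $\Y$ an observation point with subtrees $\Y_1,\dots,\Y_m$, so $\co\Y=\prod_c\co\Y_c$ via $\vy|_{\mc Z_2^c}=\vy_c$ --- is the crux. In $\X\otimes\bar\Y$ the player can ``enter'' only one subproblem $\bar\Y_c$ per play, so I need a decomposition $\phi=\sum_c\rho_c\phi^{(c)}$ with $\rho_c\ge 0$, $\sum_c\rho_c=1$ and each $\phi^{(c)}:\Y_c\to\co\X$ linear; then the inductive hypothesis gives $\vq^{(c)}\in\pi(\X\otimes\bar\Y_c)$, and the strategy ``choose $c$ with probability $\rho_c$ at the root of $\bar\Y$, then follow $\vq^{(c)}$'' is a reduced strategy of $\X\otimes\bar\Y$ inducing $\phi$ (using $\vy|_{\mc Z_2^c}=\vy_c$ and closure under mixing). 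To obtain the decomposition I would write $\phi(\vy)=\phi_0+\sum_c\Delta_c(\vy_c)$ for a reference vertex $\vy^0$ and affine $\Delta_c$ with $\Delta_c(\vy_c^0)=0$. Since the $\vy_c$ vary independently, $\phi(\co\Y)\subseteq\co\X$ is equivalent to the ``width budget'' $\ell(\phi_0)+\sum_c\max_{\vy_c}\ell(\Delta_c(\vy_c))\le\max_{\co\X}\ell$ for every linear functional $\ell$, in particular for every facet normal of $\co\X$. I then use this budget to pick weights $\rho_c>0$ summing to $1$ and intercepts $u^{(c)}\in\co\X$ with $\sum_c\rho_c u^{(c)}=\phi_0$ and $u^{(c)}+\tfrac1{\rho_c}\Delta_c(\co\Y_c)\subseteq\co\X$, and set $\phi^{(c)}(\vy_c):=u^{(c)}+\tfrac1{\rho_c}\Delta_c(\vy_c)$.

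\textbf{Main obstacle.} This last geometric step is the heart of the argument. It is transparent when $\co\X$ is a segment: the width budget reads $\sum_c\mathrm{width}(\Delta_c(\co\Y_c))\le 1$, so one can take $\rho_c=\mathrm{width}(\Delta_c(\co\Y_c))$ (padding to sum to $1$), and the intercept condition then holds automatically because the two budget inequalities sandwich $\phi_0$ exactly as needed. For general decision-rooted $\co\X$ one must instead show that $\phi_0$ lies in the Minkowski sum $\sum_c\rho_c\co\X^{(c)}$ of suitably shrunken copies $\co\X^{(c)}$ of $\co\X$; this follows facet-by-facet from the width identity $\ell(\phi_0)+\sum_c\max_{\vy_c}\ell(\Delta_c(\vy_c))=\max_{\vy}\ell(\phi(\vy))$, but requires care that each $\co\X^{(c)}$ is nonempty and that its defining inequalities stay non-redundant. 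Finally, it is essential that observation points of $\X$ be peeled before this case is reached: for observation-rooted $\co\X$ the decomposition can genuinely fail --- e.g.\ $\phi(p_1,p_2)=(p_1,p_2)$ on $\co\X=[0,1]^2$ is not $\sum_c\rho_c\phi^{(c)}$ for any $\phi^{(c)}$ valued in $\co\X$ --- which is exactly why the player must be permitted to branch on $\X$'s observations before committing to which mediator sub-query to issue.
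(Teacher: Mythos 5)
The paper itself does not prove this statement; it cites it (as Theorem~A.2) from Zhang, Farina, and Sandholm (2024), so there is no in-paper argument to compare against. Evaluated on its own merits, your inductive scheme is a reasonable route, and the base case and the two ``easy'' cases (observation root of $\X$, decision root of $\Y$) are handled correctly. However, the ``crux'' case contains a genuine error: the decomposition $\phi = \sum_c \rho_c \phi^{(c)}$ with each $\phi^{(c)} : \Y_c \to \co\X$ does \emph{not} in general exist when the root of $\X$ is a decision point, so the width-budget argument you sketch is not merely delicate --- it fails.

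Here is a concrete counterexample. Let $\X$ have a root decision point with actions $a_1, a_2$: action $a_1$ leads to a terminal $z_1$; action $a_2$ leads to an observation point with two branches, each a two-action decision point (terminals $z_{11}, z_{12}$ and $z_{21}, z_{22}$). Let $\Y$ have an observation root with two children, each a two-action decision point (terminals $w_{11}, w_{12}$ and $w_{21}, w_{22}$), so $\co\Y \cong \Delta_2 \times \Delta_2$. Define $\phi$ by $\phi(\vy)[z_1] = 0$ and $\phi(\vy)[z_{cj}] = \vy[w_{cj}]$ for $c, j \in \{1,2\}$. One checks directly that $\phi$ maps $\co\Y$ linearly into $\co\X$, and it is clearly induced by a $1$-mediator deviation: play $a_2$, observe which branch of $\X$ you are in, query the matching mediator subtree, and follow the recommendation at the corresponding decision point of $\X$. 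Yet in your decomposition, the $\vy_1$-dependence of $\phi(\vy)[z_{11}]$ must live entirely in $\rho_1 \phi^{(1)}$; since $\phi(\vy)[z_{11}] = \vy[w_{11}]$ has range of length $1$ while $\phi^{(1)}(\vy_1)[z_{11}] \in [0,1]$ always, this forces $\rho_1 \ge 1$, and symmetrically $\rho_2 \ge 1$, contradicting $\rho_1 + \rho_2 = 1$. So the width-budget \emph{necessary} condition you verify is not \emph{sufficient}, and the geometric lemma you flag as the heart of the argument is simply false in the generality you require.

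The missing idea is that at the root state of $\X \otimes \bar\Y$, when both $\X$'s root and $\bar\Y$'s root are decision points, the player may act in $\X$ \emph{before} committing to a mediator subtree --- and in the example above it \emph{must}, because only after playing $a_2$ and observing $\X$'s branch does the player learn which mediator subtree is relevant. Your decomposition captures only those strategies whose very first move is a mediator query. Peeling a single observation layer of $\X$ in your case one is not enough: once $\X$'s root is a decision point, the deeper observation points of $\X$ still obstruct the ``commit immediately'' decomposition. A correct inductive argument would have to allow a joint decomposition over both the actions at $\X$'s root and the mediator subtree choices, recursively interleaved with observation branching in $\X$; establishing that geometric fact is substantially harder than what you sketch and is, I believe, the actual content of the cited result.
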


\subsection{Efficient low-degree swap-regret minimization in extensive-form games}

\begin{figure}[!t]
\centering
\forestset{
default preamble={for tree={
            parent anchor=south, child anchor=north, anchor=center, s sep=35pt
    }},
dec/.style={draw, fill=black},
obs/.style={draw},
el/.style n args={3}{edge label={node[midway, fill=white, inner sep=1pt, draw=none, font=\sf\footnotesize] {\action{#1}{#2}{#3}}}},
d/.style={edge={ultra thick, draw={p#1color}}},
state/.style n args={6}{label={[label distance=-3pt]left:{\tiny\action1{#1}{#2},\action2{#3}{#4},\action2{#5}{#6}}}},
}
\begin{forest}
        [,obs,state={A}{}{A}{}{A}{}
            [,el={2}{0}{},dec,state={A}{}{0}{}{0}{}
                [...,el={1}{0}{}]
                [,el={1}{1}{}]
                [,el={2}{B}{},d=0,obs,state={A}{}{B}{}{0}{}
                    [,el={2}{2}{},dec,state={A}{}{2}{}{0}{}
                        [,el={1}{0}{},d=0,obs,state={0}{}{2}{}{0}{}
                            [,el={1}{B}{},dec,state={B}{}{2}{}{0}{}
                                [,el={1}{2}{}]
                                [,el={1}{3}{}]
                                [,el={2}{C}{},d=0,obs,state={B}{}{2}{}{C}{}
                                    [,el={2}{4}{},dec,state={B}{}{2}{}{4}{}
                                        [,el={1}{2}{},d=0]
                                        [,el={1}{3}{}]
                                    ]
                                    [,el={2}{5}{},dec,state={B}{}{2}{}{5}{}
                                        [,el={1}{2}{}]
                                        [,el={1}{3}{},d=0]
                                    ]
                                ]
                            ]
                            [,el={1}{C}{},dec,state={C}{}{2}{}{0}{}
                                [,el={1}{4}{},d=0]
                                [,el={1}{5}{}]
                                [...,el={2}{C}{}]
                            ]
                        ]
                        [,el={1}{1}{}]
                        [...,el={2}{C}{}]
                    ]
                    [,el={2}{3}{},dec,state={A}{}{3}{}{0}{}
                        [...,el={1}{0}{}]
                        [,el={1}{1}{},d=0]
                        [...,el={2}{C}{}]
                    ]
                ]
                [...,el={2}{C}{}]
            ]
            [,el={2}{1}{},dec,state={A}{}{1}{}{1}{}
                [...,el={1}{0}{}]
                [,el={1}{1}{},d=0]
            ]
        ]
    \end{forest}
    \caption{A representation of the deviation $\phi(\vx) = (x_1+x_3, x_2x_4, x_2x_5, x_2, 0)$  (discussed in \Cref{sec:consistent-map}) in the decision problem $\X$ in \Cref{fig:dp-example}, as a strategy in $\X \otimes \bar \X \otimes \bar \X$, \ie, with $k=2$ mediators. (For an example of a one-mediator deviation, see~\citet[Figure 1]{Zhang24:Mediator}.) Again, black squares are decision nodes and white squares are observation nodes. Nodes are labeled with their state representations: the state in $\X$ first (in blue), and the two mediator states after (in red). Similarly, blue edge labels indicate interactions with the decision problem (\ie, playing actions and receiving observations in $\X$), and red edge labels indicate interactions with the mediators (\ie, querying and receiving action recommendations from the mediators). Redundant edges (such as those in which the decision  problem in $\X$ has terminated) are omitted. The deviation is shown in thick black lines.  For example, $\phi_2(\vx) = x_2 x_4$ because the only state in which the deviator plays action \action{1}{2}{} is when the mediator state is (\action{2}{2}{},\action{2}{4}{}). $\phi_1(\vx) = x_1 + x_3$ because the deviator plays action \action11{} at mediator states (\action21{},\action21{}) and (\action23{},\action20{}), which would give the formula $\phi_1(\vx) = x_1^2 + x_3 x_0$ (where $x_0 := 1-x_1$), but one can easily check that $x_1^2 + x_3 x_0 = x_1 + x_3$ for all $x \in \X$.}\label{fig:deviation-example}
\end{figure}

We now proceed with generalizing the results of \Cref{sec:bayesian} to extensive-form games.

Let $\X$ be any decision problem of dimension $N$ and depth $d$. We will assume WLOG that every decision point in $\X$ has branching factor exactly $2$. This is without loss of generality, but it incurs a loss of $O(\log b)$, where $b$ is the original branching factor, in the depth. Thus, in the below bounds, when $d$ appears, it should be read as $O(d \log b)$. 

Using the previous notation, the set of {\em $k$-mediator deviations $\Phi_\textup{med}^k$} is the set of reduced strategies in the decision problem $\X \otimes \bar\X^{\otimes k}$. In particular, we recall that reduced strategies $\vq \in \pi(\X \otimes \bar \X^{\otimes k})$ induce functions $\phi_\vq : \X \to \co \X$ given by
\begin{align}
    \phi_\vq(\vx)[z] = \sum_{z_1, \dots, z_k} \vq[z, z_1, \dots, z_k] \prod_{i=1}^k \vx[z_i].
\end{align}
Thus, we have that $\phi_\vq$ is a degree-$k$ polynomial. $\Phi^k_\textup{med}$ is the set of such deviations. For intuition, we once again pose a few special cases:
\begin{itemize}
    \item When the original decision problem's decision space is $\Delta_2^N$ (\ie, the decision problem consists of a single root observation point with $N$ children, each of which is a decision point with two actions), we have $\Phi^k_\textup{DT} = \Phi^k_\textup{med}$. Thus, the results in this section strictly generalize those in the previous section.
    \item $\Phi^0_\textup{med}$ and $\Phi^N_\textup{med}$ are, as before, the sets of external and swap deviations respectively.
    \item $\Phi^1_\textup{med}$ is, by \Cref{th:linear}, the set of all linear deviations.
\end{itemize}

In this context, applying \Cref{th:dag-cfr} gives an efficient $\Phi^k_\textup{med}$-regret minimizer:

\begin{theorem}\label{th:mediator-efg}
    There is an $N^{O(k)}$-time-per-round regret minimizer on $\Phi^k_\textup{med}$ whose external regret is at most $\eps$ after $N^{O(k)}/\eps^2$ rounds.
\end{theorem}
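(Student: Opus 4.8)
The plan is to realize $\Phi^k_\textup{med}$-regret minimization as an instance of the DAG-form regret minimization framework of \Cref{th:dag-cfr}, applied to the interleaving $\X \otimes \bar\X^{\otimes k}$. The key points to check are (i) that the relevant strategy set is exactly $\pi(\X \otimes \bar\X^{\otimes k})$, (ii) that the utilities induced by the $\Phi$-regret framework depend only on the terminal \emph{state} of this interleaving (so that \Cref{th:dag-cfr} applies), and (iii) that the number of terminal states $|\mc Z|$ is $N^{O(k)}$, so that the resulting per-round complexity and iteration complexity are as claimed.

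First I would observe that, by the definition of $\Phi^k_\textup{med}$ as the set of reduced strategies $\vq \in \pi(\X \otimes \bar\X^{\otimes k})$, minimizing $\Phi^k_\textup{med}$-external-regret is literally the problem of running a regret minimizer over the convex set $\pi(\X \otimes \bar\X^{\otimes k})$, once we confirm the utilities are linear in $\vq$ and state-dependent. The utility passed to $\MR_\Phi$ at round $t$ is $\vq \mapsto \ip{\vu^{(t)}, \phi_\vq(\vx^{(t)})}$. Using the explicit form $\phi_\vq(\vx)[z] = \sum_{z_1,\dots,z_k} \vq[z, z_1, \dots, z_k] \prod_{i=1}^k \vx[z_i]$, this expands to $\sum_{z, z_1, \dots, z_k} \vu^{(t)}[z]\, \vq[z, z_1, \dots, z_k] \prod_i \vx^{(t)}[z_i]$, which is a linear functional of $\vq$ whose coefficient on the coordinate indexed by terminal state $(z, z_1, \dots, z_k)$ is $\vu^{(t)}[z] \prod_i \vx^{(t)}[z_i]$. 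Crucially this coefficient depends only on the terminal \emph{state} of $\X \otimes \bar\X^{\otimes k}$, not on the history (the order in which queries to the $k$ mediators were interleaved with moves in $\X$). Hence the hypothesis of \Cref{th:dag-cfr} is met, and it furnishes a regret minimizer on $\pi(\X \otimes \bar\X^{\otimes k})$ whose per-round complexity is $\poly(|\mc Z|)$ and whose regret is $\eps$ after $\poly(|\mc Z|)/\eps^2$ rounds.

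It remains to bound $|\mc Z|$. Each component decision problem $\bar\X$ has the same terminal node set $\mc Z$ as $\X$ (taking the dual only swaps decision and observation points, not the tree shape), so $|\mc Z| = N^{k+1}$, and thus $\poly(|\mc Z|) = N^{O(k)}$ and $\poly(|\mc Z|)/\eps^2 = N^{O(k)}/\eps^2$. This gives exactly the claimed bounds. The only subtle point—and the step I expect to be the main obstacle—is verifying that the interleaving $\X \otimes \bar\X^{\otimes k}$ genuinely satisfies the ``natural properties'' that \Cref{th:dag-cfr} (as inherited from \citet{Zhang23:Team_DAG}) requires of the underlying DAG, in particular that the DAG of states is acyclic with polynomially many vertices and that the projection $\pi$ is compatible with their counterfactual-regret decomposition; this is precisely the content of the cited Corollary A.4, so I would invoke it as a black box exactly as the theorem statement permits, having checked the state-dependence of the utilities above. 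Everything else is bookkeeping: substitute $|\mc Z| = N^{k+1}$ into the generic guarantee.
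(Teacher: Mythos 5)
Your proposal is correct and matches the paper's approach: the paper also obtains this theorem as an immediate application of \Cref{th:dag-cfr} to the interleaving $\X \otimes \bar\X^{\otimes k}$, noting that $\Phi^k_\textup{med}$ is by definition $\pi(\X \otimes \bar\X^{\otimes k})$ and that the induced utilities are linear in $\vq$ with coefficients depending only on the terminal state. Your explicit verification of state-dependence and the $|\mc Z| = N^{k+1}$ count is exactly the bookkeeping the paper leaves implicit.
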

Thus, once again \Cref{prop:low-degree-beh} and \Cref{theorem:fp-exp} have the following consequence.
\begin{corollary}
    There is a $N^{O(k)}/\eps$-time-per-round regret minimizer on $\X$ whose $\Phi_\textup{med}^k$-regret is at most $\eps$ after $N^{O(k)}/\eps^2$ rounds.
\end{corollary}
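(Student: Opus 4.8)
The plan is to feed the three ingredients we already have into the reduction of~\Cref{theorem:fp-exp}, exactly mirroring the hypercube argument behind~\Cref{cor:Kmefd}. I would take $\MR_\Phi$ to be the DAG-form regret minimizer on $\Phi_\textup{med}^k = \pi(\X \otimes \bar\X^{\otimes k})$ provided by~\Cref{th:mediator-efg} (itself an instance of~\Cref{th:dag-cfr}), whose per-round runtime is $R_1 = N^{O(k)}$ and whose external regret satisfies $\reg^T \le \eps$ once $T \ge N^{O(k)}/\eps^2$; and I would take the consistent, efficient map $\delta$ to be the behavioral strategy map $\beta$ of~\Cref{def:beta}.

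Before plugging in, I would verify that $\MR_\Phi$ is legitimately usable inside the loop of~\Cref{theorem:fp-exp}. On round $t$ the reduction hands $\MR_\Phi$ the utility $\vq \mapsto \langle \vu^{(t)}, \phi_\vq^\beta(\vx^{(t)})\rangle$. Since $\phi_\vq(\vx')[z] = \sum_{z_1,\dots,z_k} \vq[z,z_1,\dots,z_k]\prod_{i=1}^k \vx'[z_i]$ is linear in the reduced strategy $\vq$ for every fixed pure strategy $\vx'$, and $\phi_\vq^\beta(\vx^{(t)}) = \E_{\vx'\sim\beta(\vx^{(t)})}[\phi_\vq(\vx')]$ is a convex combination of such maps, this utility is linear in $\vq$; it is also bounded in $[-1,1]$ because $\phi_\vq(\vx')\in\co\X$ and $\vu^{(t)}\in\mc U$. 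Hence it is a valid state-dependent utility vector for the interleaving, so~\Cref{th:dag-cfr} (and therefore~\Cref{th:mediator-efg}) applies unchanged.

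It remains to bound $R_2$, the cost of evaluating the extended map $\phi_\vq^\beta$ at a point of $\co\X$. Each coordinate of $\phi_\vq$ is a multilinear polynomial of degree at most $k$ with $O(N^k)$ monomials, so~\Cref{prop:low-degree-beh} gives $R_2 = N^{O(k)}$ (the Carath\'eodory map $\gamma$ of~\Cref{sec:consistent-map} would serve equally well, with the same bound). Substituting $R_1 = R_2 = N^{O(k)}$ into~\Cref{theorem:fp-exp} produces a learner on $\X$ with per-iteration runtime $O(R_1 + R_2/\eps) = N^{O(k)}/\eps$ and $\Phi_\textup{med}^k$-regret at most $\reg^T + \eps \le 2\eps$ after $T = N^{O(k)}/\eps^2$ rounds; rescaling $\eps \mapsto \eps/2$ yields the statement. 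I do not anticipate a genuine obstacle: all of the substance lives in~\Cref{th:mediator-efg} and in the expected-fixed-point machinery, and the only things requiring care are the linearity and boundedness of the relayed utility (so that the DAG-form minimizer is applicable) and the $N^{O(k)}$ evaluation cost of $\phi_\vq^\beta$, both of which are routine.
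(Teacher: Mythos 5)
Your proposal is correct and matches the paper's argument exactly: both combine the $\Phi^k_\textup{med}$ external-regret minimizer of \Cref{th:mediator-efg} with the efficient behavioral extension of \Cref{prop:low-degree-beh} and feed them into the reduction of \Cref{theorem:fp-exp}. The extra checks you perform (linearity and boundedness of the relayed utility in $\vq$, and the $N^{O(k)}$ cost of evaluating $\phi_\vq^\beta$) are precisely the facts the paper leaves implicit.
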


Next, we discuss extensions of our result to low-degree polynomials. Unfortunately, we cannot directly apply \Cref{th:boolean} to conclude the existence of a regret minimizer on $\X$ with $\Phi^k_\textup{poly}$-regret growing as $N^{O(k^3)}/\eps^2$. There are two issues in attempting to do so.

First, when $\X$ is not the hypercube, polynomials $f : \X \to \{0, 1\}$ are not total functions. That is, it is not necessarily the case that degree-$k$ polynomials $f : \X \to \{0, 1\}$ can be extended to degree-$k$ polynomials $\bar f : \{0, 1\}^N \to \{0, 1\}$, which is required in order to apply \Cref{th:boolean}.\footnote{Formally, we call $\bar f : \{0, 1\}^N \to \{0, 1\}$ an {\em extension} of $f : \X \to \{0, 1\}$ if $\bar f$ agrees with $f$ on $\X$.} For an example of this, consider $\X = \D_4$ where $\D_N$ is the standard basis in $\R^N$, that is, $\D_N = \{ e_i : i \in [N]\}$ where $e_i \in \R^N$ is the $i$th basis vector (in other words, $\D_N$ is the set of vertices of the probability simplex $\Delta(N)$). Let $ f : \D_4 \to \{0, 1\}$ given by $ f(\vx) = \vx_1 + \vx_2$. Then $ f$ is linear, but there is no linear $\bar f : \{ 0, 1\}^4 \to \{ 0, 1\}$ extending $ f$. Indeed, there is a more general manifestation of this phenomenon:

\begin{proposition}
    For every $N$, there exists a linear map $ f : \D_N \to \{0, 1\}$ such that any extension $\bar f : \{0,1\}^N \to \{0,1\}$ of $ f$ must have degree at least $\Omega(\log N)$. 
\end{proposition}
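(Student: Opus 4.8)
The plan is to take a single ``balanced'' linear map and argue that any Boolean extension must agree, on the basis vectors, with a pattern that can only be produced by deep decision trees, hence---via \Cref{th:boolean}---by high-degree polynomials.

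Concretely, fix a set $S \subseteq [N]$ with $|S| = \lfloor N/2 \rfloor$ and let $f : \D_N \to \{0,1\}$ be the linear map $f(\vx) = \sum_{i \in S} \vx_i$; this indeed sends $\D_N$ into $\{0,1\}$ since $f(e_i) = \mathbf{1}[i \in S]$. Let $\bar f : \{0,1\}^N \to \{0,1\}$ be any extension and put $k := \deg \bar f$. The crux is an elementary observation about decision trees, which I will call the \emph{spine lemma}: if $T$ is a decision tree of depth $t$ on $\{0,1\}^N$, then $|\{ i \in [N] : T(e_i) \ne T(\mathbf{0})\}| \le t$. Indeed, let the \emph{spine} be the root-to-leaf path of $T$ taken on input $\mathbf{0}$; it queries at most $t$ coordinates. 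If $i$ is not among them, then on input $e_i$ every query along the spine is answered $0$ (since $e_i$ and $\mathbf{0}$ agree off coordinate $i$), so $T$ follows the spine to the same leaf and $T(e_i) = T(\mathbf{0})$. Thus the ``flip set'' $\{i : T(e_i) \ne T(\mathbf{0})\}$ is contained in the set of spine coordinates.

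To finish, apply \Cref{th:boolean} to write $\bar f$ as a decision tree of depth at most $2k^3$; the spine lemma then gives $|\{i : \bar f(e_i) \ne \bar f(\mathbf{0})\}| \le 2k^3$. But by construction this flip set is exactly $S$ when $\bar f(\mathbf{0}) = 0$ and exactly $[N]\setminus S$ when $\bar f(\mathbf{0}) = 1$, and both sets have size at least $\lfloor N/2 \rfloor$. Hence $2k^3 \ge \lfloor N/2 \rfloor$, i.e.\ $k = \Omega(N^{1/3})$, which is in particular $\Omega(\log N)$. The same argument shows, more generally, that any Boolean $g$ on $\{0,1\}^N$ with $g(\mathbf{0}) = 0$ and $g(e_i) = 1$ for $i$ in a set of size $m$ has $\deg g = \Omega(m^{1/3})$.

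I expect the only subtle point---the ``obstacle''---to be resisting the urge to bound $\deg \bar f$ directly over the hypercube: since $\bar f$ is completely unconstrained on inputs of Hamming weight $\ge 2$, naive symmetrization or M\"obius inversion applied to $\bar f$ itself gives nothing, and one really does need the short, rigid certificate provided by the all-zeros path of a decision tree. (If one is willing to invoke the Ehlich--Zeller/Coppersmith--Rivlin inequality instead of \Cref{th:boolean}, then symmetrizing $\bar f$ over all of $S_N$ yields a univariate degree-$\le k$ polynomial $P$ with $P(0) = \bar f(\mathbf{0}) \in \{0,1\}$, $P(1) = |S|/N \approx 1/2$, and $P(i) \in [0,1]$ for every integer $i \in [0,N]$, and the resulting unit-scale ``jump'' forces $k = \Omega(\sqrt N)$; but for the stated $\Omega(\log N)$ bound the route through \Cref{th:boolean} already suffices.)
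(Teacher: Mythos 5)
Your argument is correct and in fact proves a stronger bound than the paper claims. Let me compare the two.

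The paper's route: it invokes the junta theorem (Theorem 3.4 in O'Donnell) that any degree-$k$ Boolean function on $\{0,1\}^N$ depends on at most $k2^k$ coordinates, observes (exactly as you do) that the ``flip set'' $\{i : \bar f(e_i) \ne \bar f(\mathbf{0})\}$ has size $\ge N/2 - 1$, and concludes $k2^k \ge N/2 - 1$, hence $k = \Omega(\log N)$.

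Your route starts from the same flip-set observation but then applies \Cref{th:boolean} (degree $k$ implies decision-tree depth $\le 2k^3$) together with your spine lemma to get $2k^3 \ge \lfloor N/2 \rfloor$, i.e.\ $k = \Omega(N^{1/3})$. The spine lemma is genuinely the right structural fact here: it says that a shallow decision tree can distinguish $\mathbf{0}$ from only a few of the $e_i$, which is stronger than merely saying a low-degree function is a junta (the junta bound is exponentially looser in $k$). The net result is a polynomial lower bound on the degree rather than the logarithmic one stated; both suffice for the proposition as written, but yours is quantitatively better, and it is self-contained relative to the paper since \Cref{th:boolean} is already cited there. Your remark about symmetrization plus the Ehlich--Zeller/Coppersmith--Rivlin inequality giving $\Omega(\sqrt N)$ is also correct and would be the strongest of the three arguments, at the cost of importing an approximation-theoretic tool not already present in the paper.

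One small stylistic note: the proposition is stated for ``a linear map $f$'', so the concrete choice $f(\vx) = \sum_{i\in S}\vx_i$ with $|S| = \lfloor N/2\rfloor$ matches the paper exactly; you might simply note that any balanced linear map works, which your argument already shows.
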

\begin{proof}
    Let $\bar f : \{ 0, 1\}^N \to \{0, 1\}$ be any degree-$k$ function. By Theorem~3.4 of \citet{ODonnell14:Analysis}, $\bar f$ is a $k2^k$-junta, that is, $\bar f(\vx)$ depends on at most $k2^k$ entries of $\vx$. Now consider the map $ f : \D_N \to \{0, 1\}$ given by $ f(\vx) = \sum_{i \le N/2} \vx_i$. Let $\bar f : \{0, 1\}^N \to \{0, 1\}$ be an extension of $ f$. Then $\bar f$ depends on at least $N/2-1$ inputs: if $\bar f(\vec 0) = 0$ then $\bar f$ depends on at least $\vx_1, \dots, \vx_{\floor{N/2}}$, and if $\bar f(\vec 0) = 1$ then $\bar f$ depends on at least $\vx_{\floor{N/2}+1}, \dots, \vx_{N}$. Thus, we have $N/2 -1 \le k 2^k$, which upon rearraging gives $k \ge \Omega(\log N)$.
\end{proof}

The second issue is the following. Suppose that $K$ mediators were enough to represent a function $ f : \X \to \{0, 1\}$. How does one then represent a function $\phi : \X \to \X$? Each coordinate of $\phi$ could be represented using $K$ mediators, but that need not mean the whole function can. In game-theoretic terms, representing a coordinate of $\phi(\vx)$ allows the player to play a single action, not necessarily the whole game. Naively, playing the whole game would seem to require $K d$ mediators: $K$ mediators for every level of the decision tree, to compute which action to take at each level. 

We will show that it is possible to circumvent both of these issues: the first with a loss of $O(d)$ in the degree of the polynomial that is representable, and the second with no additional loss. In particular, we state our main result below.

\begin{theorem}\label{th:low-deg-efg}
    $\Phi^k_\textup{poly} \subseteq \Phi_\textup{med}^{O(kd)^3}$. Therefore, for every $k$, there is a $N^{O(kd)^3}/\eps$-time-per-round algorithm whose $\Phi^k_\textup{poly}$-regret at most $\eps$ after $N^{O(kd)^3}/\eps$ rounds.
\end{theorem}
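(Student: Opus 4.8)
The plan is to reduce, in three stages, from a degree-$k$ polynomial map $\phi : \X \to \X$ to a $K$-mediator deviation with $K = O(kd)^3$, and then to invoke \Cref{th:mediator-efg} for the algorithmic conclusion. Throughout, assume (as is already in force in this section) that every decision point of $\X$ is binary.

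\textbf{Stage 1: reduce $\phi$ to its action indicators.} Fix $\phi \in \Phi^k_\textup{poly}$. For each decision point $j$ of $\X$, let $g_j(\vx) := \phi(\vx)[ja]$ for the action $a = 1$. Since $\vx \mapsto \phi(\vx)[z]$ is a degree-$\le k$ polynomial for each terminal $z$ and $\phi(\vx)[ja] = \sum_{z \succeq ja}\phi(\vx)[z]$, the map $g_j$ is a degree-$\le k$ polynomial; and $\phi(\vx) \in \X$ forces $g_j(\vx) \in \{0,1\}$ for $\vx \in \X$. Playing action $g_j(\vx)$ at every reachable decision point $j$ recovers $\phi(\vx) \in \X$ exactly, so it suffices to (i) represent each $g_j$ by a shallow decision tree / few-mediator protocol, and (ii) merge these protocols, over the $\le d$ decision points on a realization path, into a single few-mediator protocol for all of $\phi$.

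\textbf{Stage 2: extension to the cube and Midrijanis's theorem.} Theorem~\ref{th:boolean} applies only to total Boolean functions on $\{0,1\}^N$, so the first step is \Cref{lemma:extension}: a degree-$k$, $\{0,1\}$-valued $g : \X \to \{0,1\}$ admits an extension $\bar g : \{0,1\}^N \to \{0,1\}$ of degree $O(kd)$. The mechanism I would use is that along the root-to-$z$ path of $\X$ the sequence-form constraints let one write $\vx[z]$ as $P_z(\vx) := \prod_{ja \preceq z}\ell_{ja}(\vx)$ for suitable linear forms $\ell_{ja}$, a degree-$\le d$ polynomial that agrees with $\vx[z]$ on $\X$; substituting $P_z$ for $\vx[z]$ in a degree-$k$ representation of $g$ gives a degree-$O(kd)$ polynomial agreeing with $g$ on $\X$, after which a bounded-degree correction restores $\{0,1\}$-valuedness on all of $\{0,1\}^N$. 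Feeding $\bar g_j$ into Theorem~\ref{th:boolean} yields a (randomized) decision tree of depth $K := 2\,O(kd)^3 = O(kd)^3$ computing $g_j$ by querying coordinates $\vx[z]$.

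\textbf{Stage 3: decision trees to mediators, and merging.} A single coordinate query ``$\vx[z]=?$'' is simulated by one mediator: navigate the mediator (play the environment role in $\bar\X$) down the root-to-$z$ path of $\X$; the action recommendations returned along that path reveal whether $\vx[z]=1$ (in fact a whole behavioral path of $\vx$). Hence each $g_j$ is a $K$-mediator deviation into the trivial single-decision-point problem. Merging these naively---walk down $\X$ and, at each of the $\le d$ decision points encountered, run that $g_j$'s protocol with a fresh block of $K$ mediators---would need $Kd$ mediators. The final lemma shows $K$ suffice: one interleaves the play in $\X$ with a \emph{shared}, lazy navigation of a single pool of $K$ mediators, exploiting that each mediator can keep being pushed deeper (its path has length $\le d$) so that the behavioral information demanded by the successive action indicators is harvested from the same $K$ mediators; this is carried out directly inside $\X \otimes \bar\X^{\otimes K}$ and yields $\phi = \phi_\vq$ for some reduced strategy $\vq \in \pi(\X \otimes \bar\X^{\otimes K})$. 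Thus $\Phi^k_\textup{poly} \subseteq \Phi^{O(kd)^3}_\textup{med}$.

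\textbf{Algorithmic conclusion and expected obstacles.} Setting $K = O(kd)^3$ in \Cref{th:mediator-efg} gives an $N^{O(K)}$-per-round external-regret minimizer on $\Phi^K_\textup{med}$ with regret $\eps$ after $N^{O(K)}/\eps^2$ rounds; \Cref{prop:low-degree-beh} evaluates the extended map $\phi^\beta$ of a degree-$K$ deviation in time $N^{O(K)}$; and \Cref{theorem:fp-exp} converts these into a $\Phi^K_\textup{med}$-regret minimizer on $\X$ with per-round time $N^{O(K)}/\eps$ and regret $\eps$ after $N^{O(K)}/\eps^2$ rounds. Since $\Phi^k_\textup{poly} \subseteq \Phi^K_\textup{med}$, the same algorithm bounds $\Phi^k_\textup{poly}$-regret, and $N^{O(K)} = N^{O(kd)^3}$. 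I expect the two genuine difficulties to be: \Cref{lemma:extension}, where the substitution argument delivers the degree bound immediately but only a real-valued extension off $\X$, so turning it into a genuinely $\{0,1\}$-valued low-degree function needs an extra idea; and the mediator-merging step of Stage 3, where the apparent factor-$d$ blow-up from treating the $d$ levels of $\X$ independently must be shown to be illusory---reasoning inside the interleaving $\X \otimes \bar\X^{\otimes K}$ rather than level-by-level is, I believe, what makes this go through.
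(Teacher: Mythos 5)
Your high-level decomposition matches the paper's: extend low-degree functions from $\X$ to the hypercube with an $O(d)$ degree blow-up (Lemma~\ref{lemma:extension}), apply Midrijanis (Theorem~\ref{th:boolean}) to get $K=O(kd)^3$ queries, then show $K$ mediators suffice for the whole map $\phi$, not $Kd$. You also correctly flag the two crucial spots. However, both of your sketches for those spots contain genuine gaps that the paper closes in ways you have not anticipated.

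For Lemma~\ref{lemma:extension}, your worry that the substitution yields ``only a real-valued extension off $\X$'' and needs a ``bounded-degree correction'' is a misdiagnosis. The paper's map $\widebar{\id} : \{0,1\}^N \to \X$, defined by $\widebar{\id}(\vx)[ja] = \vx[j0]\cdot\widebar{\id}(\vx)[p_j]$ if $a=0$ and $(1-\vx[j0])\cdot\widebar{\id}(\vx)[p_j]$ if $a=1$, does not merely agree with the identity on $\X$: its \emph{range} is contained in $\X$ for \emph{every} input in $\{0,1\}^N$, since reading off bit $\vx[j0]$ at each decision point always produces a valid tree-form strategy. Therefore $\bar f := f \circ \widebar{\id}$ is automatically $\{0,1\}$-valued on all of $\{0,1\}^N$, because $f:\X\to\{0,1\}$ is only ever evaluated at points of $\X$. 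No correction step exists or is needed; the key insight you are missing is that the substitution map should be chosen to have image inside $\X$, not merely to coincide with the identity on $\X$.

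For your Stage 3, the ``shared, lazy navigation of a single pool of $K$ mediators'' is where the real gap lies. Each mediator, once committed to a root-to-leaf path in $\bar\X$, cannot be redirected to query a terminal $z'$ lying in a disjoint subtree; yet the depth-$K$ decision trees for the indicators at successive decision points of $\X$ can in principle demand queries to mutually incompatible leaves. Making ``pushed deeper'' rigorous would require an argument that the $\le d$ decision-tree protocols can be scheduled so that all their queries organize into $K$ coherent paths, and that is not established by the sketch. The paper avoids this entirely with an algebraic argument: each component $\phi_z:\X\to\{0,1\}$, being $K$-mediator-representable, corresponds to a reduced strategy $\vq_z\in\pi(\{0,1\}\otimes\bar\X^{\otimes K})$, which by the interleaving discussion induces a \emph{linear} map $\hat\phi_z:\overline{\bar\X^{\otimes K}}\to\{0,1\}$; bundling these into $\hat\phi:\overline{\bar\X^{\otimes K}}\to\X$ gives a linear map, and Theorem~\ref{th:linear} then directly produces a single reduced strategy $\vq\in\pi(\X\otimes\bar\X^{\otimes K})$, i.e.\ a $K$-mediator deviation inducing $\phi$. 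This bypasses any operational merging argument. I would recommend replacing your Stage 3 with this linear-algebraic route rather than trying to repair the navigation heuristic.
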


\subsection{Proof of Theorem~\ref{th:low-deg-efg}}
We dedicate the rest of this section to the proof of \Cref{th:low-deg-efg}. We deal with the two aforementioned problems one by one. First, we show that every $ f$ admits an extension at a loss of a factor of $d$ in the degree:

\begin{lemma}
    \label{lemma:extension}
    Every $ f : \X \to \{0, 1\}$ of degree $k$ admits a degree-$kd$ extension $\bar f : \{0, 1\}^N \to \{0, 1\}$.
\end{lemma}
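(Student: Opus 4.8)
The plan is to derive the lemma from the construction of a single \emph{polynomial retraction} of the cube onto $\X$. Precisely, I would build a map $\Pi \colon \{0,1\}^N \to \X$ such that $\Pi(\vx) = \vx$ for every $\vx \in \X$, $\Pi(\vy) \in \X$ for \emph{every} $\vy \in \{0,1\}^N$, and, writing $\mathrm{dep}(z)$ for the number of decision points on the $\Root\to z$ path, the coordinate $\Pi_z$ is a multilinear polynomial of degree at most $\mathrm{dep}(z)$. Granting this, the lemma is immediate: choose any multilinear polynomial $p$ of degree $\le k$ with $p|_\X = f$, and let $\bar f$ be the multilinearization of $p\circ\Pi$. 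Since $\Pi$ takes values in $\X$ and $p$ equals the Boolean function $f$ on $\X$, we get $\bar f(\vy) = f(\Pi(\vy)) \in \{0,1\}$ for all $\vy$; since $\Pi$ fixes $\X$ pointwise, $\bar f|_\X = f$; and $\deg \bar f \le k\cdot\max_z \mathrm{dep}(z) = kd$ because multilinearization never raises degree. So the entire content is to construct $\Pi$.

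I would construct $\Pi$ by induction on the tree $\X$, maintaining the sharp per-leaf bound $\deg\Pi_z \le \mathrm{dep}(z)$. The base case (a single terminal) is trivial. If the root is an observation point, then $\X = \X_1\times\cdots\times\X_m$ with the coordinates partitioned among the subtrees, and I take $\Pi = \Pi_1\times\cdots\times\Pi_m$, the product of the retractions furnished by the induction. This is for free: a root observation point contributes nothing to any $\mathrm{dep}(z)$, so the bound for $\X$ coincides with the one already guaranteed inside the relevant $\X_i$, and both ``fixes $\X$ pointwise'' and ``maps into $\X$'' are clear from the product form.

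The substantive case is a root decision point $j$, with sub-problems $\X_0,\X_1$ hanging off its two actions and $\X = (\X_0\times\{\vec 0\})\sqcup(\{\vec 0\}\times\X_1)$ (disjoint, since the all-zero point lies in neither $\X_a$). Now each leaf acquires one extra decision point on its path, so the available budget is $\deg\Pi_z \le \mathrm{dep}_{\X_a}(z) + 1$, one more than what the sub-retractions $\Pi_0,\Pi_1$ use. The naive move---pick a Boolean ``side selector'' $\tau$ with $\tau\equiv 1$ on $\X_0\times\{\vec 0\}$ and $\tau\equiv 0$ on $\{\vec 0\}\times\X_1$ and set $\Pi=(\tau\,\Pi_0,(1-\tau)\,\Pi_1)$---fails, because $\tau$ need not be expressible with degree $1$: if $\X_0$ has no \emph{forced} leaf (as happens for a binarized simplex), then no single coordinate is constantly $1$ on $\X_0$, and affine Boolean functions on the cube are only constants and (anti)dictators, so $\tau\,\Pi_0$ overshoots the budget. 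The way I would get around this is to fold the side selection \emph{into} the sub-retractions rather than factoring it out, exploiting the slack we have on inputs outside $\X$: on $\X$ the active side is already visible from the sub-retraction coordinates (the inactive block is identically $\vec 0$ there), and one can arrange the off-$\X$ values of $\Pi$ so that each $\Pi_z$ stays at degree exactly $\mathrm{dep}(z)$---this is already what happens, checked by hand, for the binarized $\D_4$, where one can exhibit a degree-$2$ retraction even though ``select, then apply sub-retraction'' would give degree $3$. Each decision-point level then costs one extra unit of degree, for a total of $d$.

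I expect this decision-point step to be the crux and the main obstacle: it forces one to commit to a concrete, somewhat delicate construction instead of a black-box combination, and proving that it stays within the degree budget will likely require strengthening the inductive hypothesis to control more about $\Pi$ than just its coordinate degrees---for instance, a normal form for $\Pi$ in a neighborhood of the all-zero point, or of the two ``block-zero'' faces, so that the combination step can be carried out without introducing a new top-degree monomial. The remaining pieces---the reduction to $\Pi$, the observation-point product, and the bookkeeping that multilinearization preserves degree---are routine.
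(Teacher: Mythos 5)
Your high-level plan coincides with the paper's: reduce the lemma to constructing a degree-$d$ polynomial retraction $\Pi : \{0,1\}^N \to \X$ that is the identity on $\X$, and then take $\bar f$ to be (the multilinearization of) $f \circ \Pi$. The observation-point product step, the fact that $f \circ \Pi$ is Boolean once $\Pi$ lands in $\X$, and the fact that multilinearization never raises degree are all handled correctly. The genuine gap is exactly where you flag it: the decision-point step of your induction is left as a sketch. ``Fold the side selection into the sub-retractions'' and ``strengthen the inductive hypothesis'' name an intention, not a construction; you verify $\D_4$ by hand but give nothing general, and the entire content of the lemma is concentrated in that step. As submitted, the proposal does not prove the lemma.

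For comparison: the paper does not do an induction at all. It writes a single top-down recursion, $\widebar{\id}(\vx)[\Root] = 1$, $\widebar{\id}(\vx)[j0] = \vx[j0]\,\widebar{\id}(\vx)[p_j]$, and $\widebar{\id}(\vx)[j1] = (1 - \vx[j0])\,\widebar{\id}(\vx)[p_j]$, and uses $\vx[j0]$ itself as the decision bit at $j$---precisely the ``naive move'' you reject, with $\tau_j = \vx[j0]$. On $\X$ this is the $\{0,1\}$-valued indicator of taking action $0$ at $j$ and equals the linear form $\sum_{z \succeq j0} \vx[z]$, so the identity-on-$\X$ claim and the degree-$\le d$ claim are immediate. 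But the paper does not say how $\vx[j0]$ is to be read for $\vx \in \{0,1\}^N \setminus \X$, and the two natural readings each break a property the argument needs: as $\sum_{z\succeq j0}\vx[z]$ it is not $\{0,1\}$-valued off $\X$, so $\widebar{\id}$ can leave $\X$ and $f\circ\widebar{\id}$ can fail to be Boolean (already for binarized $\D_4$ and $f(\vx)=\vx[z_1]$ one gets $\bar f(1,1,0,0) = 2$); while as the Boolean OR $\bigvee_{z\succeq j0}\vx[z]$ its degree equals the number of leaves below $j0$, far exceeding $d$. Since \Cref{th:boolean} is a statement about Boolean functions, this is not cosmetic. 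Your instinct that there is a real subtlety in the decision-point step is therefore well founded---but diagnosing the difficulty is not the same as resolving it, and your proposal still owes the reader a Boolean, degree-$d$ retraction (or some other argument that $\bar f$ is genuinely $\{0,1\}$-valued on the whole cube).
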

\begin{proof}
    We claim first that the identity function $\id : \X \to \X$ admits a degree-$d$ extension, that is, there is a degree-$d$ function $\widebar{\id} : \{0, 1\}^N \to \X$ that is the identity on $\X$. Indeed, consider the following function $\widebar{\id}$. Then we define $\widebar{\id}(\vx)[ja]$ recursively as follows:
    \begin{align}
        \widebar{\id}(\vx)[ja] = \begin{cases}
            \vx[j0] \cdot \widebar{\id}(\vx)[p_j] &\qif a=0, \\
            (1 - \vx[j0]) \cdot \widebar{\id}(\vx)[p_j] &\qif a=1.
        \end{cases}
    \end{align}
    It is easy to check that $\widebar{\id}$ is indeed the identity on $\X$ by definition of tree-form decision spaces, and that the degree of $\widebar{\id}$ is at most the depth of the tree, $d$. But now, for any $ f : \X \to \{0, 1\}$ of degree $k$, the map $\bar f :=  f \circ \widebar{\id} : \{0, 1\}^N \to \{0, 1\}$ has degree at most $kd$ and is an extension of $ f$.
\end{proof}

Now we apply \Cref{th:boolean}. That result tells us that a degree-$kd$ function $\bar f : \{0, 1\}^N \to \{0, 1\}$ can be evaluated using $K = O(kd)^3$ queries. One mediator is certainly more than enough to perform a single query, and therefore such a function can also be evaluated using $K$ mediators. It therefore remains only to address the second problem: namely, the ability to evaluate a function $\bar\phi : \{0, 1\}^N \to \{0, 1\}$, since the output is only binary, in principle only allows us to play a single action. But one needs to play $d$ actions to reach the end of the game. Naively, this would require losing another factor of $d$, for a total of $O(K d)$ mediators. However, we now show that it is possible to completely circumvent this problem. The notation that we have built up will make this argument perhaps surprisingly short.

Let $\phi : \X \to \X$ be any function such that each component $\phi_z : \X \to \{0, 1\}$, given by $\vx \mapsto \phi(\vx)[z]$, is expressible using $K$ mediators. By definition, $\phi_z$ is expressible as a strategy\footnote{This is a slight abuse of notation since $\{0, 1\}$ is not a decision problem, but the argument works if one interprets $\{0, 1\}$ as the decision problem with a single decision node and two child terminal nodes. The sense in which $\vq_z$ represents $\phi_z$ is that, if the environment in $\overline{\bar X^{\otimes k}}$ plays according to a strategy $\vx$, the player will (eventually) play action $\phi_z(\vx)$.} $\vq_z \in \pi(\{0, 1\} \otimes \bar \X^{\otimes K})$.  By the argument in the previous section, $\vq_z$ induces a linear map $\hat\phi_z : \overline{\bar X^{\otimes k}} \to \{0, 1\}$. 

Now let $\hat\phi : \overline{\bar \X^{\otimes K}} \to \X$ be the function whose $z$th coordinate is $\hat\phi_z$. Every component of $\phi$ is linear, so $\hat\phi$ is itself also linear. But then by \Cref{th:linear}, there exists a strategy $\vq \in \pi(\X \otimes \bar \X^{\otimes K})$, that is, a $K$-mediator deviation, that represents $\phi$. This completes the proof.

\section{Discussion and applications}

In this section, we discuss various implications and make several remarks about the framework and results that we have introduced.

\subsection{Convergence to correlated equilibria}
\label{sec:conv-ce}

Notions of $\Phi$-regret correspond naturally to notions of correlated equilibria. Therefore, our results also have implications for no-regret learning algorithms that converge to correlated equilibria. Here, we formalize this connection. Consider an $n$-player game in which player $i$'s strategy set is a tree-form strategy set  $\X_i$, and player $i$'s utility is given by a multilinear map $u_i : \X_1 \times \dots \times \X_n \to [-1, 1]$.  For each player $i$, let $\Phi_i \subseteq (\co \X_i)^{\X_i}$ be a set of deviations for player $i$. Finally let $\Phi = (\Phi_1, \dots, \Phi_n)$.
\begin{definition}
     A distribution $\pi \in \Delta(\X_1 \times \dots \times \X_n)$ is called a {\em correlated profile}. A correlated profile $\pi$ is an {\em $\eps$-$\Phi$-equilibrium} if no player $i$ can profit more than $\eps$ via any of the deviations $\phi_i \in \Phi_i$ to its strategy. That is, $\E_{\vx \sim \pi} u_i(\phi_i(\vx_i), \vx_{-i}) \le \E_{\vx \sim \pi} u_i(\vx_i, \vx_{-i}) + \eps$ for all players $i$ and $\phi_i \in \Phi_i$.
\end{definition}
For example, we can define {\em $k$-mediator equilibria} and {\em degree-$k$ swap equilibria} by setting $\Phi_i$ to $\Phi^k_\textup{med}$ and $\Phi^k_\textup{poly}$, respectively. 
The following celebrated result follows immediately from the definitions of equilibrium and regret.

\begin{proposition}
    Suppose that every player $i$ plays according to a regret minimizer whose $\Phi_i$-regret is at most $\eps$ after $T$ rounds. Let $\pi^{(t)}_i \in \Delta(\X_i)$ be the distribution played by player $i$ at round $t$. Let $\pi^{(t)} \in \Delta(\X_1) \times \dots \times \Delta(\X_n)$ be the product distribution whose marginal on $\X_i$ is $\pi^{(t)}_i$. Then the {\em average strategy profile}, that is, the distribution $\frac{1}{T} \sum_{t \in [T]} \pi^{(t)}$, is an $\eps$-$\Phi$-equilibrium. 
\end{proposition}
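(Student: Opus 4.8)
The plan is to derive the equilibrium inequality directly from each player's $\Phi_i$-regret bound, using only two structural facts: $u_i$ is multilinear, and each round's profile $\pi^{(t)}$ is a \emph{product} distribution. Write $\bar\pi := \frac{1}{T}\sum_{t\in[T]}\pi^{(t)}$ for the average strategy profile. Fix an arbitrary player $i$ and an arbitrary deviation $\phi_i \in \Phi_i$; it suffices to show $\E_{\vx\sim\bar\pi} u_i(\phi_i(\vx_i),\vx_{-i}) \le \E_{\vx\sim\bar\pi} u_i(\vx_i,\vx_{-i}) + \eps$.

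First I would identify the linear loss vector that player $i$'s regret minimizer faces each round. Since $\pi^{(t)} = \pi^{(t)}_1\times\dots\times\pi^{(t)}_n$ and $u_i$ is multilinear, the map $\vx_i \mapsto \E_{\vx_{-i}\sim\pi^{(t)}_{-i}} u_i(\vx_i,\vx_{-i})$ is a linear functional on $\co\X_i$; let $\vu_i^{(t)}$ be the vector with $\E_{\vx_{-i}\sim\pi^{(t)}_{-i}} u_i(\vx_i,\vx_{-i}) = \langle \vu_i^{(t)}, \vx_i\rangle$ for all $\vx_i\in\co\X_i$. This is precisely the utility the regret minimizer of player $i$ observes at round $t$ (the opponents, together with the environment, constitute its adversary). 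Because $\phi_i(\vx_i)\in\co\X_i$ and $\langle\vu_i^{(t)},\cdot\rangle$ is the multilinear extension of the payoff, the quantity $\langle\vu_i^{(t)},\phi_i(\vx_i)\rangle$ is well defined and equals $\E_{\vx_{-i}\sim\pi^{(t)}_{-i}} u_i(\phi_i(\vx_i),\vx_{-i})$.

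Next I would invoke the hypothesis that player $i$'s $\Phi_i$-regret after $T$ rounds is at most $\eps$. Specialized to this $\phi_i$, the definition of $\Phi_i$-regret gives
\[
 \frac{1}{T}\sum_{t=1}^T \left\langle \vu_i^{(t)},\ \E_{\vx_i\sim\pi_i^{(t)}}\!\big[\phi_i(\vx_i) - \vx_i\big] \right\rangle \ \le\ \reg_{\Phi_i}^T \ \le\ \eps.
\]
Then, using the product structure of $\pi^{(t)}$ to push the expectation over $\vx_{-i}$ inside, together with linearity of $\langle\vu_i^{(t)},\cdot\rangle$, each summand rewrites as $\E_{\vx\sim\pi^{(t)}} u_i(\phi_i(\vx_i),\vx_{-i}) - \E_{\vx\sim\pi^{(t)}} u_i(\vx_i,\vx_{-i})$. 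Finally, averaging over $t$ and collapsing $\frac{1}{T}\sum_t \E_{\vx\sim\pi^{(t)}}[g(\vx)] = \E_{\vx\sim\bar\pi}[g(\vx)]$ — which is just linearity of expectation over the finite mixture $\bar\pi$, valid for \emph{any} integrand, in particular the (generally nonlinear) $\vx\mapsto u_i(\phi_i(\vx_i),\vx_{-i})$ — yields $\E_{\vx\sim\bar\pi} u_i(\phi_i(\vx_i),\vx_{-i}) - \E_{\vx\sim\bar\pi} u_i(\vx_i,\vx_{-i}) \le \eps$. As $i$ and $\phi_i\in\Phi_i$ were arbitrary, $\bar\pi$ is an $\eps$-$\Phi$-equilibrium.

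There is no substantive obstacle here; this is the standard ``regret implies equilibrium'' argument. The only points needing a moment's care are (i) matching the linear utility $\vu_i^{(t)}$ in the definition of $\Phi_i$-regret with the gradient of the opponents' expected payoff, which is exactly where multilinearity of $u_i$ and independence across players in $\pi^{(t)}$ are used, and (ii) observing that the step $\frac{1}{T}\sum_t\E_{\pi^{(t)}} = \E_{\bar\pi}$ applies to the deviated payoff term even though $\phi_i$ is nonlinear, since it is mixture linearity rather than any property of $\phi_i$.
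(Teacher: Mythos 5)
Your proof is correct and is precisely the standard "regret implies equilibrium" argument that the paper alludes to when it says the result "follows immediately from the definitions"; the paper itself gives no explicit proof. You correctly identify the two facts that make it go through—multilinearity of $u_i$ plus the product structure of $\pi^{(t)}$ to linearize the round-$t$ utility into a vector $\vu_i^{(t)}$, and mixture linearity of expectation over $\bar\pi$ to collapse the time average—so nothing to add.
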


Thus, from \Cref{th:mediator-efg,th:low-deg-efg} it follows, respectively, that, given a game $\Gamma$ where the dimension of each player's decision problem is at most $N$, we have the following results.
\begin{corollary}
    \label{cor:kmed-equi}
    An $\eps$-$k$-mediator equilibrium can be computed in time $N^{O(k)}/\eps^3$.
\end{corollary}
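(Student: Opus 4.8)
The plan is to combine the per-player no-regret learner for $\Phi^k_\textup{med}$ with the generic ``self-play implies equilibrium'' reduction, so the only real content is a runtime tally. First, for each player $i$ I would instantiate the $\Phi^k_\textup{med}$-regret minimizer on $\X_i$ obtained by chaining together three ingredients already established: \Cref{th:mediator-efg}, which gives an external-regret minimizer $\MR_\Phi$ on $\Phi^k_\textup{med}$ with per-round cost $N^{O(k)}$ and external regret at most $\eps$ after $N^{O(k)}/\eps^2$ rounds; \Cref{prop:low-degree-beh}, which evaluates the extended map $\phi^\beta$ of any degree-$k$ deviation (and every $\phi_\vq \in \Phi^k_\textup{med}$ is degree $k$) in time $N^{O(k)}$; and \Cref{theorem:fp-exp}, which turns $\MR_\Phi$ plus this evaluation oracle into a learner on $\X_i$ whose $\Phi^k_\textup{med}$-regret is $\reg^T + \eps$ with per-iteration cost $O(R_1 + R_2/\eps)$. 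With $R_1, R_2 = N^{O(k)}$ this produces, after rescaling $\eps$ by a constant, a learner that in each round outputs a \emph{distribution} $\pi_i^{(t)} \in \Delta(\X_i)$, has per-round cost $N^{O(k)}/\eps$, and guarantees $\Phi^k_\textup{med}$-regret at most $\eps$ after $T = N^{O(k)}/\eps^2$ rounds.

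Second, I would run these $n$ learners simultaneously against one another. In round $t$, player $i$ commits $\pi_i^{(t)}$ and then is handed the utility vector $\vu_i^{(t)}$ defined by $\ip{\vu_i^{(t)}, \vx_i} = \E_{\vx_{-i} \sim \pi_{-i}^{(t)}} u_i(\vx_i, \vx_{-i})$; multilinearity of $u_i$ and the bound $u_i \in [-1,1]$ make this a bona fide linear functional on $\X_i$ lying in $\mc U$, and it is computable from the opponents' mixed strategies with a $\poly$-size overhead in the game's description. Feeding $\vu_i^{(t)}$ back to each learner and invoking the convergence proposition stated just above the corollary, the time-average profile $\frac{1}{T}\sum_{t \in [T]} \pi^{(t)}$ (with $\pi^{(t)}$ the product of the $\pi_i^{(t)}$) is an $\eps$-$\Phi$-equilibrium for $\Phi_i = \Phi^k_\textup{med}$, i.e., precisely an $\eps$-$k$-mediator equilibrium.

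Finally, the runtime tally: $T = N^{O(k)}/\eps^2$ rounds, each costing $n \cdot N^{O(k)}/\eps$ for the learner updates plus a polynomial term for forming the utility vectors, which totals $N^{O(k)}/\eps^3$ once $n$ and the game-size factor are absorbed into $N^{O(k)}$. I do not anticipate a genuine obstacle, since the substantive work lives in \Cref{th:mediator-efg} and \Cref{theorem:fp-exp}; the only points requiring care are bookkeeping. One is tracking the extra $1/\eps$ in the per-round time, which comes from the $L = \Theta(1/\eps)$ telescoping steps used inside \Cref{theorem:fp-exp} to build the approximate expected fixed point. The other is checking that the objects exchanged between players---distributions on $\X_i$ one way, linear utilities in $\mc U$ the other---are exactly what each learner consumes and what the equilibrium notion requires; in particular it is essential here that the learner plays a distribution in $\Delta(\X_i)$ rather than a point of $\co\X_i$, which is what lets us sidestep the \PPAD-hard fixed-point computation and still drive the $\Phi$-regret to $\eps$.
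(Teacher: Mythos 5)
Your proposal follows exactly the paper's intended route: combine \Cref{th:mediator-efg}, \Cref{prop:low-degree-beh}, and \Cref{theorem:fp-exp} to obtain a per-player $\Phi^k_\textup{med}$-regret minimizer with per-round cost $N^{O(k)}/\eps$ and $T = N^{O(k)}/\eps^2$ rounds, then invoke the self-play-to-equilibrium proposition immediately preceding the corollary, giving the $N^{O(k)}/\eps^3$ total. The paper gives this corollary without an explicit proof, and your accounting (including the $1/\eps$ factor from the $L = \Theta(1/\eps)$ telescoping steps inside \Cref{theorem:fp-exp}, and the emphasis on playing distributions in $\Delta(\X_i)$ rather than points of $\co \X_i$) fills in precisely the intended bookkeeping.
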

\begin{corollary}
    \label{cor:kdeg-equi}
    An $\eps$-degree-$k$-swap equilibrium can be computed in time $N^{O(kd)^3}/\eps^3$.
\end{corollary}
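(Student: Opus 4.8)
The plan is to derive \Cref{cor:kdeg-equi} from \Cref{th:low-deg-efg} via the standard self-play reduction from equilibrium computation to no-regret learning, in complete analogy with how \Cref{cor:kmed-equi} follows from \Cref{th:mediator-efg}. First, for each player $i \in \range{n}$ I would instantiate, on that player's tree-form strategy set $\X_i$ (of dimension at most $N$ and depth at most $d$), the $\Phi^k_\textup{poly}$-regret minimizer furnished by \Cref{th:low-deg-efg}; via \Cref{theorem:fp-exp} this learner has per-round running time $N^{O(kd)^3}/\eps$ and drives its $\Phi^k_\textup{poly}$-regret below $\eps$ after $T = N^{O(kd)^3}/\eps^2$ rounds.

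Next, I would have these $n$ learners play against one another for $T$ rounds. At round $t$, learner $i$ outputs a distribution $\pi^{(t)}_i \in \Delta(\X_i)$, and I would hand it the linear utility $\vx_i \mapsto u_i(\vx_i, \vx^{(t)}_{-i})$, where $\vx^{(t)}_{-i} := (\E_{\vec y \sim \pi^{(t)}_j}[\vec y])_{j \ne i}$ gathers the opponents' mean strategies. Because $u_i$ is multilinear with range $[-1,1]$, this is a valid utility (it belongs to $\mc U$), and by multilinearity player $i$'s realized expected payoff at round $t$ equals $\E_{\vx \sim \pi^{(t)}}[u_i(\vx)]$ for the product distribution $\pi^{(t)} := \pi^{(t)}_1 \times \dots \times \pi^{(t)}_n$. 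The utility seen by player $i$ at round $t$ depends on the opponents' round-$t$ strategies, which is legitimate precisely because our regret minimizers operate against a strongly adaptive adversary. Moreover, evaluating these utilities requires only the opponents' mean strategies $\E_{\vec y \sim \pi^{(t)}_j}[\vec y] \in \co \X_j$, which are read off directly from the explicit mixture representation that \Cref{theorem:fp-exp} maintains, so this step is efficient.

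Finally, I would invoke the regret-to-equilibrium correspondence (the Proposition of \Cref{sec:conv-ce}) with $\Phi_i = \Phi^k_\textup{poly}$ for each $i$: since every player's $\Phi^k_\textup{poly}$-regret is at most $\eps$, the average strategy profile $\frac1T\sum_{t \in \range{T}} \pi^{(t)}$ is an $\eps$-degree-$k$-swap equilibrium. The running time is $T = N^{O(kd)^3}/\eps^2$ rounds, each consisting of $n$ learner updates costing $N^{O(kd)^3}/\eps$ each plus a polynomial-time overhead for evaluating the $n$ utility functionals; the product is $n \cdot N^{O(kd)^3}/\eps^3 = N^{O(kd)^3}/\eps^3$, absorbing $n$ and the evaluation overhead into the exponent. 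I do not expect any genuine obstacle here --- the argument is the routine composition of \Cref{th:low-deg-efg} with the equilibrium/regret dictionary --- and the only detail worth double-checking is that forwarding the opponents' \emph{mean} strategies, rather than their full mixtures, loses nothing, which is immediate from the multilinearity of each $u_i$.
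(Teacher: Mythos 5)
Your proposal is correct and follows exactly the route the paper takes: instantiate the $\Phi^k_\textup{poly}$-regret minimizer of \Cref{th:low-deg-efg} for each player, run them in self-play for $T = N^{O(kd)^3}/\eps^2$ rounds feeding each player the multilinear utility against opponents' means, and apply the regret-to-equilibrium Proposition of \Cref{sec:conv-ce}. You also correctly use the round count $N^{O(kd)^3}/\eps^2$ (as implied by \Cref{th:mediator-efg} and the inclusion $\Phi^k_\textup{poly} \subseteq \Phi^{O(kd)^3}_\textup{med}$), which is what makes the total cost $N^{O(kd)^3}/\eps^3$, even though the theorem statement as printed says ``after $N^{O(kd)^3}/\eps$ rounds''---presumably a typo.
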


The issue of representing the induced correlated distribution is discussed in~\Cref{appendix:representation}.

\subsection{Strict hierarchy of equilibrium concepts}

Let $c \in \{\textup{med}, \textup{poly}\}$. For every $k \ge 0$, let $\mc E^k_c(\Gamma)$ be the set of $\Phi^k_c$-equilibria in $\Gamma$.
It is clear from definitions that $\mc E^k_c(\Gamma) \subseteq \mc E^{k-1}_c(\Gamma)$. Further, even for normal-form games, it is known that coarse-correlated equilibria are not generally equivalent to correlated equilibria, so at least one of these inclusions is strict in some games. We now show that {\em all} of these inclusions are strict, so that the deviations $\Phi^k_c$ form a {\em strict} hierarchy of equilibria.\footnote{The below result constructs a game that depends on $k$. It is {\em not} the case that there exists a single game for which the inclusion hierarchy is strict: for example, for $k \ge N$, the set $\Phi^k_c$ will already contain all the deviations, so $\mc E^k_c(\Gamma) =\mc E^{N}_c(\Gamma)$ for every $k \ge N$.}
\begin{proposition}
    \label{prop:separation}
    For every $k \ge 1$, there exists a game $\Gamma$ such that $\mc E^k_c(\Gamma) \subsetneq \mc E^{k-1}_c(\Gamma)$. 
\end{proposition}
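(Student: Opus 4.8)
The plan is to construct, for each $k\ge 1$, a two-player game $\Gamma$ (necessarily depending on $k$) together with a correlated profile $\pi$ lying in $\mc E^{k-1}_c(\Gamma)\setminus\mc E^k_c(\Gamma)$; since $\mc E^k_c(\Gamma)\subseteq\mc E^{k-1}_c(\Gamma)$ always holds, this gives the strict inclusion. I would make one player a ``dummy'' so that the whole question reduces to a single deviator facing a prescribed gradient field. Concretely, let player~$2$ have action set $\D_2$ and utility $u_2\equiv 0$, so player~$2$'s equilibrium constraints hold vacuously against every deviation class; let player~$1$'s strategy set be $\X_1$ (chosen below) and $u_1(\vx_1,\vx_2)=\langle\vx_1,M\vx_2\rangle$ for a matrix $M$ to be picked. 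Then, for a profile $\pi$ with marginal $\pi_1$ on $\X_1$, setting $\vec g(\vx):=\E_\pi[M\vx_2\mid\vx_1=\vx]$, the gain of any deviation $\phi:\X_1\to\co\X_1$ equals $\E_{\vx\sim\pi_1}\langle\phi(\vx)-\vx,\vec g(\vx)\rangle$. So it suffices to produce $\X_1$, a distribution $\pi_1$, a \emph{realizable} gradient field $\vec g$ (one arising as above with $|u_1|\le 1$), and a deviation $\phi^\star\in\Phi^k_c$, such that this gain is $\le 0$ for every $\phi\in\Phi^{k-1}_c$ but $>0$ for $\phi^\star$.

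For the construction I would use Boolean Fourier analysis. Take $\X_1=\{0,1\}^{k+1}$ (realized, when $c=\textup{med}$, as the linearly isomorphic tree-form problem $\Delta_2^{k+1}$, for which $\Phi^k_\textup{DT}=\Phi^k_\textup{med}$), let $\pi_1$ be uniform, fix $S=\{2,\dots,k+1\}$ (so $|S|=k$ and $1\notin S$), and set $\vec g(\vx)=(\chi_S(\vx),0,\dots,0)$ with $\chi_S(\vx)=\prod_{i\in S}(-1)^{x_i}$. Since $\vec g$ takes only the two values $\pm e_1$, it is realizable via $M=[\,e_1\mid -e_1\,]$ (player~$2$ has two actions) and the profile $\pi$ pairing each $\vx$ with the player-$2$ action realizing $\vec g(\vx)$; one checks $|u_1|\le 1$. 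Let $\phi^\star(\vx)=\bigl(\tfrac{1+\chi_S(\vx)}{2},0,\dots,0\bigr)$, a $\{0,1\}$-valued degree-$k$ function. For any $\phi\in\Phi^{k-1}_c$, its first coordinate $\phi_1$ is a polynomial of degree at most $k-1$ (by definition for $c=\textup{poly}$; because a $(k-1)$-mediator deviation is a degree-$(k-1)$ polynomial for $c=\textup{med}$), so $\widehat{\phi_1}(S)=0$ as $|S|=k$; and $\widehat{x_1}(S)=0$ since the Fourier support of $x_1$ is $\{\emptyset,\{1\}\}$. Hence the gain collapses to $\E_{\mathrm{unif}}[(\phi_1(\vx)-x_1)\chi_S(\vx)]=\widehat{\phi_1}(S)-\widehat{x_1}(S)=0$, whereas for $\phi^\star$ it equals $\widehat{\phi^\star_1}(S)-\widehat{x_1}(S)=\tfrac12>0$. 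Finally $\phi^\star\in\Phi^k_\textup{poly}$ is clear from its degree and range, and $\phi^\star\in\Phi^k_\textup{med}$ because it is computed by a depth-$k$ decision tree (given output index $j_0$: if $j_0=1$ query $x_2,\dots,x_{k+1}$ and return their parity indicator, otherwise return $0$), so $\phi^\star\in\Phi^k_\textup{DT}(\Delta_2^{k+1})=\Phi^k_\textup{med}(\Delta_2^{k+1})$.

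The main obstacle I anticipate is the reduction itself: one must check carefully that an arbitrary (here two-valued) gradient field genuinely arises from an honest correlated profile whose payoffs stay in $[-1,1]$, and that the dummy player's incentive constraints are vacuous regardless of the deviation class $\Phi_2$. A secondary subtlety is the choice of test character: a \emph{single} $\chi_S$ must simultaneously neutralize \emph{every} deviation of degree at most $k-1$ --- all constant deviations and the identity included --- which is precisely why we need $|S|=k$ together with $1\notin S$, and hence the $(k+1)$-dimensional cube rather than the $k$-dimensional one (on $\{0,1\}^{k}$ the same recipe fails already at $k=1$). Everything else is a routine verification.
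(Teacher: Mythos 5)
Your proof is correct and uses essentially the same construction as the paper: a two-player game with a dummy (zero-utility) second player, a hypercube strategy space for player~1, a uniform correlated profile that encodes a degree-$k$ parity through player~2's action, and a degree/Fourier-support argument showing that no deviation of degree $<k$ can extract the signal. The one noteworthy difference is that you work on the $(k{+}1)$-cube with $S=\{2,\dots,k{+}1\}$ disjoint from the ``output'' coordinate, whereas the paper works on the $k$-cube $\{-1,1\}^k$ with the target monomial $x_1x_2\cdots x_k$; your variant has the small but real advantage of handling $k=1$ uniformly, since in the paper's construction at $k=1$ the supposedly profitable deviation $\vx\mapsto x_1$ is the identity and P1's expected utility under $\pi$ is $1$, not $0$, so that base case requires a separate fix.
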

\begin{proof}
    Consider the two-player game $\Gamma$ defined as follows.
    \begin{itemize}
        \item P1's strategy space is $\X = \{-1, 1\}^{k}$. Player 2's strategy space is simply $\Y = \{-1, 1\}$.\footnote{These strategy spaces are not technically tree-form strategy spaces, but they are linear transformations of tree-form strategy spaces, so one can also rephrase this argument over tree-form strategy spaces. For cleanliness of notation, we stick to $\{-1, 1\}^k$ as the strategy space.}
        \item P1's utility function is $u_1(\vx, y) = x_1 y$. That is, P1 would like to set $x_1 = y$. P2 gets no utility.
    \end{itemize}
    Consider the correlated profile $\pi$ defined as follows: $\pi$ is uniform over the $2^{k}$ pure profiles $(\vx, y) \in \X \times \Y$ such that $y = x_1 x_2 \dots x_k$.  P1's expected utility is clearly $0$, and there is a swap (\ie, $\Phi^k_c$) deviation that yields a profit of $1$, namely $\vx \mapsto (x_1x_2 \dots x_k, \dots)$. (it does not matter what the swap deviation plays at coordinates other than the first one.) But, since all the $x_i$s are independent, no function of degree less than $k$ can have positive correlation with $x_1x_2 \dots x_k$, and thus, there are no profitable deviations of degree less than $k$. Thus, $\pi$ is a $\Phi^{k-1}_c$-equilibrium, but not a $\Phi^{k}_c$-equilibrium.\looseness-1
\end{proof}

\subsection{Characterization of recent low-swap-regret algorithms in our framework}
\label{sec:recentswap}

We have, throughout this paper, introduced and used a framework of $\Phi$-regret that involves fixed points in expectation. \Cref{prop:expected-fixedpoint} shows that the ability to compute fixed points in expectation is in some sense necessary for the ability to minimize $\Phi$-regret. It is instructive to briefly discuss how the recent swap-regret-minimizing algorithm of \citet{Dagan23:From} and \citet{Peng23:Fast} fits into this framework. Their algorithm makes no explicit reference to fixed-point computation, nor to the minimization of external regret over swap deviations $\phi$---they do not explicitly invoke the framework we use in this paper, nor that of  \citet{Gordon08:No}. Where is the expected fixed point hidden, then? While we will not present their entire construction here, it suffices to state the following property of it. At every round $t$, the learner outputs a distribution $\pi^{(t)} \in \Delta(\X)$ that is uniform on $L$ strategies $\vx^{(t, 1)}, \dots, \vx^{(t, L)}$. The way to map this into our framework is to consider $\pi^{(t)}$ an approximate fixed point in expectation of the ``function''\footnote{``Function'' is in quotes because the stated $\phi$ may not be a function at all; for example, the sequence $\vx^{(t, 1)}, \dots, \vx^{(t, L)}$ may contain repeats yet be aperiodic.} $\phi^{(t)}$ that maps $\vx^{(t, \ell)} \mapsto \vx^{(t, \ell+1)}$ for each  $\ell = 1, \dots, L-1$. With this choice of $\phi^{(t)}$, their algorithm indeed fits into our framework.

\subsection{Revelation principles (or lack thereof)}

Most notions of correlated equilibrium obey some form of {\em revelation principle}. Informally, one can treat a player attempting to deviate profitably from a correlated equilibrium as an interaction between a mediator (who sends useful information to the player) and the player (who tries to play optimally by using the mediator). When studying the regret of online algorithms, one assumes that the interaction with the mediator is \emph{canonical}: the mediator holds with it some sampled strategy profile $(\vx_1, \dots, \vx_n) \sim \pi$, and in equilibrium every player indeed plays $\vx_i$. We say that the {\em revelation principle holds} for a particular notion of equilibrium if allowing \emph{non-canonical} equilibria would not expand the set of equilibria. In \Cref{appendix:rp}, we give a rather general formalization of this notion, which is enough to encompass all the notions of correlated equilibrium discussed in the paper. We show that, in this formalism, the revelation principle {\em does not} hold for $k$-mediator equilibria or degree-$k$ swap equilibria when $k > 1$, and indeed in both cases the set of outcomes that can be induced by non-canonical equilibria is the set of {\em linear-swap} outcomes (\Cref{prop:rp-fail-med,theorem:rp-failed-poly}).

\section{Representation of strategies}
\label{appendix:representation}

In this section, we discuss how strategies $\pi \in \Delta(\X_1 \times \dots \times \X_n)$ are represented for the purposes of all of the results in this paper, and in particular for \Cref{cor:kmed-equi,cor:kdeg-equi}. In both cases, at each timestep, each player's strategy $\pi^{(t)}_i$ is a uniform mixture of $L = O(1/\eps)$ strategies $\delta(\vx_i^{(t,1)}), \dots, \delta(\vx_i^{(t,L)})$, and we have
\begin{align}
    \pi = \frac{1}{T} \sum_{t=1}^T \bigotimes_{i=1}^n \qty(\frac{1}{L} \sum_{\ell=1}^L \delta(\vx_i^{(t,\ell)})  ),\label{eq:pi1}
\end{align} 
that is, $\pi$ is a uniform mixture of products of mixtures of strategies that are themselves outputs of $\delta$.
Thus, if the strategy map $\delta$ is established by convention (for example, as mentioned before, it is often conventional to take $\delta$ to be the behavioral strategy map), it suffices to output $\vx_i^{(t, \ell)} \in \co \X_i$ for each $i, t, \ell$.

Suppose that we impose a slightly more stringent restriction on the output format, namely, we want $\pi$ to be a uniform mixture of products of mixtures of {\em pure} strategies. In that case, we can take $\delta$ to be the Carath\'eodory map.\footnote{We remind the reader here that the choice of $\delta$ must be the same one that was used to run the algorithm, so by taking $\delta$ here, we mean that we are considering a distribution $\pi$ that was computed by running our algorithm with that choice of $\delta$. That is, the $\pi$ in Eq.~\eqref{eq:pi1} is not the same as the $\pi$ in Eq.~\eqref{eq:pi2}.} Now, writing $\vx_i^{(t,\ell)} = \sum_{j \in [N]} \alpha^{(t, \ell, j)}_i \vx_i^{(t, \ell, j)}$ for $\vx_i^{(t, \ell, j)} \in \X_i$, we set 
\begin{align}
    \pi = \frac{1}{T} \sum_{t=1}^T \bigotimes_{i=1}^n \qty(\frac{1}{L} \sum_{\ell=1}^L \sum_{j=1}^N \alpha_i^{(t, \ell, j)} \vx_i^{(t,\ell, j)}  ). \label{eq:pi2}
\end{align}
So, our output consists of the strategies $\vx_i^{(t, \ell, j)} \in \X_i$ and their coefficients $\alpha^{(t, \ell, j)}_i$ for each $i, t, \ell, j$.
\section{On the revelation principle}
\label{appendix:rp}

In this section, we give a formalization of the revelation principle which encompass all the notions of correlated equilibrium discussed in the paper. In this formalism, the revelation principle {\em does not} hold for $k$-mediator equilibria or degree-$k$ swap equilibria when $k > 1$, and indeed in both cases we show that the set of outcomes that can be induced by non-canonical equilibria is the set of {\em linear-swap} outcomes (\Cref{prop:rp-fail-med,theorem:rp-failed-poly}).

Let $\mathscr{D}$ be the class of all finite tree-form decision problems. For each pair $\X, \Y \in \mathscr{D}$ let $\Phi_{\X,\Y} \subseteq (\co \X)^{\Y}$ be a subset of deviations. Finally let $\Phi = \bigsqcup_{\X, \Y \in \mathscr{D}} \Phi_{\X, \Y}$.   As in the previous section, consider a game where player $i$ has strategy set $\X_i$ and utility $u_i : \X_1 \times \dots \times \X_n \to [-1, 1]$. 
\begin{definition}
    Let $\Y_1, \dots, \Y_n \in \mathscr{D}$ be arbitrary tree-form strategy spaces, let $\pi \in \Delta(\Y_1 \times \dots \times \Y_n)$, and let $\phi_i \in \Phi_{\X,\Y}$ for each player $i$. We call the tuple $((\Y_i, \phi_i)_{i=1}^n, \pi)$ a {\em generalized profile}. A generalized profile is a {\em generalized $\eps$-$\Phi$-equilibrium} if no player $i$ can profit by switching to a different strategy mapping $\phi' : \Y_i \to \co \X_i$. That is, 
    \begin{align}
         \E_{\vy \sim \pi} u_i(\phi_i'(\vy_i), \phi_{-i}(\vy_{-i})) \le \E_{\vy \sim \pi} u_i(\phi_i(\vy_i), \phi_{-i}(\vy_{-i})) + \eps
     \end{align}
     for all players $i$ and $\phi_i' \in \Phi_{\X,\Y}$. We call a generalized profile {\em canonical} if $\Y_i = \X_i$ and $\phi_i : \X_i \to \co \X_i$ is the  identity map for every $i$. 
\end{definition}
In this language, the definitions of equilibrium in~\Cref{sec:conv-ce} were definitions of {\em canonical} equilibria. Every generalized profile induces a canonical profile, namely, the distribution over strategy given by sampling $\vy \sim \pi$ and returning $(\phi_1(\vx_1), \dots, \phi_n(\vx_n))$. Call two generalized profiles {\em equivalent} if they induce the same canonical profile. We can now define the revelation principle as follows.

\begin{definition}[Revelation principle]
    The class of deviations $\Phi$ {\em satisfies the revelation principle} if the induced canonical profile of every generalized $\eps$-$\Phi$-equilibrium is also an $\eps$-$\Phi$-equilibrium.
\end{definition}

For an example, let $\Phi$ be the set of all functions, so that the notion of equilibrium is the normal-form correlated equilibrium. Then one can think of the sample $\vy \sim \pi$ as a profile of signals (one signal per player) from a correlation device, and $\phi_i$ as player $i$'s mapping from signals to strategies. Then the revelation principle states that, without loss of generality (up to utility equivalence), one can assume that signals are recommendations of strategies $(\Y_i=\X_i)$ and players in equilibrium play their recommended strategies ($\phi_i : \X_i \to \co \X_i$ is the identity map).

All notions of equilibrium that we have mentioned can be expressed in this language, and the revelation principle applies to all of them.
\begin{proposition}[Sufficient conditon for revelation principle]\label{prop:rp}
    Let $\delta$ be a consistent strategy map in the sense of \Cref{sec:consistent-map}. Suppose that, for every $\phi \in \Phi_{\X, \Y}$ and $\psi \in \Phi_{\X, \X}$, we have $\psi^\delta \circ \phi \in \Phi_{\X, \Y}$. Then $\Phi$ satisfies the revelation principle.
\end{proposition}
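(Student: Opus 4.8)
The plan is to prove the contrapositive-style statement directly: fix an arbitrary generalized $\eps$-$\Phi$-equilibrium $((\Y_i,\phi_i)_{i=1}^n,\pi)$, let $\mu$ be its induced canonical profile, and show that any profitable \emph{canonical} deviation for $\mu$ can be ``lifted'' to a profitable deviation in the original generalized profile, contradicting the equilibrium hypothesis. Concretely, I will take $\mu \in \Delta(\X_1\times\dots\times\X_n)$ to be the distribution obtained by sampling $\vy\sim\pi$ and then, independently across players, sampling $\vx_i\sim\delta(\phi_i(\vy_i))$; consistency of $\delta$ guarantees that $\vx_i$ has conditional mean $\phi_i(\vy_i)$ given $\vy$, so this is a genuine pure-strategy correlated profile whose relevant expectations agree with those of the mean-valued map $\vy\mapsto(\phi_i(\vy_i))_i$.

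The first ingredient is a short bookkeeping identity: since each $u_i$ is multilinear and the $\vx_j$ are conditionally independent given $\vy$, we may pull the inner expectations through, obtaining $\E_{\vx\sim\mu}[u_i(\vx_i,\vx_{-i})]=\E_{\vy\sim\pi}[u_i(\phi_1(\vy_1),\dots,\phi_n(\vy_n))]$; that is, $\mu$ and the generalized profile it came from yield the same utilities. The second, slightly more delicate, ingredient handles a deviation. Let $\psi_i\in\Phi_{\X_i,\X_i}$ be a canonical deviation for player $i$, and set $\phi_i':=\psi_i^\delta\circ\phi_i$; by the hypothesis of the proposition this lies in $\Phi_{\X_i,\Y_i}$, hence is an admissible deviation in the generalized profile. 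Expanding $\psi_i^\delta(\phi_i(\vy_i))=\E_{\vx_i'\sim\delta(\phi_i(\vy_i))}[\psi_i(\vx_i')]$ together with $\phi_j(\vy_j)=\E_{\vx_j\sim\delta(\phi_j(\vy_j))}[\vx_j]$ for $j\neq i$, and invoking multilinearity of $u_i$ once more to move all of these expectations outside, yields
\[
\E_{\vy\sim\pi}\big[u_i(\phi_i'(\vy_i),\phi_{-i}(\vy_{-i}))\big]=\E_{\vx\sim\mu}\big[u_i(\psi_i(\vx_i),\vx_{-i})\big],
\]
since the conditional law of $(\vx_i',(\vx_j)_{j\neq i})$ given $\vy$ is exactly the one used to construct $\mu$.

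Combining the two identities finishes the argument. If $\mu$ were \emph{not} a canonical $\eps$-$\Phi$-equilibrium, some player $i$ and some $\psi_i\in\Phi_{\X_i,\X_i}$ would satisfy $\E_{\vx\sim\mu}u_i(\psi_i(\vx_i),\vx_{-i})>\E_{\vx\sim\mu}u_i(\vx_i,\vx_{-i})+\eps$; by the displayed equalities this would force $\E_{\vy\sim\pi}u_i(\phi_i'(\vy_i),\phi_{-i}(\vy_{-i}))>\E_{\vy\sim\pi}u_i(\phi_i(\vy_i),\phi_{-i}(\vy_{-i}))+\eps$ with $\phi_i'=\psi_i^\delta\circ\phi_i\in\Phi_{\X_i,\Y_i}$, contradicting that $((\Y_i,\phi_i)_i,\pi)$ is a generalized $\eps$-$\Phi$-equilibrium. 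Hence the induced canonical profile is a canonical $\eps$-$\Phi$-equilibrium, which is precisely the revelation principle. I expect the only real friction to be notational---carefully tracking the nested expectations (over $\pi$, over the $\delta(\phi_j(\vy_j))$, and over the purifying $\delta$ hidden inside $\psi_i^\delta$) and applying multilinearity of $u_i$ at the right moments---while the closure assumption $\psi^\delta\circ\phi\in\Phi_{\X,\Y}$ is exactly what makes the lifted deviation legal, so no further structural input about $\Phi$ is needed.
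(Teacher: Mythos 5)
Your proof is correct and follows essentially the same route as the paper: you form the induced canonical profile (made precise by purifying $\phi_i(\vy_i)$ through $\delta$), collapse the nested expectations using multilinearity of $u_i$ and conditional independence across players, and then invoke the closure assumption $\psi_i^\delta\circ\phi_i\in\Phi_{\X_i,\Y_i}$ to lift any profitable canonical deviation $\psi_i$ to a profitable generalized deviation $\phi_i'$. The only cosmetic differences are that the paper phrases this as a direct chain of inequalities rather than a contrapositive, and it is a bit looser with the purification step (writing the canonical profile as if $\phi_i(\vy_i)$ were already a pure strategy); your explicit insertion of $\delta$ and tracking of which expectations are being moved through $u_i$ is a legitimate tightening of the same argument.
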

\begin{proof}
    Given a generalized $\eps$-$\Phi$-equilibrium, $((\Y_i, \phi_i)_{i=1}^n, \pi)$, let $\pi'$ be its induced canonical profile. We need to show that $\pi'$ is also an $\eps$-$\Phi$-equilibrium. Consider any hypothetical deviation $\psi_i \in \Phi_{\X, \X}$. We have
    \begin{align}
        \E_{\vx \sim \pi'} u_i(\psi_i(\vx_i), \vx_{-i}) &= \E_{\vy \sim \pi} u_i((\psi_i^\delta \circ \phi_i)(\vy_i), \phi_{-i}(\vx_{-i})) 
        \\&\le \max_{\phi_i^* \in \Phi_{\X, \Y}} \E_{\vy \sim \pi} u_i(\phi_i^*(\vy_i), \phi_{-i}(\vx_{-i}))
        \\&\le \E_{\vy \sim \pi} u_i(\phi_i(\vy_i), \phi_{-i}(\vx_{-i})) + \eps
        \\&= \E_{\vx \sim \pi'} u_i(\vx_i, \vx_{-i}),
    \end{align}
    where the second line uses the assumed composition property.
\end{proof}
The revelation principle has, of course, been shown for various special cases of equilibrium before us: for example, NFCE \cite{Aumann74:Subjectivity}, linear-swap equilibria~\cite{Zhang24:Mediator}, and so on. Our proposition above generalizes these proofs since each of those notions indeed satisfies the requisite compositional criterion: compositions of arbitrary functions are still arbitrary functions, and compositions of linear maps are linear.

The definition of $k$-mediator functions and degree-$k$ polynomials are both easy to generalize from the $\X \to \X$ case to the $\Y \to \X$ case. We can therefore define {\em generalized $k$-mediator equilibria} and {\em generalized degree-$k$ swap equilibria}, and ask whether the revelation principle applies to these. \Cref{prop:rp} does not apply, because compositions of $k$-mediator and degree-$k$ functions will usually require more mediators and a higher degree. Indeed, we now show that the revelation principle fails for these notions when $k>1$. We will, in fact, show something quite strong. Recall \Cref{prop:separation}, which showed that both canonical $\Phi^k_\textup{med}$ and canonical $\Phi^k_\textup{poly}$ are {\em strictly} tightening notions of equilibrium as $k$ increases. Here we show that this is {\em not} the case for generalized equilibria.
\begin{theorem}
    \label{prop:rp-fail-med}
    For every $k \ge 1$, every {\em linear-swap} equilibrium is equivalent to a (non-canonical) $k$-mediator equilibrium.
\end{theorem}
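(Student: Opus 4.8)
The plan is to realize a given linear-swap equilibrium $\pi \in \Delta(\X_1 \times \dots \times \X_n)$ as a generalized $k$-mediator equilibrium by \emph{obfuscating} each player's recommendation. Intuitively, instead of handing player $i$ its recommended pure strategy $\vx_i \in \X_i$ directly, the correlation device hands player $i$ a \emph{signal} $\vy_i$ drawn from some auxiliary decision problem $\Y_i$, designed so that: (i) a fixed decoder $\phi_i \in \Phi^k_\textup{med}$ (in fact a single linear map, hence in $\Phi^1_\textup{med} \subseteq \Phi^k_\textup{med}$ by \Cref{th:linear}) recovers $\vx_i$ exactly from $\vy_i$, so the induced canonical profile is again $\pi$; and (ii) any $k$-mediator deviation applied to the signal reveals no more about $\vx_i$ than a \emph{linear} function of it, so that every deviation a signal-holding player could contemplate collapses to a linear deviation on $\X_i$ -- against which $\pi$ is by hypothesis unexploitable.

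First I would fix, for each $i$, the auxiliary decision problem $\Y_i$ together with an encoding map $g_i : \X_i \to \Y_i$ and the decoder $\phi_i : \Y_i \to \co \X_i$, chosen so that $\phi_i \circ g_i = \id_{\X_i}$ and $\phi_i$ is realized by a reduced strategy of $\X_i \otimes \bar\Y_i$. I would then take $\rho := g_* \pi \in \Delta(\Y_1 \times \dots \times \Y_n)$, the push-forward of $\pi$ along $g := (g_1, \dots, g_n)$. Sampling $\vy \sim \rho$ and returning $(\phi_1(\vy_1), \dots, \phi_n(\vy_n))$ reproduces $(\vx_1, \dots, \vx_n) \sim \pi$, so the generalized profile $((\Y_i, \phi_i)_{i=1}^n, \rho)$ is, by definition, equivalent to $\pi$.

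Next I would verify the equilibrium inequality. Fix a player $i$ and an arbitrary $k$-mediator deviation $\phi_i' : \Y_i \to \co \X_i$. On the support of $\rho$ we have $\phi_{-i}(\vy_{-i}) = \vx_{-i}$ and $\vy_i = g_i(\vx_i)$, so the deviator earns $\E_{\vx \sim \pi}[u_i(\phi_i'(g_i(\vx_i)), \vx_{-i})]$ while honest play earns $\E_{\vx \sim \pi}[u_i(\vx_i, \vx_{-i})]$. The key claim is that $\vx_i \mapsto \phi_i'(g_i(\vx_i))$ is a \emph{linear} map $\X_i \to \co \X_i$; granting it, the linear-swap property of $\pi$ gives exactly $\E_{\pi}[u_i(\phi_i'(g_i(\vx_i)), \vx_{-i})] \le \E_{\pi}[u_i(\vx_i, \vx_{-i})]$ (with an additive $\eps$ throughout if $\pi$ is only an $\eps$-equilibrium), which is what is needed. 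As a sanity check in the opposite direction, the reverse collapse -- every generalized $k$-mediator equilibrium is canonically a linear-swap equilibrium -- is the easy half: for linear $\psi_i$ and $k$-mediator $\phi_i$, the composite $\psi_i \circ \phi_i$ is again a $k$-mediator deviation, by the same ``combine coordinate-wise linear maps'' argument used in the proof of \Cref{th:low-deg-efg}.

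The main obstacle is the choice of $(\Y_i, g_i, \phi_i)$ making the collapse claim true: $\Y_i$ has to be ``wide'' enough that navigating it with $k$ mediators extracts only a linear functional of the encoded recommendation, yet structured enough to admit an exact linear decoder $\phi_i$. The naive attempt of letting $\Y_i$ be the simplex $\D_{|\X_i|}$ and $g_i(\vx_i)$ the corresponding vertex fails: a single mediator query already reveals which vertex, so a deviation recovers the entire pure strategy $\vx_i$ and one would need $\pi$ to be a full swap equilibrium. The fix is to demand that $g_i$ be a \emph{linear} embedding -- this already forces the ``degree-$1$'' part of the collapse -- and then to exploit the combinatorial structure of the interaction $\X_i \otimes \bar\Y_i^{\otimes k}$ to show that every degree-$\le k$ monomial that can actually appear in a reduced strategy there, when precomposed with $g_i$, reduces to a linear functional on $\X_i$; making this precise for all \emph{reachable} monomials, as opposed to all formal monomials, is the heart of the argument.
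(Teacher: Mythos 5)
Your high-level framework---hand the player an obfuscated signal from which a fixed decoder $\phi_i$ recovers $\vx_i$ exactly, while any $k$-mediator deviation on the signal collapses to a linear deviation on $\X_i$---is exactly the right skeleton, and matches what the paper does. But the single load-bearing step, which you openly flag as ``the heart of the argument,'' is not merely missing: your proposed fix cannot work. You require a \emph{deterministic} encoding $g_i : \X_i \to \Y_i$, and moreover a \emph{linear injective} one, so that $\rho = g_\ast \pi$ is a push-forward along a function. With any such $g_i$, each mediator query returns a deterministic coordinate of $g_i(\vx_i)$, hence (by linearity) a degree-$1$ function of $\vx_i$. An adaptive depth-$k$ query tree over degree-$1$ predicates of $\vx_i$ is, in general, a genuine degree-$k$ polynomial in $\vx_i$, not a linear one; the ``reachable-monomial'' structure of $\X_i \otimes \bar\Y_i^{\otimes k}$ offers no escape here, since adaptivity alone already produces the non-linear terms. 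So the collapse claim $\phi_i' \circ g_i \in \Phi^1_{\text{med}}$ is false for deterministic $g_i$, and the whole argument breaks at exactly the step you deferred.

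The missing idea is that the encoding must be \emph{randomized}. In the paper's proof, the action $a_j$ at each decision point $j$ is $k$-out-of-$k$ secret-shared: the mediator samples $a_j^{(1)}, \dots, a_j^{(k)}$ uniformly subject to $\sum_\ell a_j^{(\ell)} \equiv a_j \pmod{b_j}$, so that any proper subset of the shares is information-theoretically independent of $a_j$. A seed mechanism then forces the player to supply a distinct seed $\ell \in [k]$ to each of its $k$ mediators and follow consistent query paths in all of them, so that reconstructing even a single recommendation burns all $k$ mediators. The player therefore cannot do better than acting as if it had a single mediator, and by the paper's Theorem characterizing one-mediator deviations as exactly the linear maps, every realizable deviation on the signal space is linear on $\X_i$. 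Correspondingly, the distribution over signals is not a push-forward of $\pi$ along a function; it carries the fresh randomness of the shares, which is precisely what makes the collapse possible. Your proposal also overstates the ``easy half'' sanity check: that $\psi_i \circ \phi_i$ stays inside $\Phi^k_{\text{med}}$ for linear $\psi_i$ is not needed for this theorem and is not as immediate as you suggest (the paper explicitly notes that its compositional sufficient condition for the revelation principle \emph{fails} for $k$-mediator deviations), but this is a side issue compared to the main gap above.
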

\begin{proof}
    Let $\pi$ be any linear-swap equilibrium. We define a mediator of player $i$. Given a strategy $\vx \in \X$, the mediator will interact with the player as follows. First, for each decision point $j$ of player $i$, and let $a_j \in \cA_j$ be the action that is played by $\vx$. (If $\vx$ defines no action at $j$, the mediator selects one at random). The mediator samples integers $a_j^{(1)}, \dots, a_j^{(k)} \in [b_j]$ uniformly at random under the constraint that $\sum_{\ell=1}^k a_j^{(\ell)} = a_j \pmod{b_j}$. The important property that is derived from the above construction is simply that, in order to learn $a_j$, the player must know {\em all} the $a_j^{(\ell)}$s, and without knowing all of them, the player learns {\em no information whatsoever}.

    When the player arrives, the mediator has the following interaction with the player. First, the player must supply an integer $\ell \in [k]$. We will call $\ell$ the {\em seed}. Then, whenever the player sends a decision point $j$, the mediator checks whether the decision point sent is consistent with the sequence of decision points previously sent by the player (\ie, if $j$ is a possible next-decision-point following the previous decision point sent). If not, the mediator terminates the interaction. If so, the mediator sends $a_j^{(\ell)}$ to the player.

    Now consider how a player can use $k$ copies of such a mediator. In order to learn any useful information at all about any decision point $j$, the player must (1) supply a different seed to each of the $k$ mediators, and (2) query decision point $j$ with all $k$ mediators. Therefore, the player can essentially only use these mediators as if they were one, giving the same sequence of queries to every mediator and computing the true action recommendation by computing $\sum_{\ell} a_j^{(\ell)} \pmod{b_j}$. Thus, the player can only implement deviations that it could with a single mediator, which by \Cref{th:linear} are the linear deviations. 

    What we have thus shown is the following. Let $\Y_i$ be a strategy space that represents the above interaction with the mediator, and let $\phi_i$ be the $k$-mediator deviation that acts according to the previous paragraph. Let $\pi' \in \Delta(\Y_1 \times \dots \times \Y_n)$ be the distribution in which a strategy $\vx \sim\pi$ is sampled, the $a_j^{(\ell)}$s are sampled according to the first paragraph, and each mediator plays according to the strategy $\vy_i$ that answers queries according to those $a_j^{(\ell)}$s. Then $((\Y_i, \phi_i)_{i=1}^n, \pi')$ is a non-canonical $\Phi^k_\textup{med}$-equilibrium.
\end{proof}

\begin{theorem}
    \label{theorem:rp-failed-poly}
    Suppose that every player's strategy space in a given game $\Gamma$ is the hypercube $\X = \{-1, 1\}^N$. Then every linear-swap in $\Gamma$ equilibrium is equivalent to a (non-canonical) degree-$k$-swap equilibrium.
\end{theorem}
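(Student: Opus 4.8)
The plan is to mirror the construction used in the proof of \Cref{prop:rp-fail-med}, with multiplicative secret sharing over the two-element group $\{-1,1\}$ playing the role of the additive secret sharing modulo the branching factor, and a single high-dimensional garbled signal playing the role of the $k$ mediators. Concretely, fix a linear-swap equilibrium $\pi \in \Delta(\X_1 \times \dots \times \X_n)$ of $\Gamma$, where $\X_i = \{-1,1\}^N$. For each player $i$ we take the non-canonical signal space $\Y_i := \{-1,1\}^{kN}$, with coordinates indexed by pairs $(j,\ell) \in [N]\times[k]$, and we define $\pi' \in \Delta(\Y_1 \times \dots \times \Y_n)$ by the following sampling process: draw $\vx \sim \pi$; then, independently across all players $i$ and coordinates $j \in [N]$, draw the block $(y_i^{(j,1)}, \dots, y_i^{(j,k)})$ uniformly at random from $\{\vz \in \{-1,1\}^k : z_1 \cdots z_k = \vx_i[j]\}$. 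The honest deviation is the decoding map $\phi_i : \Y_i \to \X_i$ with $\phi_i(\vy_i)[j] := \prod_{\ell=1}^k y_i^{(j,\ell)}$, which is a degree-$k$ polynomial and hence lies in the (generalized) class $\Phi^k_\textup{poly}$. Since $\phi_i(\vy_i) = \vx_i$ holds identically under $\pi'$, the generalized profile $((\Y_i, \phi_i)_{i=1}^n, \pi')$ induces exactly the canonical profile $\pi$; it then remains to verify that it is a generalized $\eps$-degree-$k$-swap equilibrium, which is precisely the claimed equivalence.

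The heart of the argument is the following Fourier-analytic observation about the garbling: for every polynomial $g : \Y_i \to \R$ of degree at most $k$, the conditional-expectation map $\vx_i \mapsto \E[g(\vy_i) \mid \vx_i]$ is affine in $\vx_i$. Indeed, expand $g = \sum_{S \subseteq [N]\times[k]} \widehat g(S)\, \chi_S$ with $|S| \le k$. Within a single block $j$, conditioned on $\vx_i[j]$, the shares are uniform subject only to the constraint that their product equals $\vx_i[j]$, so a direct computation shows that $\E[\prod_{\ell \in R} y_i^{(j,\ell)} \mid \vx_i[j]]$ equals $1$ if $R = \emptyset$, equals $\vx_i[j]$ if $R = [k]$, and equals $0$ for every nonempty proper $R \subsetneq [k]$. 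Since the $N$ blocks are conditionally independent given $\vx_i$, it follows that $\E[\chi_S(\vy_i) \mid \vx_i] = 0$ unless $S$ is a disjoint union of full blocks $\{j\}\times[k]$ for $j$ ranging over some $T \subseteq [N]$, in which case it equals $\prod_{j \in T}\vx_i[j]$. But $|S| = k\,|T| \le k$ forces $|T| \le 1$, so only the constant character and the single-coordinate characters survive, establishing affineness.

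To close, let $\phi_i' \in \Phi^k_\textup{poly}$ be any (generalized) degree-$k$ deviation $\Y_i \to \co\X_i$. Applying the observation to each coordinate yields an affine map $A : \R^N \to \R^N$ with $A(\vx_i) = \E[\phi_i'(\vy_i)\mid\vx_i]$, and since $\phi_i'$ has range in $\co\X_i = [-1,1]^N$, so does $A$ on $\X_i$; thus $A$ restricts to a linear deviation $\X_i \to \co\X_i$. Now use that $u_i$ is multilinear and that, conditioned on $\vx$, the signals $\vy_1, \dots, \vy_n$ are independent with $\E[\phi_j(\vy_j)\mid\vx_j] = \vx_j$: pulling expectations through each argument of $u_i$ gives
\begin{align}
\E_{\vy\sim\pi'}\, u_i\!\big(\phi_i'(\vy_i),\, \phi_{-i}(\vy_{-i})\big)
&= \E_{\vx\sim\pi}\, u_i\!\big(A(\vx_i),\, \vx_{-i}\big) \notag \\
&\le \E_{\vx\sim\pi}\, u_i(\vx_i, \vx_{-i}) + \eps
= \E_{\vy\sim\pi'}\, u_i\!\big(\phi_i(\vy_i),\, \phi_{-i}(\vy_{-i})\big),
\end{align}
where the inequality is the defining property of the linear-swap equilibrium $\pi$ applied to the linear deviation $A$. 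Hence no player can gain more than $\eps$, so $((\Y_i,\phi_i)_{i=1}^n, \pi')$ is a non-canonical generalized $\eps$-degree-$k$-swap equilibrium that is equivalent to $\pi$.

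I expect the main obstacle to be the Fourier/secret-sharing core---the claim that any proper subset of a coordinate's $k$ shares carries no information about that coordinate, which is exactly what pins a degree-$\le k$ observer down to behaving linearly in the true strategy---together with the surrounding bookkeeping: confirming that $\Y_i = \{-1,1\}^{kN}$ is admissible as a strategy space (as usual, via a linear transformation of a tree-form decision problem), that the honest map and every candidate deviation indeed belong to the generalized class $\Phi^k_\textup{poly}$ with the right domain and codomain, and that the multilinearity-plus-conditional-independence manipulation in the displayed inequality is fully justified.
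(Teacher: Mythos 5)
Your proof is correct and follows essentially the same path as the paper's: the same signal space $\Y_i = \{-1,1\}^{kN}$, the same multiplicative secret-sharing construction of $\pi'$, the same decoding map $\phi_i(\vy_i)[j]=\prod_\ell y_i^{(j,\ell)}$, and the same monomial-by-monomial analysis showing that conditional expectation collapses any degree-$\le k$ deviation to a linear map in $\vx_i$. The only differences are cosmetic (you phrase the core computation in explicit Fourier language, handle the constant character $S=\emptyset$ rather than eliding it, and spell out the multilinearity-plus-conditional-independence step that the paper leaves implicit), so this is a correct reconstruction of the paper's argument rather than an alternative route.
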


\begin{proof}
    The proof follows a similar idea to the previous one: we wish to construct a scenario such that, with any polynomial of degree less than $k$, the deviator cannot learn anything about $\vx$, and with a polynomial of degree $k$ the deviator can only implement linear functions on $\vx$. Let $\pi$ be any linear-swap equilibrium. Let $\Y = \{-1, 1\}^{Nk}$, and index the coordinates of $\vy \in \Y$ by pairs $(j, \ell)$ where $j \in [N]$ and $\ell \in [k]$. Define $\phi(\vy)_j = \prod_{\ell \in [k]} \vy[j, \ell]$. Finally, define the distribution $\pi' \in \Delta(\Y^n)$ as follows: sampling $\vy \sim \pi'$ is done by sampling $\vx \sim \pi$, and then, for each player $i$ and index $j$, sampling $\vy_i[j, \cdot] \in \{-1, 1\}^\ell$ uniformly at random under the constraint that $\prod_\ell \vy_i[j, \ell] = \vx_i[j]$. We will write $\pi'|\vx$ for the conditional distribution of $\vy \sim \pi'$, conditioned on sampling the given $\vx \sim \pi$. Now consider the generalized profile $(\Y, \phi, \pi')$ (\ie, every player shares the same signal set $\Y$ and equilibrium deviation function $\phi : \Y \to \X$). We see that it is equivalent to $(\X, \id, \pi)$ by construction. We claim now that it is a degree-$k$ equilibrium which would complete the proof. Consider any degree-$k$ function $\phi' : \Y \to \X$, and define $\psi : \X \to \X$ by $\psi(\vx) =\E_{\vy \sim \pi'|\vx} \phi'(\vy)$. It suffices to show that $\psi$ is a linear map, since then deviating to $\phi'$ would equate to applying the linear deviation $\psi$ to $\vx$. Indeed, expressing $$\phi'(\vy) = \sum_{S \subseteq [N] \times [k], |S| \le k} \alpha_S  m_S(\vy) \qq{where} m_S(\vy) := \prod_{(j, \ell) \in S} \vy[i, s],$$ we see that $\E_{\vy \sim \pi'|x} m_S(\vy) = 0$ except when $S = S_j := \{ (j, \ell) : \ell \in [k]\}$ for some $j$, in which case $\E_{\vy \sim \pi'|x} m_S(\vy) = \vx[j]$. That is, the only monomials of nonzero expectation are those which multiply together all the $\vy[j, \cdot]$s, in which case the expectation is exactly $\vx[j]$. Thus, we have
    \begin{align}
        \psi(\vx) = \sum_S \alpha_S \E_{\vy \sim \pi'|\vx} \qty[  m_S(\vy)] = \sum_j \alpha_{S_j} \E_{\vy \sim \pi'|\vx} \qty[  m_{S_j}(\vy)] = \sum_j \alpha_{S_j} \vx[j]
    \end{align}
    which is indeed linear in $\vx$.
\end{proof}

One may wonder at this point about why the above two results do not contradict the fact that NFCE, which supposedly are the $\Phi^N_\textup{med}$-equilibria, {\em do} satisfy the revelation principle and are certainly {\em not} equivalent to linear-swap equilibria. The answer is that the equivalence between swap deviations and $N$-mediator deviations only holds for strategy spaces of size at most $N$---and the strategy spaces $\Y_i$ constructed as part of both proofs have size (at least) $Nk$. When $\Y_i$ has size more than $N$, the set $\Phi^N_\textup{med}$ is no longer the set of all functions $\Y_i \to \co\X_i$. In other words, the set of {\em canonical} $\Phi^N_\textup{med}$-equilibria is equivalent to the set of {\em canonical} NFCE, but that does not contradict the fact that there exist non-canonical $\Phi^N_\textup{med}$-equilibria.
\section{Omitted proofs}
\label{appendix:proofs}

In this section, we provide the proofs omitted from the previous sections.

\subsection{Proofs from Section~\ref{sec:hardness-fp}}
\label{appendix-fphardness}

We first provide the missing proofs from~\Cref{sec:hardness-fp}. We start with~\Cref{prop:Hazan-fixedpoint}, the statement of which is recalled below.

\Hazan*

\begin{proof}
    Suppose that $\cR$ outputs a strategy $\vx^{(t)} \in \co \cX$ at each time $t \in \range{T}$. The basic idea is to determine whether $\| \phi^\beta(\vx^{(t)}) - \vx^{(t)} \|_2 \leq \epsilon \sqrt{N}$; if so, the process can terminate as we have identified an $(\epsilon\sqrt{N})$-fixed point of $\phi^\beta$ with respect to $\|\cdot\|_2$. Otherwise, we construct the utility function
\begin{equation}
    u^{(t)}: \co \cX \ni \vx \mapsto \frac{1}{\sqrt{N}} \frac{1}{\| \phi^\beta(\vx^{(t)}) - \vx^{(t)} \|_2} \langle \phi^\beta(\vx^{(t)}) - \vx^{(t)}, \vx - \vx^{(t)} \rangle.
\end{equation}
We note that, by Cauchy-Schwarz, $|u^{(t)}(\vx)| \leq 1$ since $\|\vx - \vx^{(t)} \|_2 \leq \sqrt{N}$; hence, the utility function adheres to our normalization constraint. Now, if at all rounds it was the case that $\| \phi^\beta(\vx^{(t)}) - \vx^{(t)} \|_2 > \epsilon \sqrt{N}$, we have
\begin{equation}
    \label{eq:largePhireg}
    \reg_{\Phi^\beta}^T \geq \frac1T\sum_{t=1}^T u^{(t)}(\phi^\beta(\vx^{(t)})) - \frac{1}{T}\sum_{t=1}^T u^{(t)}(\vx^{(t)}) > \epsilon,
\end{equation}
since $u^{(t)}(\vx^{(t)}) = 0$ and $u^{(t)}(\phi^\beta(\vx^{(t)})) = \frac{1}{\sqrt{N}} \| \phi^\beta(\vx^{(t)}) - \vx^{(t)} \|_2  $ for all $t \in \range{T}$. We conclude that \eqref{eq:largePhireg} contradicts the assumption that $\reg_{\Phi^\beta}^T \leq \epsilon$ for any sequence of utilities, in turn implying that there exists $t \in \range{T}$ such that $\| \phi^\beta(\vx^{(t)}) - \vx^{(t)} \|_2 \leq \epsilon \sqrt{N}$. Given that we can evaluate $\phi^\beta$ in time $\poly(N)$ (by assumption), we can also compute each error $\| \phi^\beta(\vx^{(t)}) - \vx^{(t)} \|_2$ in polynomial time, concluding the proof.
\end{proof}

We next turn our attention to the proof of~\Cref{theorem:behavioral}. We first observe that $\Phi^\beta$ contains functions of the following form.

\begin{lemma}
    \label{lemma:phi-beta}
    $\Phi^\beta$ contains all functions of the form
    \begin{equation}
        \label{eq:Phi-beta}
    \left( \sum_{S_1 \subseteq \range{N}} \phi_1(S_1) \prod_{j \in S_1} \vx[j] \prod_{j \in \range{N} \setminus S_1} (1 - \vx[j]) , \dots, \sum_{S_N \subseteq \range{N}} \phi_N(S_N) \prod_{j \in S_N} \vx[j] \prod_{j \in \range{N} \setminus S_N} (1 - \vx[j]) \right),
    \end{equation}
    where $\phi = (\phi_1, \dots, \phi_N) : \{0, 1\}^N \to \{0, 1\}^N$. (The convention above is that a product with no terms is to be evaluated as $1$.)
\end{lemma}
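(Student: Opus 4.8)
The plan is to simply unwind the two definitions involved --- the behavioral strategy map $\beta$ specialized to the hypercube, and the extended map $\phi^\beta$ --- so the argument is short.

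First I would record what $\beta$ does when $\cX = \{0,1\}^N$. As recalled at the start of \Cref{sec:bayesian}, this hypercube is (a linear image of) the tree-form decision problem whose root is an observation point with $N$ children, each of which is a decision point with the two actions $0$ and $1$, and under that identification coordinate $j$ of $\vx$ records whether action $1$ is played at the $j$-th decision point. Because no player action lies on any root-to-decision-point path, every decision point $j$ is reached with $\vx[j] > 0$, so \Cref{def:beta} specializes to the statement that $\beta(\vx)$ is the product distribution on $\{0,1\}^N$ under which the coordinates are mutually independent and coordinate $j$ equals $1$ with probability $\vx[j]$. Equivalently, for every $\vy \in \{0,1\}^N$,
\[
    \beta(\vx)[\vy] = \prod_{j \,:\, \vy[j] = 1} \vx[j] \;\prod_{j \,:\, \vy[j] = 0} (1 - \vx[j]).
\]

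Next I would fix an arbitrary $\phi = (\phi_1, \dots, \phi_N) : \{0,1\}^N \to \{0,1\}^N$. Any such $\phi$ is in particular a deviation $\cX \to \cX \subseteq \co\cX$, hence belongs to the (swap-)deviation class $\Phi$ considered in this section, so $\phi^\beta \in \Phi^\beta$. By definition $\phi^\beta(\vx) = \E_{\vy \sim \beta(\vx)}[\phi(\vy)]$; writing out the $i$-th coordinate and substituting the formula above for $\beta(\vx)[\vy]$ yields
\[
    \phi^\beta(\vx)[i] = \sum_{\vy \in \{0,1\}^N} \phi_i(\vy) \prod_{j \,:\, \vy[j] = 1} \vx[j] \;\prod_{j \,:\, \vy[j] = 0} (1 - \vx[j]).
\]
Finally I would reindex the sum by the support $S_i := \{ j : \vy[j] = 1\}$ of $\vy$; since $\vy \mapsto S_i$ is a bijection between $\{0,1\}^N$ and the subsets of $[N]$ and $\phi_i(\vy) = \phi_i(\bone_{S_i})$ is exactly the coefficient denoted $\phi_i(S_i)$ in \eqref{eq:Phi-beta}, the right-hand side becomes precisely the $i$-th coordinate of the function displayed in \eqref{eq:Phi-beta}. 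As $i$ and $\phi$ were arbitrary, $\Phi^\beta$ contains every function of that form.

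There is no genuinely hard step here; the only points needing a line of care --- neither a real obstacle --- are (i) checking that $\beta$ on the hypercube really is the fully independent product distribution, which is immediate from the tree structure above, and (ii) being explicit about the identification $S_i \leftrightarrow \bone_{S_i}$, so that letting $\phi$ range over all maps $\{0,1\}^N \to \{0,1\}^N$ does realize every admissible assignment of the coefficients $\phi_i(S_i) \in \{0,1\}$ appearing in \eqref{eq:Phi-beta}.
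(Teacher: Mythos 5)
Your proof is correct and follows the same reasoning as the paper's: both observe that on the hypercube $\beta(\vx)$ is the independent product distribution with marginals $\vx[j]$, so the probability of the pure strategy with support $S$ is $\prod_{j\in S}\vx[j]\prod_{j\notin S}(1-\vx[j])$, and then expand $\phi^\beta(\vx)=\E_{\vy\sim\beta(\vx)}\phi(\vy)$ coordinate-wise. You spell out the reindexing step $\vy \leftrightarrow S_i$ that the paper leaves implicit, but the argument is the same.
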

Above, we slightly abuse notation by interpreting $S_j \subseteq \range{N}$ as a point in $\{0, 1\}^N$. \Cref{lemma:phi-beta} is evident from the definition of the behavioral strategy map $\beta$: $ \phi^\beta(\vx) = \E_{\vx' \sim \beta(\vx) } \phi(\vx')$, and expanding the expectation gives the expression of~\Cref{lemma:phi-beta}---the probability of a set $S$ is exactly $\prod_{j \in S} \vx[j] \prod_{j \in \range{N} \setminus S} (1 - \vx[j])$. We will show that $\PPAD$-hardness for computing fixed points persists under the set of functions contained in $\Phi^\beta$ (per \Cref{lemma:phi-beta}).

Our starting point is the usual \emph{generalized circuit} problem (abbreviated as $\gcircuit$), introduced by~\citet{Chen09:Settling}; it is a generalization of a typical arithmetic circuit but with the twist that it may include \emph{cycles}. (In what follows, we borrow some notation from the paper of~\citet{FilosRatsikas23:Complexity}.)

\begin{definition}[\citealp{Chen09:Settling}]
    A generalized circuit with respect to a set of gate-types $\cG$ is a list of gates $g_{\numone}, \dots, g_{\numM}$. Every gate $g_i$ has a type $G_i \in \cG$. Depending on its type, $g_i$ may have zero, one or two input gates, which will be index by $j, k \in \range{\numM}$, with the restriction that $i, j, k$ are pairwise distinct.
\end{definition}

We denote by $v : \range{\numM} \to [0, 1]$ a function that assigns an input gate to a value in $[0, 1]$. Each gate imposes a constraint induced by its corresponding type. For example, let us consider the following types.

\begin{itemize}
    \item if $G_i = G_+$, then $v(g_i) = \min(1, v(g_j) + v(g_k))$;
    \item if $G_i = G_-$, then $v(g_i) = \max(0, v(g_j) - v(g_k))$; 
    \item if $G_i = G_1$, then $v(g_i) = 1$;
    \item if $G_i = G_{1-}$, then $v(g_i) = 1 - v(g_j)$; and
    \item if $G_i = G_\times$, then $v(g_i) = v(g_j) \cdot v(g_k)$.
\end{itemize}

In accordance with the above types, we define $F : [0, 1]^\numM \to [0, 1]^\numM$ to be the function mapping any initial evaluation of the gates to $(v(g_\numone), \dots, v(g_\numM))$. The main problem of interest can be now phrased as follows.

\begin{definition}[$\epsilon$-$\gcircuit$]
    \label{def:gcircuit}
    The problem $\epsilon$-$\gcircuit$ asks for a fixed point of $F$ with respect to $\|\cdot\|_\infty$; that is, an assignment $v : [\numM] \to [0, 1]$ such that all gates are $\epsilon$-satisfied.
\end{definition}

A satisfying assignment always exists by Brouwer's theorem, but the associated computational problem is \PPAD-hard. In fact, by virtue of a recent result of~\citet{FilosRatsikas23:Complexity}, \PPAD-hardness persists even if one significantly restricts the type of gates.

\begin{theorem}[\citealp{FilosRatsikas23:Complexity}]
    \label{theorem:GC}
    Even when $\cG \defeq \{G_+, G_{1-} \}$, there is an absolute constant $\epsilon > 0$ such that computing an $\epsilon$-fixed point of $F$ with respect to $\|\cdot\|_\infty$ is \PPAD-complete.\footnote{\citet{Deligkas22:Pure} showed that, for a certain variant of this problem, any constant $\epsilon < 0.1$ suffices. It was~\citet{Rubinstein16:Settling} who first proved that the problem is \PPAD-hard even for a constant $\epsilon > 0$.}
\end{theorem}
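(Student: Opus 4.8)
The plan is to reduce from the unrestricted version of $\epsilon_0$-$\gcircuit$ — the one allowing the full original gate repertoire of~\citet{Chen09:Settling} (addition, subtraction, scaling by a constant, constant gates, boolean/comparison gates, and genuine multiplication) — which is \PPAD-hard for an absolute constant $\epsilon_0 > 0$ by~\citet{Rubinstein16:Settling}, building on~\citet{Chen09:Settling}. The reduction would replace every gate of the input circuit by a constant-size \emph{gadget} built exclusively from $G_+$ and $G_{1-}$ gates, and then argue that any $\epsilon$-fixed point of the resulting $\{G_+, G_{1-}\}$-circuit, for $\epsilon = \Theta(\epsilon_0)$ chosen small enough, restricts to an $\epsilon_0$-fixed point of the original circuit. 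Membership in \PPAD\ is immediate and gate-set independent: the evaluation map $F : [0,1]^\numM \to [0,1]^\numM$ is continuous, so an $\epsilon$-fixed point exists by Brouwer's theorem and can be located by the canonical \PPAD\ search problem obtained from Sperner's lemma.

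First I would dispose of the ``easy'' gates. The gate $G_{1-}$ is primitive; subtraction follows from the identity $\max(0, x - y) = 1 - \min(1,(1-x)+y)$, i.e.\ the three-gate gadget $g \leftarrow G_{1-}(G_+(G_{1-}(x), y))$; doubling is free because $G_+(x,x) = \min(1, 2x)$, and this yields the constant $\tfrac12$ via the one-cycle constraint $g = G_{1-}(g)$, hence also $1$ and $0$, hence (by a constant-depth combination of halving-and-adding gadgets) any dyadic constant and thus any $\zeta \in [0,1]$ up to additive error $\epsilon_0$; scaling by a fixed $\zeta \in (0,1)$ decomposes into halvings and additions in the same way. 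Each such gadget has $O(1)$ size, so whenever all of its internal gates are $\epsilon$-satisfied, the composite relation it encodes holds up to $O(\epsilon)$ error.

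The crux — and the step I expect to be the main obstacle — is realizing the \emph{brittle comparator} $G_<$ (and, unless one first passes through a purely ``linear'' formulation, the multiplication gate $G_\times$) using only addition and negation. The idea is an amplification loop: starting from $d := \tfrac12 + x - y$, which exceeds $\tfrac12$ exactly when $x > y$, iterate the affine-with-truncation update $z \mapsto \min\{1, 2(z - \tfrac12) + \tfrac12\}$ a \emph{constant} number of times; since $\epsilon_0$ is an absolute constant, $O(\log(1/\epsilon_0)) = O(1)$ rounds push $z$ to within $O(\epsilon_0)$ of $0$ or $1$, and the loop is closed self-consistently at a fixed point of the whole circuit. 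The delicate points are checking that the truncation inside $G_+$ does not sabotage the recursion, that the gadget remains robust when its gates are only $\epsilon$-satisfied rather than exactly satisfied, and that the unavoidable ``grey zone'' near $x = y$ is absorbed exactly as in the standard analysis of brittle $\gcircuit$ gadgets. Multiplication, if needed, is then obtained by bit-extracting one argument with these comparators and doing shift-and-add; alternatively one invokes the observation that the hard instances of~\citet{Rubinstein16:Settling} can be taken ``linear'', so $G_\times$ never occurs with two genuine variables and the comparator gadget is the only nontrivial ingredient. A short error-propagation argument — every gadget has $O(1)$ depth and amplifies the satisfaction error by an $O(1)$ factor — then fixes $\epsilon = c\,\epsilon_0$ for a suitable small constant $c > 0$ and completes the reduction.
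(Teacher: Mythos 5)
The paper provides no proof of this statement; it is imported verbatim from \citet{FilosRatsikas23:Complexity} (with a pointer to \citet{Rubinstein16:Settling} for the original constant-$\epsilon$ hardness). There is therefore no in-paper argument to compare yours against; for the record, the cited source proceeds through a different abstraction --- a discrete ``pure circuit'' problem built from NOT/OR-style gates and a PURIFY primitive --- rather than the direct gadget reduction you propose. Your high-level plan, expanding each gate of an $\epsilon_0$-$\gcircuit$ instance over the full gate set into a constant-size $\{G_+, G_{1-}\}$ gadget and then absorbing the constant error blowup, is nonetheless a reasonable independent route.

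The difficulty is concentrated exactly where you flagged it, and the comparator gadget does not work as written. In $\{G_+, G_{1-}\}$ every intermediate wire carries a value in $[0,1]$, so the subexpression $z - \tfrac12$ in your amplification map $z \mapsto \min\{1, 2(z-\tfrac12)+\tfrac12\}$ is necessarily realized as $\max(0, z-\tfrac12)$, and the gadget then outputs a value $\ge \tfrac12$ for \emph{every} $z$, never driving the signal toward $0$. Computing $G_+(z,z) = \min(1,2z)$ first and subtracting $\tfrac12$ afterward is no better: the doubling saturates at $1$ once $z\ge\tfrac12$, collapsing the composite to the constant $\tfrac12$ on that whole range. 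A working reordering exists --- set $w = G_{1-}\bigl(G_+(G_{1-}(z), c_{1/4})\bigr) = \max(0, z-\tfrac14)$ and then take $G_+(w,w) = \min(1,2w)$, which is exactly the clamp of $2z-\tfrac12$ --- but this is precisely the detail that has to be gotten right, and the version you wrote is not it. Two further issues: the amplification chain has depth $\Theta(\log(1/\epsilon_0))$ and multiplies the per-gate error by roughly $2^{\Theta(\log(1/\epsilon_0))} = \mathrm{poly}(1/\epsilon_0)$, so the target precision must shrink polynomially in $\epsilon_0$, not linearly as ``$\epsilon = c\,\epsilon_0$'' suggests (harmless since $\epsilon_0$ is an absolute constant, but compare \Cref{lemma:min-approx}, whose analogous iteration needs $\epsilon' = O(\epsilon^2)$); and the assertion that Rubinstein's hard instances avoid two-variable multiplication is load-bearing --- every composition of $G_+$ and $G_{1-}$ is piecewise linear, so $G_\times$ genuinely cannot be gadgetized here --- and should be cited or argued rather than asserted in passing.
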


Yet, the addition gate $G_+$ does not induce a multilinear in the form of~\Cref{lemma:phi-beta}. We will address this by showing that the gate $G_+$ can be approximately simulated via a small number of gates with type either $G_{1-}$ or $G_\times$. To provide better intuition, we first prove this claim under the assumption that the gates are satisfied exactly, and we then proceed with the more general statement. In the sequel, we sometimes use the shorthand notation $t = t' \pm \epsilon \iff t \in [ t' - \epsilon, t' + \epsilon]$.

\begin{lemma}
    Any addition gate $G_+$ can be approximated with at most $\epsilon > 0$ error using $O(\log(1/\epsilon)/\epsilon)$ gates with type either $G_{1-}$ or $G_\times$.
\end{lemma}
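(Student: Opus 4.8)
The plan is to implement the addition gate not via one algebraic formula but by \emph{iterating} a pair of operations that jointly preserve the sum. Write $\mathrm{OR}(x,y) := 1 - (1-x)(1-y)$ and $\mathrm{AND}(x,y) := xy$; each maps $[0,1]^2$ into $[0,1]$ and is computable from $x,y$ using a constant number of $G_{1-}$ and $G_\times$ gates. The algebraic identity driving everything is $\mathrm{OR}(x,y) + \mathrm{AND}(x,y) = x + y$, so $\min(1, \mathrm{OR}(x,y) + \mathrm{AND}(x,y)) = \min(1, x+y)$; equivalently, the map $T(x,y) := (\mathrm{OR}(x,y), \mathrm{AND}(x,y))$ leaves both $x+y$ and $\min(1,x+y)$ invariant while, morally, draining the second coordinate into the first.

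First I would build the gadget as the $n$-fold composition $T^{\circ n}$ with $n := \lceil 1/\epsilon \rceil$, started at $(x_0, y_0) := (x,y)$ and returning the first coordinate $x_n$. Each application of $T$ uses $O(1)$ gates ($1-x_i$, $1-y_i$, their product, its complement, and $x_iy_i$), so the whole gadget has $O(1/\epsilon)$ gates, comfortably within the claimed $O(\log(1/\epsilon)/\epsilon)$ bound. To replace a $G_+$ gate $g_i$ with inputs $g_j, g_k$ in a generalized circuit, one feeds $v(g_j), v(g_k)$ in as $x_0, y_0$, adds the $O(1/\epsilon)$ internal gates, and reuses $g_i$ as the output (its type becoming $G_{1-}$); under exact satisfaction of the new gates, $v(g_i) = x_n$.

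It then remains to prove $|x_n - \min(1, x+y)| \le \epsilon$. The clean move is to track $t_i := \min(1 - x_i,\, y_i)$. Since $x_i \le 1$ for all $i$ (every $x_i$ is an $\mathrm{OR}$-output or the original input) and $x_i + y_i = x+y$ for all $i$, one has the identity $\min(1, x_i+y_i) - x_i = \min(1-x_i, y_i) = t_i$, hence $|x_i - \min(1, x+y)| = t_i$. The core estimate is the recursion $t_{i+1} \le t_i(1 - t_i)$: substituting $1 - x_{i+1} = (1-x_i)(1-y_i)$ and $y_{i+1} = x_i y_i$ into $t_{i+1} = \min\{(1-x_i)(1-y_i),\ x_i y_i\}$ and splitting on whether $x_i + y_i \le 1$ or $\ge 1$ collapses both cases to $t_{i+1} \le t_i(1-t_i)$. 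From $t_{i+1} \le t_i - t_i^2$ we get $1/t_{i+1} \ge 1/t_i + 1$, so $t_i \le 1/i$, and in particular $t_n \le 1/n \le \epsilon$.

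The part that needs the most care — and the reason the naive approach fails — is the \emph{uniformity} of convergence. One might worry the iteration stalls when $x+y$ is slightly larger than $1$, since then the second coordinate $y_i$ tends to $x+y-1 \ne 0$ rather than to $0$, so the crude bound $|x_i - \mathrm{answer}| \le y_i$ is useless. The substitution to $t_i = \min(1-x_i, y_i)$ is exactly what repairs this: it automatically selects the correct error quantity ($y_i$ when $x+y \le 1$, and $1-x_i = y_i - (x+y-1)$ when $x+y>1$) and obeys the same logistic-type recursion in every regime. The remaining ingredients — the per-step gate count, the mechanics of splicing the gadget in place of one $G_+$ gate, and the $1/t_i \ge i$ telescoping — are routine.
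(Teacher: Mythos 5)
Your proposal uses the same underlying iteration as the paper — the map $(x,y) \mapsto (\mathrm{OR}(x,y), \mathrm{AND}(x,y))$, which preserves the sum while draining $y$ into $x$ — but the quantitative analysis is genuinely different and in fact sharper. The paper argues by a case split on whether $t_1^{(C)} > 1-\epsilon$: in the ``small'' case it bounds $t_2^{(C)} \le \prod_\tau t_1^{(\tau)} \le (1-\epsilon)^{C-1}$, a geometric-decay estimate that forces $C = \Theta(\log(1/\epsilon)/\epsilon)$. You instead introduce the potential $t_i := \min(1-x_i,\,y_i)$, observe that it \emph{is} the error $|x_i - \min(1,x+y)|$ (a clean identity that sidesteps the ``$x+y$ slightly above $1$'' issue entirely), derive the logistic recursion $t_{i+1} \le t_i(1-t_i)$ by a two-case verification, and telescope $1/t_{i+1} \ge 1/t_i + 1$ to get $t_n \le 1/n$. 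This gives $O(1/\epsilon)$ iterations, strictly better than the paper's $O(\log(1/\epsilon)/\epsilon)$, so you meet the stated bound with room to spare. Your argument checks out: the identity $\min(1,x_i+y_i)-x_i = \min(1-x_i,y_i)$ holds because $x_i,y_i \in [0,1]$ throughout; the recursion $t_{i+1}\le t_i(1-t_i)$ follows in the $x_i+y_i\le 1$ case from $x_iy_i \le y_i(1-y_i)$ and in the $x_i+y_i\ge 1$ case from $(1-x_i)(1-y_i) \le (1-x_i)x_i$; and the only wrinkle — the telescoping requires $t_i>0$ — is moot since $t_i=0$ already means exact correctness. One small stylistic note: you'd want to handle the degenerate start $t_0=1$ (i.e.\ $(x_0,y_0)=(0,1)$) explicitly, but one step kills it, so this is cosmetic. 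The tighter $1/\epsilon$ bound would also propagate to improve the paper's \Cref{lemma:min-approx} if one redid the approximate version with the same potential, which is a nice bonus of your approach.
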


\begin{proof}
    Let $t^{(1)}_1, t^{(1)}_2 \in [0, 1]$. We define the mapping 
    \[
    f: (t_1, t_2) \mapsto (1 - (1-t_1)(1-t_2), t_1 t_2).
    \]
    We consider the sequence $t_1^{(\tau+1)}, t_2^{(\tau+1)} \defeq f(t_1^{(\tau)}, t_2^{(\tau)})$ for $\tau = 1, 2, \dots$. We will show that 
    \begin{equation}
        \label{eq:C-goal}
        \min(1, t_1^{(1)} + t_2^{(1)}) = t_1^{(C)} \pm \epsilon,
    \end{equation}
    for $C \geq \frac{\log(1/\epsilon)}{\log ( \frac{1}{1 - \epsilon} )} + 1 = \Theta(\log(1/\epsilon)/\epsilon)$. We first observe that $t_1^{(\tau+1)} \geq t_1^{(\tau)}$ and $t_1^{(\tau)} + t_2^{(\tau)} = t_1^{(1)} + t_2^{(1)}$. We consider two cases. 

    \begin{itemize}
        \item If $t_1^{(C)} > 1 - \epsilon$, then $t_1^{(1)} + t_2^{(1)} = t_1^{(C)} + t_2^{(C)} \geq 1 - \epsilon$, in turn implying that $\min(1, t_1^{(1)} + t_2^{(1)}) \geq 1 - \epsilon$. This clearly implies~\eqref{eq:C-goal} as $t_1^{(C)} \in [0, 1]$.
        \item In the contrary case, if $t_1^{(C)} \leq 1- \epsilon$, then $t_2^{(C)} \leq \prod_{\tau=1}^{C-1} t_1^{(\tau)} \leq (1 - \epsilon)^{C-1} \leq \epsilon$, where we used the fact that $t_1^{(\tau)} \leq t_1^{(C)} \leq 1 - \epsilon$. So, $\min(1, t_1^{(1)} + t_2^{(1)}) = \min(1, t_1^{(C)} + t_2^{(C)}) = t_1^{(C)} + t_2^{(C)} \leq t_1^{(C)} + \epsilon$.
    \end{itemize}
\end{proof}

\begin{lemma}
    \label{lemma:min-approx}
    Any addition gate $G_+$ can be approximated with at most $\epsilon > 0$ error using $O(\log(1/\epsilon)/\epsilon)$ gates with type either $G_{1-}$ or $G_\times$, each with error $\epsilon' = O(\epsilon/C) = O(\epsilon^2)$.
\end{lemma}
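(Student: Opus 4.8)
The plan is to make robust the previous lemma by reusing the same gadget --- iterate the map $f:(t_1,t_2)\mapsto(1-(1-t_1)(1-t_2),\,t_1t_2)$ for $C=\Theta(\log(1/\eps)/\eps)$ steps --- and to control how errors propagate once each gate is only $\eps'$-satisfied. Each application of $f$ is wired from a constant number of gates: $a=1-t_1$ and $b=1-t_2$ via $G_{1-}$, then $1-ab$ via a $G_\times$ and a $G_{1-}$ gate for the first output, and $t_1t_2$ via one $G_\times$ gate for the second; $C$ copies use $O(C)=O(\log(1/\eps)/\eps)$ gates, which gives the stated gate count. Now fix any $\eps'$-satisfying assignment of the gadget and let $(\hat t_1^{(\tau)},\hat t_2^{(\tau)})$ for $\tau=1,\dots,C$ be the values carried by the $\tau$-th block, with $(\hat t_1^{(1)},\hat t_2^{(1)})$ the inputs. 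Since within a block each gate's output is within $\eps'$ of the prescribed operation applied to its inputs, and products are $1$-Lipschitz on $[0,1]^2$, one gets the coordinatewise recursion $\hat t^{(\tau+1)}=f(\hat t^{(\tau)})\pm O(\eps')$; the target is $\hat t_1^{(C)}=\min(1,\hat t_1^{(1)}+\hat t_2^{(1)})\pm O(\eps)$, so that the gadget's last wire can play the role of the would-be output of $G_+$ within error $\eps$.

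\textbf{Structural facts.} I would first record two estimates that survive the perturbation, both inherited from the exact argument. (i) \emph{The sum is almost conserved}: because $f_1+f_2$ is literally the identity on $t_1+t_2$, writing $\hat\sigma^{(\tau)}:=\hat t_1^{(\tau)}+\hat t_2^{(\tau)}$ gives $|\hat\sigma^{(\tau+1)}-\hat\sigma^{(\tau)}|\le O(\eps')$, hence $\hat\sigma^{(\tau)}=\hat\sigma^{(1)}\pm O(C\eps')$ for all $\tau$. (ii) \emph{The first coordinate is almost non-decreasing}: from $1-(1-a)(1-b)\ge a$ on $[0,1]^2$ we get $\hat t_1^{(\tau+1)}\ge\hat t_1^{(\tau)}-O(\eps')$, hence $\hat t_1^{(\tau)}\le\hat t_1^{(C)}+O(C\eps')$ for all $\tau\le C$. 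Choosing the constant hidden in $\eps'=O(\eps/C)$ small enough makes the slack in both bounds $O(\eps)$; note that gate values lie in $[0,1]$ throughout, which will be used freely.

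\textbf{The two cases.} Mirroring the exact proof, split on the size of $\hat t_1^{(C)}$. If $\hat t_1^{(C)}>1-2\eps$: then $\hat\sigma^{(1)}=\hat\sigma^{(C)}\pm O(\eps)\ge\hat t_1^{(C)}\pm O(\eps)\ge 1-O(\eps)$ (using $\hat t_2^{(C)}\ge 0$), so $\min(1,\hat\sigma^{(1)})=1\pm O(\eps)=\hat t_1^{(C)}\pm O(\eps)$. If $\hat t_1^{(C)}\le 1-2\eps$: by (ii) every $\hat t_1^{(\tau)}\le 1-\eps$, so the approximate recursion $\hat t_2^{(\tau+1)}\le\hat t_1^{(\tau)}\hat t_2^{(\tau)}+O(\eps')\le(1-\eps)\hat t_2^{(\tau)}+O(\eps')$ unrolls to $\hat t_2^{(C)}\le(1-\eps)^{C-1}+O(\eps'/\eps)$, which is $O(\eps)$ once $C-1\ge\log(1/\eps)/\log\frac{1}{1-\eps}$ (so $(1-\eps)^{C-1}\le\eps$) and $\eps'=O(\eps/C)=O(\eps^2/\log(1/\eps))$ (so $\eps'/\eps=O(\eps/\log(1/\eps))$); then $\hat\sigma^{(1)}=\hat\sigma^{(C)}\pm O(\eps)=\hat t_1^{(C)}\pm O(\eps)\le 1+O(\eps)$, whence $\min(1,\hat\sigma^{(1)})=\hat\sigma^{(1)}\pm O(\eps)=\hat t_1^{(C)}\pm O(\eps)$. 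In both cases $\hat t_1^{(C)}=\min(1,\hat t_1^{(1)}+\hat t_2^{(1)})\pm O(\eps)$, and absorbing the constant by a rescaling of $\eps$ completes the proof.

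\textbf{Main obstacle.} The one delicate point is the error accounting: as a map on $[0,1]^2$, $f$ is only $2$-Lipschitz in $\ell_\infty$, so naively chaining the $C$ blocks would amplify the per-block error by $2^C$ and destroy the bound. The crux is therefore to never push errors through $f$ directly, but to route all bookkeeping through the two quantities that $f$ preserves exactly --- the sum $\hat\sigma$ and (monotonically) the first coordinate --- each of which accumulates only an additive $O(C\eps')$ over the chain; that is exactly why $\eps'=O(\eps/C)$ suffices. Getting the constant in the per-block ``$O(\eps')$'' right (it folds in the $O(1)$ gates per block together with the $1$-Lipschitz bound for products on $[0,1]^2$) is the only place the argument calls for care rather than routine estimation.
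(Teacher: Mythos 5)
Your proof is correct and follows essentially the same route as the paper: you iterate the same gadget $f:(t_1,t_2)\mapsto(1-(1-t_1)(1-t_2),\,t_1t_2)$ for $C=\Theta(\log(1/\eps)/\eps)$ blocks of $O(1)$ gates each, you use exactly the same two structural facts --- near-conservation of the sum $t_1+t_2$ and near-monotonicity of $t_1$ --- and you split on whether $\hat t_1^{(C)}$ is close to $1$, choosing $\eps'=O(\eps/C)$ so that the additive error accumulation over the chain stays $O(\eps)$. The only differences are cosmetic: the paper carries explicit constants ($5\eps'$ per block, threshold $1-\eps/2$, $\eps'=\eps/(12C)$) and uses the crude bound $t_2^{(C)}\le C\eps'+(1-\eps/12)^{C-1}$ where you use the tighter geometric-sum bound $(1-\eps)^{C-1}+O(\eps'/\eps)$, but both land in the same place. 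Your concluding remark that the $2$-Lipschitz constant of $f$ would naively blow errors up by $2^C$, and that the argument therefore must route all bookkeeping through the two quantities $f$ preserves exactly, is a clean articulation of why the paper's accounting works; the paper does this implicitly but never names the obstacle.
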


\begin{proof}
    Let $t^{(1)}_1, t^{(1)}_2 \in [0, 1]$. We consider the sequence
    \[
    [0, 1]^2 \ni (t_1^{(\tau+1)}, t_2^{(\tau+1)}) \defeq (1 - (1-t^{(\tau)}_1)(1-t^{(\tau)}_2) \pm 5 \epsilon', t^{(\tau)}_1 t^{(\tau)}_2 \pm \epsilon') \quad \tau = 1, 2, \dots.
    \]
    Here, $(t_1^{(\tau+1)}, t_2^{(\tau+1)})$ can be indeed obtained from $(t_1^{(\tau)}, t_2^{(\tau)})$ using $5$ gates with type either $G_{1-}$ or $G_\times$, each with error at most $\epsilon'$. We will show that
    \begin{equation}
        \label{eq:Cnewgoal}
        \min(1, t_1^{(1)} + t_2^{(1)}) = t_1^{(C)} \pm \epsilon
    \end{equation}
    for $C \geq \frac{\log(6/\epsilon)}{\log ( \frac{1}{1 - \epsilon/12})} + 1 = \Theta(\log(1/\epsilon)/\epsilon)$. Below, we will take $\epsilon' \defeq \epsilon/(12C)$. We first observe that $t_1^{(\tau+1)} + t_2^{(\tau+1)} = t_1^{(\tau)} + t_2^{(\tau)} \pm 6 \epsilon'$, thereby implying that $t_1^{(C)} + t_2^{(C)} = t_1^{(1)} + t_2^{(1)} \pm 6 C \epsilon'$. Further, $t_1^{(\tau)} \leq t_1^{(\tau+1)} + 5 \epsilon'$. Hence, $t_1^{(\tau)} \leq t_1^{(C)} + 5 C \epsilon'$. We consider two cases.
    \begin{itemize}
        \item If $t_1^{(C)} > 1 - \epsilon/2$, then $t_1^{(1)} + t_2^{(1)} \geq t_1^{(C)} + t_2^{(C)} - 6C \epsilon' \geq 1 - \epsilon/2 - 6 C \epsilon' = 1 - \epsilon$ since $\epsilon' = \epsilon/(12 C)$. This implies that $\min(1, t_1^{(1)} + t_2^{(1)}) \geq 1 - \epsilon$, and~\eqref{eq:Cnewgoal} follows.
        \item In the contrary case, if $t_1^{(C)} \leq 1- \epsilon/2$, then $t_2^{(C)} \leq (1 - \epsilon/2 + 5 C \epsilon') t_2^{(C-1)} + \epsilon' = ( 1 - \epsilon/12) t_2^{(C-1)} + \epsilon'$. Thus, $t_2^{(C)} \leq C \epsilon' + (1 - \epsilon/12)^{C-1} \leq \epsilon / 4$, where we used the fact that $C \geq \frac{\log(6/\epsilon)}{\log ( \frac{1}{1 - \epsilon/12})} + 1$. In turn, we have $t_1^{(1)} + t_2^{(1)} \leq t_1^{(C)} + t_2^{(C)} + 6 C \epsilon' \leq 1 + \epsilon/4$. We conclude by observing that $t_1^{(C)} = t_1^{(1)} + t_2^{(1)} - t_2^{(C)} \pm 6 C \epsilon' = t_1^{(1)} + t_2^{(1)} \pm 3\epsilon / 4$.
    \end{itemize}
\end{proof}

As a result, for any absolute constant $\epsilon > 0$, we can reduce in polynomial time an $\epsilon$-$\gcircuit$ instance consisting of $\numM$ gates with a set of types $\cG = \{G_{1-}, G_+ \}$ to an $\epsilon'$-$\gcircuit$ instance consisting of $\Theta(\numM)$ gates with a set of types $\cG' = \{G_{1-}, G_\times\}$, so long as $\epsilon' = O(\epsilon^2)$ is sufficiently small. Together with~\Cref{theorem:GC}, we arrive at the following conclusion.

\begin{proposition}
    \label{prop:multilinear-circuit}
    Even when $\cG \defeq \{G_{\times}, G_{1-} \}$, there is an absolute constant $\epsilon > 0$ such that computing an $\epsilon$-fixed point of $F$ with respect to $\|\cdot\|_\infty$ is \PPAD-complete.
\end{proposition}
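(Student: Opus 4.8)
The plan is to reduce from the $\gcircuit$ problem with gate set $\cG = \{G_+, G_{1-}\}$, which by \Cref{theorem:GC} is \PPAD-complete for some absolute constant $\epsilon_0 > 0$. Given such an instance with $\numM$ gates and target accuracy $\epsilon_0$, I would replace every addition gate $G_+$ by the gadget constructed in \Cref{lemma:min-approx}: a feed-forward chain of $C = \Theta(\log(1/\epsilon_0)/\epsilon_0)$ stages, each stage built from $5$ gates of type $G_{1-}$ or $G_\times$ implementing the map $(t_1,t_2)\mapsto(1-(1-t_1)(1-t_2),\,t_1 t_2)$. After this substitution the circuit uses only gates from $\cG' = \{G_\times, G_{1-}\}$ and has $\Theta(\numM)$ gates in total; the gadget's output gate is wired into the circuit exactly where the original $G_+$ output was consumed, and its two inputs are the two inputs of the original $G_+$ gate. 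Minor bookkeeping---e.g.\ inserting copy gates (two successive $G_{1-}$ gates) to respect the pairwise-distinctness requirement on the indices $i,j,k$---is routine and I would omit it. The target accuracy of the new instance is set to $\epsilon' \defeq \epsilon_0/(12C) = \Theta(\epsilon_0^2)$, again an absolute constant.

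For correctness, suppose $v'$ is an $\epsilon'$-satisfying assignment of the transformed circuit; I claim its restriction to the gates that survived the transformation is an $\epsilon_0$-satisfying assignment of the original circuit. Every gate that was already of type $G_{1-}$ or $G_\times$ is unchanged, hence still $\epsilon' \le \epsilon_0$-satisfied. For a gate $g_i$ that was originally a $G_+$ gate with inputs $g_j, g_k$, all $C$ stages of its gadget are $\epsilon'$-satisfied under $v'$, so \Cref{lemma:min-approx} applies verbatim and yields $v'(g_i) = \min(1, v'(g_j) + v'(g_k)) \pm \epsilon_0$; that is, $g_i$ is $\epsilon_0$-satisfied as an addition gate in the original circuit. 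Thus any solution to the new instance yields a solution to the old one, so the reduction establishes \PPAD-hardness of $\epsilon'$-$\gcircuit$ over $\cG'$. Membership in \PPAD\ is standard, since $F$ is a continuous function computable by an arithmetic circuit and approximate Brouwer fixed points of such maps lie in \PPAD; combining, $\epsilon'$-$\gcircuit$ over $\cG' = \{G_\times, G_{1-}\}$ is \PPAD-complete, which is exactly the claim (with the absolute constant $\epsilon := \epsilon'$).

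The one genuinely delicate point is the error propagation, which is precisely what \Cref{lemma:min-approx} handles: because the gadget is a chain of length $C$, each stage both accumulates a fresh $O(\epsilon')$ additive error and perturbs the invariant $t_1+t_2=\mathrm{const}$, so one must take $\epsilon'$ small relative to $1/C$---hence quadratically small in $\epsilon_0$---to keep the total drift below $\epsilon_0$; the two-case split (whether the first coordinate has essentially saturated at $1$, or the second coordinate has decayed geometrically to near $0$) is what drives that estimate. Since $\epsilon_0$ is a fixed constant, $C$ and $\epsilon'$ are constants as well, so the reduction incurs only a constant blow-up in size and loses only a constant factor in precision. Everything else---the explicit wiring of the five gates per stage, the distinctness bookkeeping, and \PPAD-membership---is routine.
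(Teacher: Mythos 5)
Your proposal is correct and follows essentially the same route as the paper's: reduce from \Cref{theorem:GC} by replacing each $G_+$ gate with the chain-of-stages gadget established in \Cref{lemma:min-approx}, absorbing the per-stage error into the target precision $\epsilon' = \Theta(\epsilon_0^2)$ and incurring only a constant blow-up in the number of gates. The paper states this reduction more tersely in the sentence immediately following \Cref{lemma:min-approx}; your extra remarks about copy-gate bookkeeping, wiring, and \PPAD\ membership are accurate but do not change the argument.
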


Having established this reduction, a function $F$ arising from such circuits is clearly a multilinear that can be expressed in the form of~\Cref{lemma:phi-beta}. Indeed, we simply observe that, for $S \subseteq \range{N}$ and $\bar{S} \subseteq \range{N}$ with $S \cap \bar{S} = \emptyset$, we have 
\begin{align}
    \prod_{ j \in S} \vx[j] \prod_{j \in \bar{S} } (1 - \vx[j]) &= \prod_{ j \in \range{N} \setminus S \cup \bar{S}} (\vx[j] + 1 - \vx[j])  \prod_{ j \in S} \vx[j] \prod_{j \in \bar{S} } (1 - \vx[j]) \\
    &= \sum_{S', S''} \prod_{ j \in S \cup S' } \vx[j] \prod_{j \in \bar{S} \cup S''} (1 - \vx[j]),
\end{align}
where the summation is over all partitions $S', S''$ of $\range{N} \setminus (S \cup \bar{S})$. So, combining with~\Cref{prop:Hazan-fixedpoint}, we arrive at~\Cref{theorem:behavioral}, which is restated below for the convenience of the reader.

\behavioral*

In the above reduction, we used the inequality $\| \cdot \|_\infty \leq \|\cdot\|_2$ so as to translate the guarantee of~\Cref{prop:Hazan-fixedpoint} in the language of~\Cref{theorem:GC}. That inequality can be loose, and so instead let us explain how one can obtain sharper hardness results by relying on a stronger complexity assumption which pertains to the so-called $(\epsilon, \delta)$-$\gcircuit$ problem. This relaxes the $\epsilon$-$\gcircuit$ problem by allowing at most a $\delta$-fraction of the gates to have an error larger than $\epsilon$. In this context, \citet{Babichenko16:Can} put forward the following conjecture.

\begin{conjecture}[\citealp{Babichenko16:Can}]
    \label{con:eps-delta}
    There exist absolute constants $\epsilon, \delta > 0$ such that solving the $(\epsilon, \delta)$-$\gcircuit$ problem with $\numM$ gates requires $2^{\tilOmega(\numM)}$ time.
\end{conjecture}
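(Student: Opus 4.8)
The statement is a hardness \emph{conjecture} --- a robust, exponential-time strengthening of \PPAD-hardness for \gcircuit, in the spirit of an ``ETH for \PPAD'' --- cited from prior work rather than established here; so there is no author's proof to anticipate, only a sketch of what a proof would have to look like. The natural route is a reduction from a provably $2^{\Omega(m)}$-hard instance family of \textsf{End-of-Line} (such families are unconditional in the black-box/query model, and white-box under standard cryptographic assumptions that hide the line). One would follow the reduction chain already underlying \Cref{theorem:GC} --- \textsf{End-of-Line} $\to$ a continuous \textsf{Brouwer} map on $[0,1]^m$ $\to$ a generalized circuit, with the approximation parameter pushed to an absolute constant $\epsilon$ by a Hamming-distance gadget \citep{Chen09:Settling,Rubinstein16:Settling,FilosRatsikas23:Complexity} --- but insist that \emph{every} step incur only a constant-factor blow-up in the number of gates, so that $\numM = \Theta(m)$ and hence $2^{o(\numM)}$ for $(\epsilon,\delta)$-\gcircuit would contradict the $2^{\Omega(m)}$ lower bound for \textsf{End-of-Line}.

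Concretely this requires two things. First, a \emph{linear-size arithmetization}: the gadgets simulating $G_+, G_-, G_1, G_{1-}$ (and, after \Cref{prop:multilinear-circuit}, $G_\times, G_{1-}$) are each of constant size, so the only danger is encoding the successor/predecessor circuit of the \textsf{End-of-Line} instance with $O(m)$ gates, which is possible provided the hard family is presented with a size-$O(m)$ Boolean circuit for its successor function. Second, and crucially, one must upgrade ``all gates $\epsilon$-satisfied'' to ``all but a $\delta$-fraction $\epsilon$-satisfied'', i.e.\ a PCP-for-\PPAD step: route every logical value of the original circuit through $\Theta(1)$-fold redundant majority gadgets and place an error-correcting layer along the line, so that corrupting a $\delta$-fraction of gates corrupts at most an $O(\delta)$-fraction of logical bits, which the decoder repairs, leaving the reduction's correctness intact.

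The second ingredient is the main obstacle, and is precisely why the statement is a conjecture: the known robustification (Rubinstein's ``PCP for \PPAD'') achieves only a \emph{quasipolynomial} size/soundness trade-off, yielding $\numM^{\tilOmega(\log \numM)}$ hardness rather than $2^{\tilOmega(\numM)}$, and it is open whether a truly linear-size, constant-soundness version exists --- a positive answer would essentially settle whether $\epsilon$-Nash equilibria admit $2^{o(n)}$ algorithms under ETH. Hence the conjecture is used here as a hypothesis exactly as posed by \citet{Babichenko16:Can}; downstream, combining it with \Cref{prop:multilinear-circuit} (whose reduction blows up the gate count only by a constant factor for constant $\epsilon$ and is compatible with the $(\epsilon,\delta)$ relaxation once the per-gate error is set to $\epsilon' = O(\epsilon^2)$) and the fixed-point extraction of \Cref{prop:Hazan-fixedpoint} gives the conditional $2^{\tilOmega(N)}$ lower bound for minimizing $\Phi^\beta$-regret over $[0,1]^N$ recorded in \Cref{theorem:ETH-behavioral}, strengthening the unconditional \Cref{theorem:behavioral}.
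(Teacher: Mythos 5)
You correctly recognize that this is a cited hypothesis, not a theorem the paper proves: the paper states \Cref{con:eps-delta} as a conjecture attributed to \citet{Babichenko16:Can} and then uses it as a hardness assumption to derive \Cref{theorem:ETH-behavioral}; there is no proof in the paper for you to match. Your discussion of the shape a proof would take (linear-size arithmetization from \textsf{End-of-Line}, plus a robustification step to tolerate a $\delta$-fraction of violated gates), of why the robustification step is the genuine obstacle, and of exactly how the paper consumes the conjecture (feeding it through \Cref{prop:multilinear-circuit} and \Cref{prop:Hazan-fixedpoint}) is accurate and aligns with how the paper treats the statement. One small caveat: your characterization of Rubinstein's ``PCP for \PPAD'' as \emph{achieving a quasipolynomial trade-off in place of the conjectured exponential one} is a bit loose --- Rubinstein's result is better described as \emph{assuming} an ETH-for-\PPAD-type hypothesis of exactly this kind and deriving quasipolynomial lower bounds for approximate Nash from it, rather than as a partial proof of the conjecture itself --- but this does not affect the correctness of your main point, namely that the conjecture remains open and is imported here as stated.
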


In light of the simplifications observed by~\citet{FilosRatsikas23:Complexity} (\Cref{theorem:GC}) in conjunction with \Cref{lemma:min-approx}, it is not hard to show that~\Cref{con:eps-delta} can be equivalently phrased by restricting the gates to involve solely multilinear operations---analogously to~\Cref{prop:multilinear-circuit}. Namely, if the number of gates increases by a factor of $C = C(\epsilon)$, it suffices to take $\epsilon' = \Theta(\epsilon/C)$ and $\delta' = \Theta(\delta/C)$. Now, the point is that \Cref{con:eps-delta} is more aligned with a guarantee in terms of $\|\cdot\|_2$. Indeed, if at least a $\delta$-fraction of the gates incur at least an $\epsilon$ error, it follows that $\| F(\vx) - \vx \|_2 \geq \epsilon \sqrt{\delta} \sqrt{N}$ (we can assume here that $\delta N$ is an integer). We can thus strengthen~\Cref{theorem:behavioral} as follows.

\begin{theorem}
    \label{theorem:ETH-behavioral}
    Suppose that~\Cref{con:eps-delta} holds. If $\cR$ outputs strategies in $[0, 1]^N$, guaranteeing $\reg_{\Phi^\beta} \leq \epsilon$ requires time $2^{\tilOmega(N)}$, even with respect to low-degree deviations and an absolute constant $\epsilon > 0$.
\end{theorem}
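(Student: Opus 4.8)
The plan is to mirror the proof of \Cref{theorem:behavioral}, but to feed \Cref{prop:Hazan-fixedpoint} the harder instances guaranteed by \Cref{con:eps-delta} instead of those of \Cref{theorem:GC}, while carefully tracking the $\|\cdot\|_2$ norm so that the ``$\delta$-fraction of violated gates'' quantifies into an honest Euclidean fixed-point error. First I would restate \Cref{con:eps-delta} in the language of multilinear gates. Applying \Cref{lemma:min-approx}, every addition gate $G_+$ is replaced by $\Theta(C)$ gates of type $G_{1-}$ or $G_\times$, where $C = C(\epsilon) = \Theta(\log(1/\epsilon)/\epsilon)$; this turns an $(\epsilon,\delta)$-$\gcircuit$ instance with $\numM$ gates over $\cG = \{G_{1-}, G_+\}$ into an $(\epsilon',\delta')$-$\gcircuit$ instance with $\numM' = \Theta(C\numM)$ gates over $\cG' = \{G_{1-}, G_\times\}$, where $\epsilon' = \Theta(\epsilon/C)$ and $\delta' = \Theta(\delta/C)$. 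The accounting is as in the discussion following \Cref{prop:multilinear-circuit}: a single violated $G_+$ gate corrupts only $O(C)$ gates of the gadget that simulates it, so the fraction of violated new gates is still $\Theta(\delta/C)$, and the internal error analysis of \Cref{lemma:min-approx} shows the gadgets of correct $G_+$ gates are all $\epsilon'$-satisfied. Since $\epsilon', \delta'$ remain absolute constants, this reformulation inherits the $2^{\tilOmega(\numM)} = 2^{\tilOmega(\numM')}$ lower bound of \Cref{con:eps-delta}.

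Next I would observe, exactly as in the proof of \Cref{theorem:behavioral}, that the evaluation map $F : [0,1]^{\numM'} \to [0,1]^{\numM'}$ of a restricted generalized circuit over $\{G_{1-}, G_\times\}$ is multilinear of the form in \Cref{lemma:phi-beta} (using the partition-of-unity expansion recorded after \Cref{prop:multilinear-circuit}), so $F = \phi^\beta$ for a deviation $\phi \in \Phi^\beta$ on the cube $\{0,1\}^{\numM'}$, and moreover $\phi$ is low-degree since each gate in $\cG'$ is a degree-$\le 2$ operation. The quantitative bridge is that if at least a $\delta'$-fraction of the $\numM'$ gates is violated by more than $\epsilon'$ in $\|\cdot\|_\infty$, then $\|F(\vx) - \vx\|_2 \ge \epsilon' \sqrt{\delta' \numM'}$ (assuming WLOG that $\delta'\numM'$ is an integer); contrapositively, any $\vx$ with $\|\phi^\beta(\vx) - \vx\|_2 < \epsilon' \sqrt{\delta' \numM'}$ is a solution of the $(\epsilon',\delta')$-$\gcircuit$ instance.

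Finally I would assemble the contradiction. Suppose a regret minimizer $\cR$ over $[0,1]^N$ with $N = \numM'$ guarantees $\reg_{\Phi^\beta} \le \epsilon^\star$ for all utility sequences, where $\epsilon^\star := \epsilon'\sqrt{\delta'}$ is an absolute constant, and runs in total time $t(N)$. Running the strongly adaptive adversary of \Cref{prop:Hazan-fixedpoint} against $\cR$ with the fixed function $\phi^\beta = F$ (whose values and the errors $\|\phi^\beta(\vx^{(t)}) - \vx^{(t)}\|_2$ are computable gate-by-gate in $\poly(N)$ time) produces, in time $\poly(N) \cdot t(N)$, a point $\vx$ with $\|\phi^\beta(\vx) - \vx\|_2 \le \epsilon^\star \sqrt N = \epsilon'\sqrt{\delta' N}$, which by the previous paragraph solves the $(\epsilon',\delta')$-$\gcircuit$ instance. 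Since $\numM' = \Theta(\numM)$, a $\cR$ running in time $t(N) = 2^{o(N)}$ would yield a $2^{o(\numM)}$ algorithm for $(\epsilon',\delta')$-$\gcircuit$, contradicting the reformulated \Cref{con:eps-delta}; hence $\cR$ requires time $2^{\tilOmega(N)}$, which is the statement of \Cref{theorem:ETH-behavioral}, and all deviations in play are low-degree as above.

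The step I expect to be the most delicate is the parameter bookkeeping in the first paragraph: verifying that substituting constant-size multilinear gadgets for the $G_+$ gates inflates the gate count only by a constant factor while keeping \emph{both} the per-gate error threshold $\epsilon'$ and the violated-fraction threshold $\delta'$ bounded below by absolute constants --- in particular that, on the ``good'' part of a solution, every gate inside a gadget of a correctly-evaluated $G_+$ gate is $\epsilon'$-satisfied, so that a $\delta'$-fraction of violated new gates really does force a $\delta$-fraction of violated original gates. Everything else is a routine transcription of the \Cref{theorem:behavioral} argument with Euclidean norms tracked in place of $\|\cdot\|_\infty$.
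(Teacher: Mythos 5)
Your proposal is correct and follows essentially the same path as the paper's (very terse) proof: reformulate Conjecture~\ref{con:eps-delta} over $\{G_{1-}, G_\times\}$ via Lemma~\ref{lemma:min-approx} with $\epsilon' = \Theta(\epsilon/C)$ and $\delta' = \Theta(\delta/C)$, observe that a $\delta'$-fraction of $\epsilon'$-violated gates forces $\|F(\vx)-\vx\|_2 \ge \epsilon'\sqrt{\delta' N}$, and then invoke Proposition~\ref{prop:Hazan-fixedpoint}. One small wrinkle in your first paragraph: the implication you need from Lemma~\ref{lemma:min-approx} runs in the other direction from how you phrase it --- it is \emph{not} that a correctly-evaluated $G_+$ gate has all its gadget gates $\epsilon'$-satisfied, but rather that if all gates in a gadget are $\epsilon'$-satisfied then the corresponding $G_+$ constraint is $\epsilon$-satisfied; contrapositively, each $\epsilon$-violated original gate contributes at least one $\epsilon'$-violated new gate, so the count of bad original gates is bounded by $\delta'\numM' = \Theta(\delta' C \numM)$, whence $\delta' = \Theta(\delta/C)$ suffices. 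That is the bound you actually use, so the slip is purely expository.
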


\subsection{Proofs from Section~\ref{sec:circ-fp}}
\label{appendix-circ-fp}

We first spell out the construction that establishes~\Cref{th:phi-regret}, which is a refinement of the algorithm of~\citet{Gordon08:No} described earlier in~\Cref{section:furtherprel}. In particular, the one but crucial difference lies in using an $\epsilon$-expected fixed point (\Cref{line:exp-fp}). In the context of \Cref{alg:expfp}, we assume that the interface of a regret minimizer $\MR$ consists of two components: $\MR.\nextstr()$, which returns the next strategy of $\cR$; and $\MR.\obsut(\cdot)$, which provides to $\MR$ as feedback a utility function, whereupon $\MR$ may update its internal state accordingly.

\begin{algorithm}[!ht]
    \KwIn{An external regret minimizer $\MR_\Phi$ over $\Phi$}
    \KwOut{A $\Phi$-regret minimizer $\MR$ over $\cX$}
    \caption{$\Phi$-regret minimizer using fixed points in expectation}
    \label{alg:expfp}
    \Fn{\nextstr()}{
        $\phi^{(t)} \gets \MR_\Phi.\nextstr()$ \\
        $\pi^{(t)} \gets \eps$-expected fixed point of $\phi^{(t)}$ (\Cref{def:expected-fixedpoint}) \label{line:exp-fp} \\
        \textbf{return} $\pi^{(t)}$
    }
    \Fn{\obsut($u^{(t)}$)}{
    Set $u^{(t)}_\Phi: \Phi \ni \phi \mapsto \ev{\vu^{(t)}, \E_{\vx^{(t)} \sim \pi^{(t)}} \phi(\vx^{(t)})}$\\
    $\MR_{\Phi}.\obsut(u_\Phi^{(t)})$
    }
\end{algorithm}

We next show that there is a certain equivalence between expected fixed points and $\Phi$-regret minimization over $\Delta(\cX)$, which mirrors the construction of~\citet{Hazan07:Computational}.

\extHazan*

\begin{proof}
    Suppose that $\cR$ outputs a strategy $\pi^{(t)} \in \Delta(\cX)$. At each time $t \in \range{T}$, we can terminate if $\| \E_{\vx^{(t)} \sim \pi^{(t)}} [ \phi(\vx^{(t)}) - \vx^{(t)} ] \|_2 \leq \epsilon D_\cX$; that is, we have identified an $(\epsilon D_\cX)$-expected fixed point. Otherwise, we construct the utility function
\begin{equation}
    u^{(t)}: \cX \ni \vx \mapsto \frac{1}{D_{\cX}} \frac{1}{ \| \E_{\vx^{(t)} \sim \pi^{(t)}} [\phi(\vx^{(t)}) - \vx^{(t)} ] \|_2  } \left\langle \E_{\vx^{(t)} \sim \pi^{(t)}} [\phi(\vx^{(t)}) - \vx^{(t)} ], \vx - \pi^{(t)} \right\rangle,
\end{equation}
which indeed satisfies the normalization constraint $|u^{(t)}(\vx)| \leq 1$. Now, if at all iterations it was the case that $\| \E_{\vx^{(t)} \sim \pi^{(t)}} [\phi(\vx^{(t)}) - \vx^{(t)} ] \|_2 > \epsilon D_{\cX}$, we have
\begin{equation}
    \reg_\Phi^T \geq \frac1T \sum_{t=1}^T  u^{(t)} \left( \E_{\vx^{(t)} \sim \pi^{(t)}} [\phi(\vx^{(t)})] \right)  - \frac{1}{T}\sum_{t=1}^T u^{(t)}(\pi^{(t)}) > \epsilon
\end{equation}
since $u^{(t)}(\pi^{(t)}) = 0$ and 
\[
    u^{(t)} \left( \E_{\vx^{(t)} \sim \pi^{(t)}} [\phi(\vx^{(t)})] \right) = \frac{1}{D_{\cX}} \left\| \E_{\vx^{(t)} \sim \pi^{(t)}} [ \phi(\vx^{(t)}) - \vx^{(t)} ] \right\|_2
\]
for all $t \in \range{T}$. This contradicts the assumption that $\reg_{\Phi}^T \leq \epsilon$ for any sequence of utilities, in turn implying that there exists $t \in \range{T}$ such that $\pi^{(t)}$ is an $\epsilon$-expected fixed point. Given that, by assumption, we can compute $\E_{\vx^{(t)} \sim \pi^{(t)}} [\phi(\vx^{(t)}) - \vx^{(t)} ]$ for any time $t$, the claim follows.
\end{proof}

\subsection{Proof of Corollary~\ref{cor:faster-cor}}
\label{appendix:fastce}

In this subsection, we discuss how our framework can be used to speed up equilibrium computation even in settings where exact fixed points can be computed in polynomial time. In particular, we recall the following result, which was stated earlier in the main body.

\corfast*

Indeed, using the algorithm of~\citet{Blum07:From} instantiated with $\mwu$,\footnote{The exponential weights map can generate irrational outputs, but this can be addressed by truncating to a sufficiently large number of bits, which does not essentially affect the regret.} one can guarantee (average) swap regret bounded as $O( \sqrt{A \log A /T } )$, where $T$ is the number of iterations; hence, taking $T = O(A \log A / \epsilon^2)$ guarantees at most $\epsilon$ swap regret for each player. Further, a function here can be represented as a stochastic matrix on $A$ states; as such, it can be evaluated at any point in time $O(A^2)$ via a matrix-vector product. As a result, each iteration of~\Cref{alg:expfp} can be implemented with a single oracle call to~\eqref{eq:EO}, along with $n$ updates---one for each player---each of which has complexity bounded as $O(A^2/\epsilon)$ (\Cref{theorem:fp-exp}). This implies~\Cref{cor:faster-cor}.

Let us compare the complexity of~\Cref{cor:faster-cor} with prior results. First, when $\epsilon \gg 0$ the best running time follows from the recent algorithms of~\citet{Dagan23:From} and~\citet{Peng23:Fast}, but those quickly became superpolynomial even when $\epsilon \approx 1/\log A$; indeed, the number of iterations of their algorithm scales as $(\log A)^{\tilO(1/\epsilon)}$. On the other end of the spectrum, one can compute an exact correlated equilibrium in polynomial time using the ``ellispoid against hope'' algorithm~\citep{Papadimitriou08:Computing,Jiang11:Polynomial}. The original paper by~\citet{Papadimitriou08:Computing} did not specify the exact complexity of the algorithm, but the subsequent work of~\citet{Jiang11:Polynomial}---which analyzed a slightly different version based on a derandomized separation oracle---came up with a bound of $n^6 A^{12}$ in terms of the number of iterations of the ellipsoid; the overall running time is higher than that. While the analysis of~\citet{Jiang11:Polynomial} can likely be significantly improved---as acknowledged by the authors, we do not expect that algorithm to be competitive unless one is searching for very high-precision solutions.

The original algorithm of~\citet{Blum07:From}---instantiated as usual with $\mwu$---requires computing a stationary distribution of a Markov chain in every iteration. This amounts to solving a linear system, which in theory requires a running time of $O(A^\omega)$, where $\omega \approx 2.37$ is the exponent of matrix multiplication~\citep{Williams23:New}. Without fast matrix multiplication---which is currently widely impractical---the running time would instead be $O(A^3)$. A strict improvement over that running time can be obtained using the \emph{optimistic} counterpart of $\mwu$ ($\omwu$)~\citep{Daskalakis21:Near}, which has been shown to reduce the number of iterations to $\tilO(A/\epsilon)$ without essentially affecting the per-iteration complexity. \Cref{cor:faster-cor} improves over that in the regime where $ \epsilon \geq 1/A^{ \frac{\omega}{2} - 1}$, where $\omega \approx 2.37$ is the exponent of matrix multiplication~\citep{Williams23:New}; without fast matrix multiplication, the lower bound is instead $\epsilon \geq 1/\sqrt{A}$.

Another notable attempt at improving the complexity of~\citet{Blum07:From} was made by~\citet{Greenwald08:More} using power iteration. For a parameter $p \geq 1$, their algorithm guarantees at most $O(\sqrt{A p /T})$ average internal regret with a per-iteration complexity of $\Omega(A^3/p)$ (without fast matrix multiplication). Casting this guarantee in terms of swap regret, and observing that the overall running time is invariant on $p$, we see that the resulting complexity is no better than that via $\omwu$, which we discussed earlier; the approach of~\citet{Greenwald08:More} is based on regret matching, which incurs an inferior regret bound compared to $\mwu$, but is nevertheless known to perform well in practice. An improvement to the algorithm of~\citet{Blum07:From} was obtained by~\citet{Yang17:Online}, but requires a further assumption on the minimum probability given to every expert; it is unclear whether such an assumption can be guaranteed in general.

The above discussion concerns algorithms operating with full feedback. In the bandit feedback, \citet{Ito20:Tight} came up with a way to update a single column in each iteration, but it still requires computing a stationary distribution in every iteration. \citet{Ito20:Tight} claims that this can be achieved in time almost quadratic time through the work of~\citet{Cohen17:Almost}; however, the complexity of the algorithm of~\citet{Cohen17:Almost} depends on the condition number of the Markov chain, and it is unclear how to bound that in our setting. On the other hand, algorithms in the bandit setting do not require access to an expectation oracle. This can be beneficial in certain settings, but our discussion here focuses on the regime where the expectation oracle does not dominate the per-iteration complexity. We conjecture that~\Cref{theorem:fp-exp} can be generalized to the bandit setting, in which case our improvement will also manifest in the regime where the cost of the expectation oracle far outweighs the per-iteration complexity of~\Cref{theorem:fp-exp}, but that is not within our scope in this paper.

\end{document}